\newcommand{\R}{\mathbb{R}}
\newcommand{\inr}[1]{\bigl< #1 \bigr>}
\newcommand{\E}{\mathbb{E}}
\newcommand{\eps}{\varepsilon}
\newcommand{\inner}[2]{{\langle#1,#2\rangle}}
\newcommand{\RR}{{\mathbb{R}}}
\newcommand{\hx}{{\hat{x}}}
\newcommand{\bl}{\left(}
\newcommand{\br}{\right)}
\newtheorem{Theorem}{Theorem}[section]
\newtheorem{Lemma}[Theorem]{Lemma}
\newtheorem{Definition}[Theorem]{Definition}
\newtheorem{Proposition}[Theorem]{Proposition}
\newtheorem{Corollary}[Theorem]{Corollary}
\newtheorem{Assumption}{Assumption}[section]
\newtheorem{Question}[Theorem]{Question}
\numberwithin{equation}{section}
\def \proof {\noindent {\bf Proof.}\ \ }
\def \endproof
\title{Phase Retrieval: Stability and Recovery Guarantees}
\begin{document}
\author{Yonina C. Eldar\thanks{
Department of Electrical Engineering,
Technion---Israel Institute of Technology, Haifa 32000, Israel. Email:yonina@ee.technion.ac.il.} \and Shahar Mendelson\thanks{Department of Mathematics, Technion---Israel Institute of Technology, Haifa 32000, Israel. Email:
shahar@tx.technion.ac.il.
\newline
The work of Y. Eldar is supported in part by the Israel Science Foundation under Grant no.
170/10, in part by the Ollendorf Foundation, and in part by a Magnet grant Metro450 from
the Israel Ministry of Industry and Trade. The work of S. Mendelson is supported in part by the Centre for Mathematics and its Applications, The Australian National University, Canberra, ACT 0200, Australia. Additional support was given by an Australian Research Council Discovery grant DP0986563, the European Community's Seventh Framework Programme (FP7/2007-2013) under ERC grant agreement 203134, and by the Israel Science Foundation grant 900/10.}
}

\maketitle

\begin{abstract}
We consider stability and uniqueness in real phase retrieval problems over general input sets.
Specifically, we assume the data consists of noisy quadratic measurements of an unknown input $x \in \R^n$ that lies in a general set $T$ and study conditions under which $x$ can be stably recovered from the measurements. In the noise-free setting we derive a general expression on the number of measurements needed to ensure that a unique solution can be found in a stable way, that depends on the set $T$ through a natural complexity parameter. This parameter can be computed explicitly for many sets $T$ of interest. For example, for $k$-sparse inputs we show that $O(k\log(n/k))$ measurements are needed, and when $x$ can be any vector in $\R^n$, $O(n)$ measurements suffice.
In the noisy case, we show that
if one can find a value for which the empirical risk is bounded by a given, computable constant (that depends on the set $T$), then the error with respect to the true input is bounded above by an another, closely related complexity parameter of the set. By choosing an appropriate number $N$ of measurements, this bound can be made arbitrarily small, and it decays at a rate faster than $N^{-1/2+\delta}$ for any $\delta>0$.
In particular,
for $k$-sparse vectors stable recovery is possible from $O(k\log(n/k)\log k)$ noisy measurements, and when $x$ can be any vector in $\R^n$, $O(n \log n)$ noisy measurements suffice.
We also show that  the complexity parameter for the quadratic problem is
  the same as the one used for analyzing stability in linear measurements under very general conditions. Thus, no substantial price has to be paid in terms of stability if there is no knowledge of the phase.
\end{abstract}

\section{Introduction}
Recently there has been growing interest in recovering an input vector $x \in \RR^n$ from quadratic measurements
\begin{equation}
\label{eq:meas}
y_i=|\inner{a_i}{x}|^2+w_i,\quad i=1,\ldots,N
\end{equation}
where $w_i$ is noise, and $a_i$ are a set of known vectors.  Since only the magnitude of $\inner{a_i}{x}$ is measured, and not the phase (or the sign, in the real case), this problem is referred to as {\em phase retrieval}.
Phase retrieval problems arise in many areas of optics, where the detector can only measure the magnitude of the received optical
wave. Several important applications of phase retrieval include X-ray crystallography, transmission electron microscopy and coherent diffractive imaging \cite{Q10,H01p,H93,W63}.

Many methods have been developed for phase recovery \cite{H01p} which often rely on  prior information about the signal,
such as positivity or support constraints. One of the most popular techniques is based on alternating projections,
where the current signal estimate is transformed back and forth between the object and the Fourier domains.
The prior information and observations are used in each domain in order to form the next estimate.
 Two of the main approaches of this type are Gerchberg-Saxton \cite{GS72} and Fienup \cite{F82}.
  In general, these methods are not guaranteed to converge, and often require careful parameter selection and sufficient prior information.

To circumvent the difficulties associated with alternating projections, more recently, phase retrieval problems have been treated using semidefinite relaxation, and low-rank matrix recovery ideas \cite{CESV12,SESE11}.
In \cite{CESV12} several masks where used in the measurement process in order to ensure the ability to retrieve the phase.
Another approach to generate robust solutions is to assume that the input signal $x$ is sparse, namely, that it contains only a few non-zeros values in an appropriate basis expansion.
Sparsity has long been exploited in signal processing, applied mathematics, statistics and computer science for tasks such as compression, denoising, model selection, image processing and more. Despite the great interest in exploiting sparsity in various applications, most of the work to date has focused on recovering sparse or low rank data from linear measurements \cite{EK12,D06,CRT06a}.
Recently, the basic sparse recovery problem has been generalized to the case in which the measurements are quadratic \cite{SESE11}, or given by a more general nonlinear transform of the unknown input \cite{BE12}.
The first paper to consider sparse phase retrieval was \cite{SESE11}, based on semidefinite relaxation combined with a row-sparsity constraint on the resulting matrix. An iterative thresholding algorithm was then proposed that approximates the solution. Similar approaches were later used in  \cite{JOH12,OYDS12}.  An alternative algorithm was recently designed in \cite{BE12,SBE12} using a greedy search method which is far more efficient than the semidefinite relaxation, and often yields more accurate solutions.

Despite the vast interest in phase retrieval, there has been little theoretical work on the fundamental limits of this problem.
One important question in this context is how many measurements are needed in order to ensure robust recovery of the input $x$, regardless of the specific recovery method used.
 Several recent works treat this problem.   Most of the papers discuss the case in which $x$ is a general input, namely, there is no sparsity (or other) constraint on $x$.
   The first result of this kind was obtained in \cite{BCE06}, where it is shown that with probability one $N=4n-2$ randomized equations are sufficient for recovery using a brute force (intractable) method, when there is no noise. However, it is not clear whether a stable recovery method exists with this number of measurements.
   In \cite{CL12,CSV12} the authors consider the case in which $a_i$ are real or complex vectors that are either uniform on the sphere of radius $\sqrt{n}$, or iid zero-mean Gaussian vectors with unit variance. Under these assumptions they show that on the order of $n$ measurements are needed in order to recover a generic $x$ using a semidefinite relaxation approach.
   In the presence of noise, it is shown in \cite{CL12} that one can find an estimate $\hat{x}$ satisfying
    \begin{equation}
\|\hat{x}-e^{i\phi}x\|_2 \leq C_0 \min \bl \|x\|_2,\frac{\|w\|_1}{N\|x\|_2}\br,
    \end{equation}
for some $\phi$, where $C_0$ is a constant and $w$ is the noise vector that is assumed to be bounded so that $\|w\|_1$ is finite.

The paper \cite{LV12} treats the case in which the input $x$ is $k$-sparse and $a_i$ are iid zero-mean normal vectors.
When there is no noise, they show that  in the real case $N \geq 4k-1$ measurements are needed for uniqueness and in the complex case, $N \geq 8k-2$ measurements are required.
They further prove that if $N$ is on the order of $k^2 \log n$ then the solution can be obtained using a sparse semidefinite relaxation approach as in \cite{SESE11,JOH12}.

Here we treat the real case and random measurements, using reasonable ensembles. Our methods may be extended to the complex case (again, using real, random measurements), but since the core of the problem is the real case, we will restrict our analysis to it.
For this setting, we develop conditions leading to stable uniqueness, namely, conditions that ensure a unique solution can be found in a stable way. We do not restrict ourselves to a certain class of inputs, but rather allow for general input sets $T$ which include, as special cases, $T=\RR^n$ and the class of sparse vectors. It turns out that to ensure stable uniqueness for a given set of signals, a very natural notion of complexity of the set determines the number of data points required. For example, we show that for $k$-sparse vectors,  $O(k\log(n/k))$ measurements are needed for stability. This result is better by a factor of $k$ than the estimate from \cite{LV12} that guarantees recovery (without noise) using a semidefinite approach.
When $x$ can be any vector in $\R^n$, we show that $O(n)$ measurements suffice, which is also the bound derived in \cite{CL12} for recovery using semidefinite relaxation.

It turns out that the natural complexity parameter for this problem is the same as the one used for analyzing stability in linear measurements, as we will discuss in Section~\ref{sec:linear}. Thus, in a rather general sense, the number of measurements required for stable recovery in the quadratic setting we treat here is of the same order of magnitude as the one needed to ensure stability under linear sampling. In that sense there is no substantial price to be paid for not knowing the phase of the measurements, and for very general choices of input sets $T$.

The second main result of this article deals with the noisy phase retrieval problem.
More specifically, we consider recovering an input $x$ in a set $T$ from noisy measurements of the form (\ref{eq:meas}).
A straightforward approach is to seek the value of $x$ that minimizes the empirical risk (or a least-squares approach). Since this leads to a nonconvex problem, finding its global solution is in general not possible. Nonetheless, we show that if one can find a value $\hx$ for which the empirical risk is bounded by a given, computable constant (which depends on the set $T$), then $\|\hx-x_0\|_2  \|\hx+x_0\|_2$ is bounded above by an expression that once again depends on the complexity parameter of the set, and which converges to $0$ faster than $N^{-1/2+\delta}$ for any $\delta>0$. Here $x_0$ is the true (unknown) input.
The complexity parameter that determines the rate in the noisy setting is essentially the same as in the stability analysis. Moreover, the resulting sample complexity is of the same order of magnitude as in the linear case in the examples of sets $T$ we consider.
An exact formulation of both main results is presented in the next section.

An important practical conclusion from this analysis is that although the squared-error in the case of nonlinear measurements as in (\ref{eq:meas}) cannot be minimized directly, it is sufficient to find a point for which the error is bounded by a known constant. Thus, one may use any desired recovery algorithm and check whether the solution $\hx$ satisfies the bound. For this purpose, methods such as those developed in \cite{SBE12} are advantageous since they allow for arbitrary initial points. As different initializations lead to different choices of $\hx$, the algorithm can be used several times until an appropriate value of $\hx$ is found. Our theoretical results ensure that such an $\hx$ is sufficiently close to $x_0$ or to $-x_0$ if enough measurements are used. In particular,
for $k$-sparse vectors one can guarantee stable recovery from $O(k\log(n/k)\log k)$ noisy measurements, and when $x$ can be any vector in $\R^n$, $O(n \log n)$ noisy measurements suffice.

The reminder of the article is organized as follows.
The problem and the main results are formulated in Section~\ref{sec:main}.
Stability results in the noise-free setting are developed in Section~\ref{sec:noisef}, while
the noisy setting is treated in Section~\ref{sec:noise}. In Section~\ref{sec:linear} the relation between the results in the quadratic case and those in the linear setting are discussed.

Throughout the article we use the following notation. The statistical expectation is denoted by $\E$, and if the probability space is a product space $(\Omega \times \Omega^\prime, \mu \otimes \mu^\prime)$, $\E_{\mu}$ and $\E_{\mu^\prime}$ are the conditional expectations. If $X$ is a random variable, then $\|X\|_{L_p} = (\E |X|^p)^{1/p}$.
The relation $a \sim b$ means that $a$ is equal to $b$ up to absolute multiplicative constants, i.e., that there are $c$ and $C$, independent of $a,b$ or any other parameters of the problem, for which $ca \leq b \leq Ca$. The inequality $a \lesssim b$ means that $a \leq C b$ for some constant $C$. We use $a \lesssim_{L,\gamma} b$ to denote the fact that the constant $C$ depends only on $L$ and $\gamma$.

\section{Problem Formulation and Main Results}
\label{sec:main}

Suppose one is given measurements $y_i$ as in (\ref{eq:meas}). Let $s$ be a vector in $\RR^N$, set $\phi(s)$ to be the length-$N$ vector with elements $|s_i|^2$ and put $Ax=(\inner{a_i}{x})_{i=1}^N$. With this notation, (\ref{eq:meas}) can be written as
\begin{equation}
\label{eq:model}
y=\phi(Ax)+w.
\end{equation}
Our goal is to study conditions under which stable recovery is possible irrespective of the specific recovery method used, and to develop guarantees that ensure that empirical minimization or approximate empirical minimization (namely, least-squares recovery) lead to an estimate $\hat{x}$ that is close to $x$ in a squared-error sense.

\subsection{Assumptions on $x$ and $a$}

We assume throughout that $x$ lies in a subset $T$ of $\RR^n$, which can be arbitrary. It is natural to expect that the number of measurements needed for stable recovery or for noisy recovery depend on the set $T$, though the way in which it depends on $T$ is not obvious. Here, we prove that this number is a function of a natural complexity parameter of $T$, in an estimate that is sharp in the stable recovery problem, and sharp up to logarithmic factors in the noisy recovery problem.

The assumption on the measurement vectors $a_i$ is that they are independent, and distributed according to a probability measure $\mu$ on $\R^n$ that is isotropic and $L$-subgaussian \cite{LAMA,VW,BK00}:
\begin{Definition} \label{def-iso-subgaussian}
Let $\mu$ be a probability measure on $\R^n$ and let $a$ be distributed according to $\mu$. The measure $\mu$ is isotropic if for every $t \in \R^n$, $\E|\inr{a,t}|^2 = \|t\|_{2}^2$. It is  $L$-subgaussian if for every $t \in \R^n$ and every $u \geq 1$, $Pr(|\inr{a,t}| \geq Lu\|\inr{t,a}\|_{2}) \leq 2\exp(-u^2/2)$.
\end{Definition}

Among the examples of isotropic, $L$-subgaussian measures on $\R^n$ for a constant $L$ that is independent of the dimension $n$ are the standard Gaussian measure, the uniform measure on $\{-1,1\}^n$ and the volume measure on the ``correct" multiple of the unit ball of $\ell_p^n$ for $2 \leq p \leq \infty$ (that is, the volume measure on  $c_nn^{1/p}B_p^n$, where $c_n \sim 1$). Also, if $X$ is a mean-zero, variance $1$ random variable that satisfies $Pr(|X| \geq Lu) \leq 2\exp(-u^2/2)$, then a vector of iid copies of $X$ is isotropic and $cL$ subgaussian, for a suitable absolute constant $c$.

More generally, we have the following:
\begin{Definition} \label{def:generel-subgaussian}
If $F$ is a class of functions on a probability space $(\Omega,\mu)$, then it is $L$-subgaussian if for every $f,h \in F \cup \{0\}$ and every $t \geq 1$
$$
Pr\left( |f-h|(X) \geq t L\|f-g\|_{2} \right) \leq 2\exp(-t^2/2),
$$
where $X$ is distributed according to $\mu$.
\end{Definition}

It is standard to verify if $\mu$ is an $L$-subgaussian measure on $\R^n$ then every class of linear functionals on $\R^n$ is $L$-subgaussian (see, e.g. \cite{LAMA}).

Our goal is to study when the mapping $\phi(Ax)$ is both invertible and stable first, when  $w=0$ (the noise-free case) and second, in the presence of noise.

\subsection{Stability Results}

We begin in Section~\ref{sec:noisef} by treating the noise-free setting. Since one is given only the absolute values of $Ax$, it is impossible to distinguish $x$ and $-x$. Therefore, uniqueness will always be up to the sign of $x$.  If $\phi(Ax)$ is an invertible stable mapping, it is natural to expect that for any $s$ and $t$ for which $s \neq t$ and $s \neq -t$, $\phi(As)$ is far enough from $\phi(At)$ in some sense; here we consider the $\ell_1$ sense.
\begin{Definition}
\label{def:stable}
The mapping $\phi(Ax)$ is stable with a constant $C$ in a set $T$ if for every $s,t \in T$,
\begin{equation}
\|\phi(A t) -\phi(A s)\|_{1} \geq C \|s-t\|_{2} \|s+t\|_{2}.
\end{equation}
\end{Definition}
Note that stability in a set is a much stronger property than invertibility. Indeed, for the latter it suffices that if $s \not = \pm t$ then $\|\phi(A t) -\phi(A s)\|_{1} >0$, but without any quantitative estimate on the difference.

The $\ell_1$ norm, used on the left-hand side, is the natural way of measuring distances for the quadratic function $\phi$, if one wishes to compare the results with the linear case, in which the $\ell_2$ distance is used (see Section \ref{sec:linear} for more information).
Using the $\ell_1$ distance also has a technical advantage, as it simplifies the analysis considerably. Measure distances based on other $\ell_p$ norms lead to processes that are much harder to control, since higher powers emphasize the ``unbounded" or ``peaky" parts of a random variable, and make concentration around the mean much harder.

To formulate our stability result, let us define the main complexity parameter required. For $T \subset \R^n$, define
\begin{eqnarray}
\label{eq:tpm}
T_-& = & \left\{\frac{t-s}{\|t-s\|_2} \ : \ t,s \in T, \ t \not = s \right\}, \nonumber \\ T_+ & = & \left\{\frac{t+s}{\|t+s\|_{2}} \ : \ t,s \in T, \ t \not = - s \right\}.
\end{eqnarray}
Let $(g_i)_{i=1}^n$ be independent Gaussian random variables, that have mean zero and variance $1$. Set
\begin{equation}
\label{eq:rhoc}
E=\max\left\{\E \sup_{v \in T_-} \sum_{i=1}^n g_i v_i, \ \E \sup_{w \in T_+} \sum_{i=1}^n g_i w_i\right\}
\end{equation}
and put
\begin{equation}
\label{eq:rho}
\rho_{T,N} = \frac{E}{\sqrt{N}} + \frac{E^2}{N}.
\end{equation}
Throughout this section, we will refer to $\rho_{T,N}$ as the complexity measure of $T$.

The main result in the noise free case is the following:
\begin{Theorem} \label{thm:main-random}
For every $L \geq 1$ there exist constants $c_1,c_2$ and $c_3$ that depend only on $L$ for which the following holds. Let $\mu$ be an isotropic, $L$-subgaussian measure.
Then, for $u \geq c_1$, with probability at least $1-2\exp(-c_2u^2\min\{N,E^2\})$, for every $s,t \in T$,
$$
\| \phi(As)-\phi(At) \|_{1} \geq \|s-t\|_{2} \|s+t\|_{2} \left( \kappa(s-t,s+t) - c_3u^3\rho_{T,N}\right),
$$
where for every $v,w \in \R^n$,
\begin{equation}
\label{eq:kappa}
\kappa(v,w)=\E|\inr{a,v/\|v\|_{2}} \inr{a,w/\|w\|_{2}}|.
\end{equation}
\end{Theorem}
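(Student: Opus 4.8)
The plan is to reduce the statement to a uniform deviation bound for an empirical process and then control that process by chaining and concentration. First I would use the pointwise factorization $|\inr{a_i,s}|^2-|\inr{a_i,t}|^2=\inr{a_i,s-t}\inr{a_i,s+t}$, so that, setting $v=s-t$, $w=s+t$ and normalizing $\bar v=v/\|v\|_2$, $\bar w=w/\|w\|_2$,
\[
\|\phi(As)-\phi(At)\|_1=\|s-t\|_2\|s+t\|_2\sum_{i=1}^N\bigl|\inr{a_i,\bar v}\inr{a_i,\bar w}\bigr| .
\]
By definition $\bar v\in T_-$ and $\bar w\in T_+$, and $\kappa(v,w)=\E|\inr{a,\bar v}\inr{a,\bar w}|$. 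Hence, after the appropriate normalization, the content of the theorem is the assertion that with the stated probability
\[
Z:=\sup_{\bar v\in T_-,\,\bar w\in T_+}\Bigl|\tfrac1N\sum_{i=1}^N|\inr{a_i,\bar v}\inr{a_i,\bar w}|-\kappa(v,w)\Bigr|\le c_3u^3\rho_{T,N}.
\]
Since the empirical mean is an average of nonnegative terms and $Z$ controls its two-sided oscillation, subtracting $Z$ from the mean gives the desired lower bound for each fixed (coupled) pair $(s,t)$; note that bounding the oscillation by the supremum over the product set $T_-\times T_+$ is exactly what decouples the two directions and forces $E$ to involve both $T_-$ and $T_+$.

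The class $\mathcal F=\{a\mapsto|\inr{a,\bar v}\inr{a,\bar w}|:\bar v\in T_-,\ \bar w\in T_+\}$ is a product of two classes of linear functionals. Because $\mu$ is isotropic and $L$-subgaussian and $\|\bar v\|_2=\|\bar w\|_2=1$, each factor has $\psi_2$-norm $\lesssim L$, so each product is only subexponential, with $\|\,|\inr{a,\bar v}\inr{a,\bar w}|\,\|_{\psi_1}\lesssim L^2$. Using $||x|-|y||\le|x-y|$, bilinearity, and $\|\xi\eta\|_{\psi_1}\le\|\xi\|_{\psi_2}\|\eta\|_{\psi_2}$, the natural metric on the index set is the mixed one
\[
\bigl\|\,|\inr{a,\bar v_1}\inr{a,\bar w_1}|-|\inr{a,\bar v_2}\inr{a,\bar w_2}|\,\bigr\|_{\psi_1}\lesssim L^2\bigl(\|\bar v_1-\bar v_2\|_2+\|\bar w_1-\bar w_2\|_2\bigr),
\]
which again shows why $T_-$ and $T_+$ enter symmetrically.

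To bound $\E Z$ I would symmetrize, use the contraction principle to remove the outer absolute value (the map $t\mapsto|t|$ is $1$-Lipschitz and fixes $0$), and then invoke the bound for the product (multiplier) empirical process of the two subgaussian classes $\mathcal H_-=\{\inr{\cdot,\bar v}:\bar v\in T_-\}$ and $\mathcal H_+=\{\inr{\cdot,\bar w}:\bar w\in T_+\}$:
\[
\E Z\lesssim\frac{d_{\psi_2}(\mathcal H_-)\,\gamma_2(\mathcal H_+,\psi_2)+d_{\psi_2}(\mathcal H_+)\,\gamma_2(\mathcal H_-,\psi_2)}{\sqrt N}+\frac{\gamma_2(\mathcal H_-,\psi_2)\,\gamma_2(\mathcal H_+,\psi_2)}{N}.
\]
Here $d_{\psi_2}(\mathcal H_\pm)\lesssim L$ since the index sets lie on the sphere, while by the majorizing measures theorem $\gamma_2(\mathcal H_\pm,\psi_2)\sim L\,\E\sup_{u\in T_\pm}\sum_i g_iu_i\le LE$. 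Substituting gives $\E Z\lesssim_L E/\sqrt N+E^2/N=\rho_{T,N}$, and one sees transparently that the first term of $\rho_{T,N}$ is the subgaussian ($\gamma_2/\sqrt N$) contribution while the second is the genuinely subexponential product term $\gamma_2\gamma_2/N$.

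The main obstacle is the last step: upgrading the in-expectation bound to the high-probability estimate with the stated exponent. Since the summands are unbounded and only $\psi_1$, the bounded-differences or Talagrand inequality for uniformly bounded classes does not apply, and I would instead use a concentration inequality for suprema of empirical processes of subexponential classes (for instance Adamczak's version of Talagrand's inequality). Such an inequality produces a two-regime tail, a subgaussian regime governed by the weak variance $\sim L^2/\sqrt N$ and a subexponential regime governed by the $\psi_1$-envelope $\sim L^2$; the crossover of these regimes is precisely where $E^2\sim N$, which is what yields the exponent $u^2\min\{N,E^2\}$. The powers of $u$ track the two regimes — roughly a factor $u$ from the $\gamma_2$ (subgaussian) deviation and $u^2$ from the subexponential deviation — and are absorbed into the single crude bound $u^3\rho_{T,N}$, valid for $u\ge c_1$. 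Verifying that the chaining can be made uniform at every scale while keeping the tail exponent at $u^2\min\{N,E^2\}$, and matching the $u$-powers against that exponent, is the delicate bookkeeping I expect to be the hardest part; the reduction and the chaining for $\E Z$ are comparatively routine.
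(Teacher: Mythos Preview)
Your reduction is identical to the paper's: the factorization $|\inr{a_i,s}|^2-|\inr{a_i,t}|^2=\inr{a_i,s-t}\inr{a_i,s+t}$, the normalization to $\bar v\in T_-$, $\bar w\in T_+$, and the recasting of the theorem as a uniform deviation bound for $Z=\sup_{\bar v,\bar w}|N^{-1}\sum_i|\inr{a_i,\bar v}\inr{a_i,\bar w}|-\kappa|$ are exactly what the paper does in its Proposition~\ref{prop:main-random}.

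Where you diverge is in how $Z$ is controlled. The paper does not argue from scratch: it invokes Theorem~\ref{thm:main-emp-est} (a result from the companion paper \cite{Men-orc}) as a black box, applied to the classes $F=\{|\inr{v,\cdot}|:v\in T_+\}$ and $H=\{|\inr{w,\cdot}|:w\in T_-\}$, and reads off both the deviation $c_3u^3\rho_{T,N}$ and the tail exponent $u^2\min\{N,E^2\}$ directly. Your route---symmetrization, contraction to strip the outer absolute value, a product-chaining bound yielding $\E Z\lesssim_L\rho_{T,N}$, then an Adamczak-type concentration inequality for $\psi_1$ classes---is a legitimate way to \emph{prove} that black box, and your identification of the two regimes (subgaussian $E/\sqrt N$ versus subexponential $E^2/N$, crossover at $N\sim E^2$) is exactly right. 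The cost is what you flag yourself: Adamczak's inequality in its standard form involves $\|\max_i\sup_f|f(a_i)|\|_{\psi_1}$ rather than the clean parameters $E$ and $N$, and threading the bookkeeping so that the exponent comes out as $u^2\min\{N,E^2\}$ with the deviation packaged as $u^3\rho_{T,N}$ requires real work (this is essentially the content of \cite{Men-orc}). So your proposal is correct but reconstructs the technical tool the paper imports; the paper's proof is a two-line reduction plus a citation.
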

To put Theorem \ref{thm:main-random} in the right perspective, one has to obtain lower bounds on $\kappa(s-t,s+t)$ and upper bounds on $\rho_{T,N}$. Since the latter depends on the number of measurements $N$, its behavior provides insight into the number of measurements that are needed for stability.

The value of $\kappa(v,w)$ may be bounded using several methods, as we will explain in Section \ref{sec:kappa}. One natural example in which $\inf_{v,w \in S^{n-1}} \kappa(v,w)$ is bounded from below (where $S^{n-1}$ is the unit Euclidean sphere in $\R^n$), is when $a$ satisfies a small-ball assumption, namely, that for every $t \in \R^n$ and every $\eps>0$,
\begin{equation} \label{eq:small-ball}
Pr(|\inr{a,t}| \leq \|t\|_2 \eps) \leq c\eps.
\end{equation}
It turns out that if \eqref{eq:small-ball} holds, then $\inf_{v,w \in S^{n-1}} \kappa(v,w) \geq c_1$, where $c_1$ only depends on the constant $c$ in \eqref{eq:small-ball}. This assumption is satisfied for a large family of measures, such as the Gaussian measure on $\R^n$ (see Section~\ref{sec:kappa} for more details).

As for $\rho_{T,N}$, we will show, for example, that if $T$ is the set of $k$-sparse vectors in $\R^n$, then $\rho_{T,N} \lesssim \sqrt{
k\log(en/k)/N}$. Hence, under \eqref{eq:small-ball}, since $\sloppy \inf_{v,w \in S^{n-1}} \kappa(v,w) \geq c_1$, it suffices to select $N$ large enough to ensure that $c_2 u^3 \rho_{T,N} \leq c_1/2$ to obtain a stability result. This leads to the following estimate:
\begin{Corollary} \label{cor-simple-intro}
For every $L,c>0$ there exist absolute constants $c_1$, $c_2,c_3$ and $c_4$ for which the following holds. Let $T$ be the set of $k$-sparse vectors in $\R^n$, set $\mu$ to be an isotropic, $L$-subgaussian measure, and assume that $a$ is distributed according to $\mu$. If $a$ satisfies \eqref{eq:small-ball} with constant $c$, then for $u>c_1$ and $N \geq c_2 u^3k\log(en/k)$, with probability at least $1-2\exp(-c_3u^2k\log(en/k))$, for every $s,t \in T$
$$
\|\phi(As)-\phi(At)\|_1 \geq c_4\|s-t\|_2 \|s+t\|_2.
$$
In particular, the result is true for a random Gaussian matrix $A$, where $c_1,c_2,c_3$ and $c_4$ are absolute constants.
\end{Corollary}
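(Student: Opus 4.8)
The plan is to derive Corollary~\ref{cor-simple-intro} directly from Theorem~\ref{thm:main-random}, by supplying a lower bound on $\kappa(s-t,s+t)$ and an upper bound on the complexity parameter $\rho_{T,N}$ tailored to the $k$-sparse set. The first and most substantial step is to estimate $E$ from \eqref{eq:rhoc}. The key observation is that if $s,t$ are both $k$-sparse, then $s-t$ and $s+t$ are $2k$-sparse, so both $T_-$ and $T_+$ are contained in the set $S_{2k}=\{v\in S^{n-1}:|\mathrm{supp}(v)|\leq 2k\}$ of unit-norm $2k$-sparse vectors. Hence, writing $g=(g_i)_{i=1}^n$ and letting $g_I$ denote its restriction to the coordinates in $I$,
$$
E \leq \E \sup_{v \in S_{2k}} \sum_{i=1}^n g_i v_i = \E \max_{|I| \leq 2k} \|g_I\|_2 .
$$
A union bound over the $\binom{n}{2k}\leq (en/2k)^{2k}$ possible supports, combined with the concentration of $\|g_I\|_2$ around $\sqrt{2k}$ (a $1$-Lipschitz function of a Gaussian vector), yields $\E \max_{|I|\leq 2k}\|g_I\|_2 \lesssim \sqrt{k}+\sqrt{k\log(en/k)} \lesssim \sqrt{k\log(en/k)}$, and therefore $E \lesssim \sqrt{k\log(en/k)}$.

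Next I would substitute this estimate into the definition \eqref{eq:rho}. Writing $m=k\log(en/k)$, so that $E\lesssim\sqrt m$ and $E^2\lesssim m$, we obtain
$$
\rho_{T,N} = \frac{E}{\sqrt N} + \frac{E^2}{N} \lesssim \sqrt{\frac{m}{N}} + \frac{m}{N}.
$$
Once $N \gtrsim m$ the quadratic term is dominated by the linear one, so $\rho_{T,N} \lesssim \sqrt{m/N} = \sqrt{k\log(en/k)/N}$, which is exactly the bound quoted before the statement of the corollary.

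It then remains to control $\kappa$ and to choose $N$. Since $\mu$ satisfies the small-ball condition \eqref{eq:small-ball} with constant $c$, the bound recorded after Theorem~\ref{thm:main-random} gives $\inf_{v,w\in S^{n-1}}\kappa(v,w)\geq c_1>0$, where $c_1$ depends only on $c$; in particular $\kappa(s-t,s+t)\geq c_1$ for every admissible pair. Applying Theorem~\ref{thm:main-random}, it suffices to force the error term to satisfy $c_3u^3\rho_{T,N}\leq c_1/2$. Given $\rho_{T,N}\lesssim\sqrt{k\log(en/k)/N}$, this holds as soon as $N \geq c_2 u^6 k\log(en/k)$, in which case $\kappa(s-t,s+t)-c_3u^3\rho_{T,N}\geq c_1/2=:c_4$, giving the asserted lower bound on $\|\phi(As)-\phi(At)\|_1$. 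Finally, because $N \gtrsim k\log(en/k)\sim E^2$, we have $\min\{N,E^2\}\sim k\log(en/k)$, so the failure probability in Theorem~\ref{thm:main-random} is at most $2\exp(-c_3u^2k\log(en/k))$, as claimed; the Gaussian case follows because the standard Gaussian measure is isotropic, $L$-subgaussian with an absolute constant, and satisfies \eqref{eq:small-ball}.

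The main obstacle is the mean-width estimate for $S_{2k}$: controlling $\E\max_{|I|\leq 2k}\|g_I\|_2$ requires balancing the roughly $(en/2k)^{2k}$ coordinate subspaces against the subgaussian fluctuations of $\|g_I\|_2$, and it is precisely this trade-off that produces the logarithmic factor $\log(en/k)$ and hence the final sample complexity. Everything else is bookkeeping: plugging the resulting bound on $E$ into $\rho_{T,N}$, invoking the small-ball lower bound on $\kappa$, and selecting $N$ to absorb the error term.
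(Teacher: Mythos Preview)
Your approach mirrors the paper's exactly: observe that $T_\pm \subset U_{2k}$, bound $E \lesssim \sqrt{k\log(en/k)}$, invoke the small-ball assumption to get a uniform lower bound on $\kappa$, and substitute into Theorem~\ref{thm:main-random}. The only difference is cosmetic---you estimate $\E\max_{|I|\leq 2k}\|g_I\|_2$ by a union bound and Gaussian concentration, while the paper simply cites a reference for the equivalent bound on $\E(\sum_{i\leq 2k}(g_i^*)^2)^{1/2}$.

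Two small remarks. First, your honest algebra produces $N \gtrsim u^6 k\log(en/k)$, not the $u^3$ appearing in the corollary; solving $u^3\sqrt{m/N}\lesssim 1$ indeed gives $u^6$, so this is a bookkeeping discrepancy in the paper's stated constants rather than an error on your part. Second, for the probability estimate you write $\min\{N,E^2\}\sim k\log(en/k)$, but you have only proved the upper bound $E^2\lesssim k\log(en/k)$; since Theorem~\ref{thm:main-random} gives probability $1-2\exp(-cu^2E^2)$, you also need $E^2\gtrsim k\log(en/k)$ to reach the stated exponent. This lower bound does hold (e.g., by Sudakov's inequality applied to $U_{2k}$, or by a direct computation), but it should be mentioned.
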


Interestingly, it can be shown that in the case of linear measurements, stable recovery is guaranteed as long as $N \sim k\log(en/k)$. Thus, the number of measurements needed for stable recovery in the nonlinear and linear settings is the same up to multiplicative constants -- at least for ensembles that have a well behaved $\inf_{v,w \in S^{n-1}} \kappa(v,w)$. As mentioned in the introduction, this observation is not a coincidence and will be explained in more detail in Section~\ref{sec:linear}.

In Section~\ref{sec:examples} we study other choices of $T$, and the number of measurements needed in order to guarantee stability.

\subsection{Noisy Recovery Results}

Section~\ref{sec:noise} is devoted to the case in which the measurements are contaminated with iid noise. The goal is to find a point $\hat{x}$ for which $\|\hat{x}-x_0\|_2 \|\hat{x}+x_0\|_2$ is small, using the data $(a_i,y_i)_{i=1}^N$ and the fact that $y$ is generated according to (\ref{eq:meas}) for some $x_0 \in T$.

A natural approach is to recover $x_0$ from $y$ by minimizing the empirical risk:
\begin{equation}
\label{eq:erm}
\min_x \ell_x=\min_x \frac{1}{N}\sum_{i=1}^N \left|y_i-|\inner{a_i}{x}|^2\right|^p,
\end{equation}
for some $p>1$.
The objective in (\ref{eq:erm}) is not convex, and therefore it is not clear how to find the value $x_0$ minimizing (\ref{eq:erm}).
Fortunately, we prove that in order to find an estimate $\hx$ close to $x_0$ one does not need to strictly minimize (\ref{eq:erm}). Instead, it is sufficient to find a point $\hx$ for which the empirical risk $\ell_\hx$ is small enough.

\begin{Definition}
Given a set $T \subset \R^n$, let
\begin{equation}
\label{eq:lt}
\ell(T)=\E \sup_{t \in T} |\sum_{i=1}^n g_i t_i|,
\end{equation}
be the Gaussian complexity of $T$, where $g_1,...,g_n$ are iid standard Gaussian variables, and put
\begin{equation}
\label{eq:dt}
d(T)=\sup_{t \in T} \|t\|_2.
\end{equation}
\end{Definition}

From the geometric point of view, $\ell(T)$ measures the best correlation (or width) of $T$ in a random direction generated by the random vector $G=(g_1,...,g_n)$. This parameter appears in many different areas of mathematics, and is also essential in the study of compressed sensing problems (see, for example \cite{LAMA}). We refer the reader to the books \cite{LT,Tal,Pis,MS} for more information on this parameter and for methods of computing it. For example, it is well known that $\ell(T)$ can be bounded from above and below (with a possible $\sqrt{\log n}$ gap between the upper and lower bounds) using of the $\ell_2$ covering numbers of $T$.

Suppose that
for a given $1<p \leq 2$, and $u \geq 1$ (which will later on govern our probability estimates), one produces $\hx$ satisfying
\begin{equation} \label{eq:hax-tt}
\frac{1}{N}\sum_{i=1}^N \left| |\inr{a_i,\hx}|^2 - y_i \right|^p \leq \E|w|^p + u\left(Q_{T,N,W}-\frac{\||w|^p\|_{\psi_1}}{\sqrt{N}}\right).
\end{equation}
Here $\||w|^p\|_{\psi_\alpha}$, $1 \leq \alpha \leq 2$ is a measure of the decay properties of the noise, and will be defined formally in (\ref{sec:noised}),
$Q_{T,N}$ is a complexity measure similar to $\rho_{T,N}$ (defined formally in (\ref{eq:qtn})), and $Q_{T,N,W}=Q_{T,N} +\||w|^p\|_{\psi_1}/\sqrt{N}$. Our main result shows that, with high probability, such a point $\hx$ is close to either $x_0$ or to $-x_0$.
To find an appropriate $\hx$, it is possible, for example, to use the greedy method of \cite{SBE12} with different starting points and stop once a solution that satisfies the bound is found.
\begin{Theorem} \label{thm:noisy-maint}
For every $\kappa>0$ and every $L \geq 1$ there exists constants $c_1,c_2,c_3$ and $c_4$ that depend only on $L$ and $\kappa$, for which the following holds. Let $a$ be distributed according to an isotropic, $L$-subgaussian measure, and assume that $\kappa_T \geq \kappa$ where $\kappa_T=\inf_{s,t\in T} \kappa(s,t)$. Assume further that $\|w\|_{\psi_2} < \infty$.
For every integer $N$ set
$$
\beta_N = \max\{c_1\left((\|w\|_{\psi_1} + d^2(T)) \log N + \ell^2(T)\right),e\}
$$
and
$$p=1+1/\log \beta_N.$$ Let $\hx$ be chosen to satisfy \eqref{eq:hax-tt}.
Then, for $u \geq c_2$, with probability at least $1-2\exp(-c_3u^{1/3})$,
$$
\|x_0-\hx\|_2  \|x_0+\hx\|_2 \leq c_4 (uQ_{T,N,W})^{1/p} \sqrt{\log \beta_N}.
$$
\end{Theorem}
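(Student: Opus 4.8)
The plan is to turn the constraint \eqref{eq:hax-tt} on the empirical $L_p$-risk of $\hx$ into an upper bound on $\E|h_{\hx}|$, where for $x\in T$ I set $h_x(a)=\inr{a,x-x_0}\inr{a,x+x_0}=|\inr{a,x}|^2-|\inr{a,x_0}|^2$, and then pass back to the target quantity. The final step is immediate: by the definition \eqref{eq:kappa} of $\kappa$ and the hypothesis $\kappa_T\geq\kappa$,
$$\E|h_x|=\kappa(x-x_0,x+x_0)\,\|x-x_0\|_2\,\|x+x_0\|_2\geq\kappa\,\|x-x_0\|_2\,\|x+x_0\|_2,$$
so it suffices to establish, on the stated event, that $\E|h_{\hx}|\lesssim_{L,\kappa}(uQ_{T,N,W})^{1/p}\sqrt{\log\beta_N}$. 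Note that $y_i-|\inr{a_i,x}|^2=w_i-h_x(a_i)$, so $\ell_x=\frac1N\sum_{i=1}^N|h_x(a_i)-w_i|^p$, and at $x_0$ it equals $\frac1N\sum_i|w_i|^p$.

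First I would reduce to a uniform lower bound on the empirical excess risk $\mathcal L_x:=\ell_x-\frac1N\sum_i|w_i|^p$. Subtracting $\frac1N\sum_i|w_i|^p$ from both sides of \eqref{eq:hax-tt} gives $\mathcal L_{\hx}\le\E|w|^p-\frac1N\sum_i|w_i|^p+u(Q_{T,N,W}-\||w|^p\|_{\psi_1}/\sqrt N)$. A Bernstein estimate for the sub-exponential variables $|w_i|^p$, whose $\psi_1$-norm is finite since $\|w\|_{\psi_2}<\infty$ and $p\le2$, shows that on an event of the required probability one has $\E|w|^p-\frac1N\sum_i|w_i|^p\le u\,\||w|^p\|_{\psi_1}/\sqrt N$; this is exactly the fluctuation that the correction term in \eqref{eq:hax-tt} is designed to absorb. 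Hence $\mathcal L_{\hx}\lesssim uQ_{T,N,W}$, and the task becomes to prove that $\mathcal L_x$ is bounded below, uniformly in $x\in T$, by a quantity that forces $h_{\hx}$ to be small once $\mathcal L_{\hx}$ is small.

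For the lower bound I would split $\mathcal L_x=\mathcal P(x)+\mathcal E(x)$ into the mean $\mathcal P(x)=\E|h_x-w|^p-\E|w|^p$ and the centred empirical process $\mathcal E(x)$. For the population term, convexity of $t\mapsto|t|^p$ together with the independence and centring of the noise (so that $h=0$ minimises the population risk and the linear term vanishes in expectation) yields a two-regime estimate $\mathcal P(x)\gtrsim\min\{(p-1)\,\E[|w|^{p-2}h_x^2],\,\E|h_x|^p\}$; since $p-1=1/\log\beta_N$, the quadratic branch carries the convexity constant $1/\log\beta_N$ that is ultimately responsible for the $\sqrt{\log\beta_N}$ in the conclusion. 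The empirical term is the delicate part: after linearising $|h_x-w|^p-|w|^p$, it is the sum of a multiplier process $\frac1N\sum_{i=1}^N\xi_i\,h_x(a_i)$, with $\xi_i$ a bounded power of $w_i$, and a quadratic process $\frac1N\sum_{i=1}^N h_x^2(a_i)$, each recentred. Because $h_x$ is a product of two linear forms it is not sub-gaussian but a degree-two chaos, with $L_2$-size comparable to $\|x-x_0\|_2\|x+x_0\|_2$ and $\psi_1$-size governed by the Gaussian widths of $T_-$ and $T_+$; generic chaining for these processes, exactly as in the proof of Theorem~\ref{thm:main-random}, bounds $\sup_x|\mathcal E(x)|$ by the complexity parameter $Q_{T,N}$ of \eqref{eq:qtn}, which plays the role that $\rho_{T,N}$ plays in the noise-free theorem, plus the multiplier contribution $\||w|^p\|_{\psi_1}/\sqrt N$ arising from pairing the noise against the chaos.

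Combining the upper bound $\mathcal L_{\hx}\lesssim uQ_{T,N,W}$ with the population lower bound then gives, in the relevant regime, $(1/\log\beta_N)\|h_{\hx}\|_{L_2}^2\lesssim uQ_{T,N,W}$, whence by $\E|h_{\hx}|\le\|h_{\hx}\|_{L_2}$ and the norm-equivalences valid for the chaos $h_{\hx}$, one obtains $\E|h_{\hx}|\lesssim\sqrt{\log\beta_N}\,(uQ_{T,N,W})^{1/p}$ after optimising the two regimes against the choice $p=1+1/\log\beta_N$; here $\sqrt{\log\beta_N}$ is precisely the square root of the inverse convexity modulus $(p-1)^{-1}=\log\beta_N$. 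The main obstacle is the uniform control of the degree-two chaos and multiplier processes: one must show their suprema concentrate at scale $Q_{T,N,W}$ despite the heavy ($\psi_1$) tails of both $h_x(a)$ and $|w|^p$, while tracking the interplay between the weak ($L_2$) and strong ($\psi_1$) parameters closely enough that the gain from the strict convexity of $|\cdot|^p$ is not swamped by the loss in concentration. The value $p=1+1/\log\beta_N$ is chosen to balance these competing effects, and checking that it simultaneously keeps $\||w|^p\|_{\psi_1}$ comparable to $\|w\|_{\psi_1}$ and makes the convexity gain $1/\log\beta_N$ usable is where the bookkeeping is heaviest.
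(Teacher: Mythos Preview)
Your overall architecture matches the paper's: (i) combine \eqref{eq:hax-tt} with a $\psi_1$-concentration bound on $\frac1N\sum_i|w_i|^p$ to obtain $P_N{\cal L}_{\hx}\lesssim uQ_{T,N,W}$; (ii) control $\sup_{x\in T}|P_N{\cal L}_x-\E{\cal L}_x|$ uniformly by $Q_{T,N}$ so as to transfer this to $\E{\cal L}_{\hx}$; (iii) lower-bound $\E{\cal L}_x$ in terms of $\|x-x_0\|_2\|x+x_0\|_2$. You also correctly trace the factor $\sqrt{\log\beta_N}$ to $(p-1)^{-1/2}$. The differences, and a gap, lie in how (ii) and (iii) are executed.

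For step (ii), the paper does not linearise $|h_x-w|^p-|w|^p$ into multiplier and quadratic pieces; a Taylor-type decomposition for $1<p<2$ produces a second-order coefficient $\sim|w|^{p-2}$ that is unbounded near zero and not controlled by the hypotheses. Instead the paper symmetrises and applies the Bernoulli contraction inequality: the maps $y\mapsto|y-w_i|^p-|w_i|^p$ vanish at $0$ and are Lipschitz on the relevant range with constant $\lesssim D_{\infty,N}^{p-1}$, where $D_{\infty,N}=2\max_i\bigl(|w_i|+\sup_{x\in T}|h_x(a_i)|\bigr)$. This reduces the excess-loss process to $D_{\infty,N}^{p-1}$ times the chaos process $\sup_{x\in T}\bigl|\frac1N\sum_i\eps_i h_x(a_i)\bigr|$, and the latter is handled by Theorem~\ref{thm:main-emp-est}. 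The definition of $\beta_N$ is precisely (a constant times) $\|D_{\infty,N}\|_{\psi_1}$, so the choice $p-1=1/\log\beta_N$ forces $\|D_{\infty,N}^{p-1}\|_{L_{2q}}\lesssim q^{p-1}=O(1)$: the Lipschitz factor becomes harmless. This is the concrete mechanism behind the ``balance'' you allude to, and it is not visible from a multiplier-plus-quadratic decomposition.

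For step (iii), the paper again avoids Taylor expansion. It uses the symmetry of $w$ to rewrite $\E{\cal L}_x=\E\bigl[\tfrac12\bigl||w|+h_x\bigr|^p+\tfrac12\bigl||w|-h_x\bigr|^p-|w|^p\bigr]$ and then applies the pointwise two-point inequality $\tfrac12(|c+d|^p+|c-d|^p)\geq(c^2+(p-1)d^2)^{p/2}$ to arrive at a lower bound proportional to $(p-1)^{p/2}\E|h_x|^p$, with no moment of $w$ entering. Your quadratic branch $(p-1)\E[|w|^{p-2}h_x^2]=(p-1)\,\E|w|^{p-2}\cdot\E h_x^2$ requires $\E|w|^{p-2}<\infty$, which is not implied by $\|w\|_{\psi_2}<\infty$: nothing in the hypotheses rules out an atom of $w$ at $0$, and $p-2<0$. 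As written this is a genuine gap; it can be repaired by truncating $|w|$ away from zero at a scale comparable to $\|w\|_{\psi_1}$, but then one must re-check that the resulting constant still combines with $(p-1)$ to give the claimed rate.
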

Theorem~\ref{thm:noisy-maint} shows that stable recovery is possible if $N \gtrsim Q_{T,N,W}^{1/p} \sqrt{\log \beta_N}$.
 In particular, for $k$-sparse vectors, stable recovery is possible from $O(k\log(n/k)\log k)$ noisy measurements (this estimate is off only by a $\log k$
  factor from the
optimal estimate in the linear case), and
when $x_0$ can be any vector in $\R^n$, $O(n \log n)$ noisy measurements suffice.

\subsection{Technical Tool}
The main technical tool needed in the proof of both main results is a general estimate on properties of empirical processes indexed by $\{f h : f \in F, \ h \in H\}$. Although the result is true in a far more general situation than needed here, for the sake of simplicity we will present it only in the cases required. We refer the reader to \cite{Men-orc} for the more general statement and precise results.

In the cases considered here, $F$ and $H$ are classes of linear functionals or of absolute values of linear functionals on $\R^n$, which is endowed with an isotropic, $L$-subgaussian probability measure $\mu$. For the stability result $F=\{ |\inr{t, \cdot}| : t \in T_+\}$ and $H=\{ |\inr{t,\cdot}| : t \in T_-\}$, while in the noisy case, $F=\{\inr{t-t_0,\cdot} : t \in T\}$ and $H=\{\inr{t+t_0,\cdot} : t \in T\}$.
In both scenarios, the two indexing sets are denoted by $T_1,T_2 \subset \R^n$.

\begin{Theorem} \label{thm:main-emp-est} \cite{Men-orc}
For every $L \geq 1$ there are constants $c_1,c_2,c_3$ and $c_4$ that depend only on $L$ and for which the following hold. Let $T_1,T_2 \subset \R^n$ of cardinality at least $2$ and set $F$ and $H$ to be the corresponding classes as above, respectively. Assume without loss of generality that $\ell(T_1)/d(T_1) \geq \ell(T_2)/d(T_2)$. Then, for every $u \geq c_1$, with probability at least
$$
1-2\exp\left(-c_2u^2 \min\{N, (\ell(T_1)/d(T_1))^2\}\right),
$$
$$
\sup_{f \in F, h \in H} \left|\frac{1}{N} \sum_{i=1}^N f(a_i)h(a_i) - \E f h \right| \leq c_3 u^3 \left(d(T_2) \frac{\ell(T_1)}{\sqrt{N}} + \frac{\ell^2(T_1)}{N}\right).
$$
In particular, for every $q \geq 2$,
$$
\left\|\sup_{f \in F, h \in H} \left|\frac{1}{N} \sum_{i=1}^N f(a_i)h(a_i) - \E f h \right| \right\|_{L_q}
\leq  c_4 q^{3/2}\left(d(T_2) \frac{\ell(T_1)}{\sqrt{N}} + \frac{\ell^2(T_1)}{N}\right).
$$
\end{Theorem}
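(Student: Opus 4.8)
The plan is to treat $Z=\sup_{f\in F,\,h\in H}\left|\frac1N\sum_{i=1}^N f(a_i)h(a_i)-\E fh\right|$ as a \emph{product} (hence quadratic, sub-exponential) empirical process and to control it by generic chaining carried out simultaneously in two metrics, extracting the mean, the $L_q$ bound and the deviation inequality from the same chaining estimates. The growth $q^{3/2}$ will arise from combining a $\sqrt q$ sub-Gaussian contribution with a $q$ sub-exponential one, while the effective-dimension factor $m=\min\{N,(\ell(T_1)/d(T_1))^2\}$ in the exponent comes from the sub-Gaussian regime of the chaining tail, whose variance proxy is the worst single-coordinate $L_2$ fluctuation of the process.

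First I would record that, since $\mu$ is isotropic and $L$-subgaussian, every linear functional satisfies $\|\inr{t,\cdot}\|_{\psi_2}\lesssim_L\|t\|_2$, and because $x\mapsto|x|$ is $1$-Lipschitz and vanishes at $0$, the elements of $F$ and $H$ inherit the same sub-Gaussian increment estimates; thus for the chaining one may index $F$ and $H$ by $T_1,T_2$ whose $L_2(\mu)$-metric coincides with the Euclidean metric. The key structural fact is that the product process has Bernstein (mixed-tail) increments: splitting $fh-f'h'=(f-f')h+f'(h-h')$ and using $\|gh\|_{\psi_1}\lesssim\|g\|_{\psi_2}\|h\|_{\psi_2}$ for any $g,h$ gives
\[
\left\|(fh-f'h')(a)\right\|_{\psi_1}\;\lesssim_L\; d(T_2)\,\|f-f'\|_{L_2}+d(T_1)\,\|h-h'\|_{L_2},
\]
with a comparable bound for the $L_2$ norm of the increment. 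Hence the empirical mean concentrates at two scales: Gaussian at scale (increment $L_2$-norm)$/\sqrt N$ and exponential at scale (increment $\psi_1$-norm)$/N$.

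Next I would run generic chaining for processes with Bernstein increments (Talagrand's $\gamma_2+\gamma_1$ estimate) over $\Theta=T_1\times T_2$, bounding $Z$ by a sub-Gaussian term of order $\gamma_2(\Theta,\|\cdot\|_{L_2})/\sqrt N$ plus a sub-exponential term of order $\gamma_1(\Theta,\|\cdot\|_{\psi_1})/N$. By the majorizing-measure theorem $\gamma_2(T_i,\|\cdot\|_2)\sim\ell(T_i)$, and the factorized increment bound above, together with the sub-additivity of $\gamma_2$ and $\gamma_1$ over products, lets the product functionals be estimated through the two coordinates. Here the hypothesis is used in the equivalent form
\[
\frac{\ell(T_1)}{d(T_1)}\geq\frac{\ell(T_2)}{d(T_2)}\quad\Longleftrightarrow\quad d(T_2)\,\ell(T_1)\geq d(T_1)\,\ell(T_2),
\]
which says precisely that the contribution carrying the complexity of $T_1$ weighted by the radius of $T_2$ dominates the symmetric one. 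This makes the $\gamma_2$ part contribute $d(T_2)\ell(T_1)/\sqrt N$ and the $\gamma_1$ part the quadratic term $\ell^2(T_1)/N$ -- the two terms of the statement -- while $\ell(T_2)$ never appears.

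The main obstacle is exactly this accounting. A product process is genuinely sub-exponential, so no single-metric Dudley bound suffices and one must chain in the $L_2$ and $\psi_1$ metrics simultaneously; more delicately, the multiplier $h(a_i)$ is a random, $a$-dependent weight that is correlated with the chaining performed in the $f$-direction, and decoupling the two coordinates -- while showing that under the ratio hypothesis the $T_2$-direction may be frozen at its radius $d(T_2)$, so that its complexity drops out -- is the technical heart and the source of the bound's asymmetry. A preliminary high-probability control of the empirical norms $\sup_{f}\frac1N\sum_i f^2(a_i)$, again through $\ell$ and $d$, keeps these weights bounded and makes the conditional chaining rigorous; carefully tracking the powers of $u$ and the factor $m$ through the resulting two-regime tail is then routine bookkeeping, but it is where the full machinery of \cite{Men-orc} is needed.
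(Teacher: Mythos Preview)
The paper does not contain a proof of this theorem: it is quoted verbatim from \cite{Men-orc} and used as a black box (see the citation in the theorem header and the sentence preceding it, ``We refer the reader to \cite{Men-orc} for the more general statement and precise results''). There is therefore no in-paper argument to compare your proposal against.

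That said, your outline is the correct route and matches the approach of \cite{Men-orc}: the product class $\{fh\}$ has $\psi_1$ (Bernstein) increments, so the empirical process is controlled by a two-metric generic chaining yielding a $\gamma_2/\sqrt{N}+\gamma_1/N$ bound; the majorizing measures theorem converts $\gamma_2$ to $\ell$, and the ratio hypothesis $d(T_2)\ell(T_1)\geq d(T_1)\ell(T_2)$ is exactly what lets one absorb the $T_2$ complexity and present the bound in terms of $\ell(T_1)$ and $d(T_2)$ alone. One point you gloss over deserves a flag: the passage from $\gamma_1$ of the product class to $\ell^2(T_1)/N$ is not a one-line consequence of sub-additivity over products --- it requires either a careful two-scale chaining in which the $T_2$ coordinate is frozen at successive nets (so that the multiplier is bounded by $d(T_2)$ plus an increment already accounted for), or the equivalent device of first controlling $\sup_{h\in H}\frac1N\sum_i h^2(a_i)$ and then chaining conditionally in $F$. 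You acknowledge this as ``the technical heart'', and indeed it is where the real work in \cite{Men-orc} lies; your sketch is an accurate high-level map of that argument but not a substitute for it.
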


\section{Stability Results}
\label{sec:noisef}
In this section we present the proof of Theorem~\ref{thm:main-random}, followed by estimates on the values of $\kappa(v,w)$ and $\rho_{T,N}$ appearing in the theorem.

\subsection{Proof of Theorem~\ref{thm:main-random}}

Observe that
\begin{equation}
\label{eq:equiv1}
\|\phi(A t) -\phi(A s)\|_{1} = \sum_{i=1}^N \left| |\inr{a_i,t}|^2 - |\inr{a_i,s}|^2 \right| = \sum_{i=1}^N |\inr{a_i, s-t} \inr{a_i,s+t}|.
\end{equation}
Therefore, to establish the desired stability result, it suffices to show that
$$
\inf_{\{s,t \in T, \ s \not = \pm t\}} \frac{\|\phi(A t) -\phi(A s)\|_{1}}{\|s-t\|_{2} \|s+t\|_{2}} \geq C,
$$
i.e., that
\begin{equation}
\label{eq:stable-alt}
\inf_{\{s,t \in T, \ s \not = \pm t\}} z_{t,s}\geq \frac{C}{N},
\end{equation}
where
\begin{equation}
\label{eq:defz}
z_{t,s}=\frac{1}{N} \sum_{i=1}^N \left|\inr{a_i, \frac{s-t}{\|s-t\|_2}}  \inr{a_i,\frac{s+t}{\|s+t\|_2}}\right|.
\end{equation}

Since $\kappa(s-t,s+t)=\E z_{t,s}$, if $\kappa(s-t,s+t)$ is very small, then a random selection of $a_i$ is unlikely to lead to \eqref{eq:stable-alt}. Therefore, a reasonable pre-requisite for a stability result is that $\inf_{ s \not \pm t, \ s,t \in T} \kappa(s-t,s+t)$ is bounded away from zero. Indeed, with this assumption, one may obtain a stability result.

\begin{Proposition}
 \label{prop:main-random}
For every $L \geq 1$ there exist constants $c_1,c_2$ and $c_3$ that depend only on $L$ for which the following holds. Let $\mu$ be an isotropic, $L$-subgaussian measure on $\R^n$ and set $a$ to be a random vector distributed according to $\mu$.
Then, for $u \geq c_1$, with probability at least $1-2\exp(-c_2u^2\min\{N,E^2\})$, for every $s,t \in T$,
$$
z_{s,t} \geq \left(\kappa(s-t,s+t) -c_3u^3 \rho_{T,N}\right)\|s-t\|_2 \|s+t\|_2.
$$
\end{Proposition}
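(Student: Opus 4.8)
My plan is to recognise the centred quantity $z_{s,t}-\kappa(s-t,s+t)$ as a product empirical process of precisely the type controlled by Theorem~\ref{thm:main-emp-est}, and then to match parameters. Writing $v=(s-t)/\|s-t\|_2$ and $w=(s+t)/\|s+t\|_2$, the summands in \eqref{eq:defz} are $|\inr{a_i,v}|\,|\inr{a_i,w}|$, so that $z_{s,t}=\frac1N\sum_{i=1}^N f(a_i)h(a_i)$ and $\kappa(s-t,s+t)=\E fh$ with $f=|\inr{w,\cdot}|$ and $h=|\inr{v,\cdot}|$. The first step is therefore to apply Theorem~\ref{thm:main-emp-est} to the classes $F=\{|\inr{w,\cdot}|:w\in T_+\}$ and $H=\{|\inr{v,\cdot}|:v\in T_-\}$, whose index sets are $T_+$ and $T_-$. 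Since the supremum in that theorem runs over the full product $T_+\times T_-$, it in particular dominates the ``diagonal'' pairs $\bigl((s+t)/\|s+t\|_2,\,(s-t)/\|s-t\|_2\bigr)$ that actually occur, so a uniform bound there yields a uniform bound on $|z_{s,t}-\kappa(s-t,s+t)|$ over all admissible $s,t$.

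The second step is to read off the two complexity quantities. Every element of $T_+$ and of $T_-$ is a unit vector, so $d(T_+)=d(T_-)=1$ and the ratios $\ell(T_i)/d(T_i)$ reduce to $\ell(T_i)$. The sets appearing in the bound carry the two-sided Gaussian width $\ell(\cdot)$ of \eqref{eq:lt}, whereas $E$ in \eqref{eq:rhoc} is built from one-sided widths; these are comparable. Indeed, $T_-$ is symmetric (interchanging $s$ and $t$ sends $(s-t)/\|s-t\|_2$ to its negative), so for $T_-$ the one- and two-sided widths coincide; and $T_-$ contains an antipodal pair $\pm v_0$, whence the one-sided width, and thus $E$, is bounded below by $\E|\inr{g,v_0}|=\sqrt{2/\pi}$. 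For the possibly non-symmetric set $T_+$, centring at a point of $T_+$ and using that the translated set contains the origin gives $\ell(T_+)\le \sqrt{2/\pi}+2\,\E\sup_{w\in T_+}\inr{g,w}$, so $\ell(T_+)\lesssim E$ as well. Hence $\ell(T_1)=\max\{\ell(T_+),\ell(T_-)\}\sim E$. Substituting $d(T_2)=1$ and $\ell(T_1)\lesssim E$ into the conclusion of Theorem~\ref{thm:main-emp-est} turns its right-hand side into a constant multiple of $E/\sqrt N+E^2/N=\rho_{T,N}$, while the exponent $c_2u^2\min\{N,(\ell(T_1)/d(T_1))^2\}$ dominates $c_2u^2\min\{N,E^2\}$, which is the probability claimed.

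Combining the two steps gives, with the stated probability, $\sup_{s,t\in T,\,s\neq\pm t}|z_{s,t}-\kappa(s-t,s+t)|\le c_3u^3\rho_{T,N}$, and in particular the one-sided lower bound $z_{s,t}\ge\kappa(s-t,s+t)-c_3u^3\rho_{T,N}$ asserted by the proposition; via the identity \eqref{eq:equiv1}, which writes $\|\phi(As)-\phi(At)\|_1$ as $\|s-t\|_2\|s+t\|_2$ times $Nz_{s,t}$, this is exactly what feeds into Theorem~\ref{thm:main-random}. The only genuine obstacle is justifying that the absolute-value classes $F,H$ satisfy the hypotheses of Theorem~\ref{thm:main-emp-est}: the pointwise domination $\bigl||\inr{w_1,a}|-|\inr{w_2,a}|\bigr|\le|\inr{w_1-w_2,a}|$ shows that the increments of $F$ (and of $H$) are controlled by those of the underlying linear class and therefore inherit its subgaussian behaviour, which is precisely the structure the cited result exploits; the remaining work is the one-sided-versus-two-sided width bookkeeping carried out above, all of whose losses are absorbed into the constants.
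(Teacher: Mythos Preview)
Your approach is correct and is exactly the one the paper takes: apply Theorem~\ref{thm:main-emp-est} to the classes $F=\{|\inr{w,\cdot}|:w\in T_+\}$ and $H=\{|\inr{v,\cdot}|:v\in T_-\}$, use that $T_\pm\subset S^{n-1}$ so $d(T_\pm)=1$, and read off $\rho_{T,N}$ and the probability. The paper's own proof is three lines and simply invokes Theorem~\ref{thm:main-emp-est} without further comment; your additional bookkeeping (the pointwise Lipschitz bound showing the absolute-value class inherits the subgaussian increments, and the comparison between the one-sided width defining $E$ in \eqref{eq:rhoc} and the two-sided $\ell(\cdot)$ appearing in Theorem~\ref{thm:main-emp-est}) fills in details the paper leaves implicit, but does not change the argument.
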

\begin{proof}
Observe that
$$
\sup |z_{t,s} - \kappa(s-t,s+t)| = \sup_{v \in T_+, \ w \in T_ -} \left| \frac{1}{N} \sum_{i=1}^N \left|\inr{a_i,v}  \inr{a_i,w}| - \E |\inr{a_i,v} \inr{a_i,w}\right| \right|.
$$
By Theorem~\ref{thm:main-emp-est} for $F=\{|\inr{v, \cdot}| : v \in T_+\}$ and $H=\{|\inr{w,\cdot}| : w \in T_-\}$, it follows that  if $N \geq c_1 E^2$ and $u \geq c_2$ then with probability at least $1-2\exp(-c_3u^2 E^2)$,
\begin{equation} \label{eq:cov-est}
\sup_{v \in T_-, \ w \in T_+} \left| \frac{1}{N} \sum_{i=1}^N \left|\inr{a_i,v} \inr{a_i,w}| - \E |\inr{a_i,v} \inr{a_i,w}\right| \right| \leq c_4u^3 \rho_{T,N}.
\end{equation}
The claim now follows immediately from the definition of $z_{s,t}$ and of $\kappa(s-t,s+t)$.
\end{proof}

\subsection{Computing $\kappa$ and $\rho_{T,N}$}
\label{sec:examples}

\subsubsection{Bounding $\rho_{T,N}$}
It is well known that if $T \subset \R^n$ then $\ell(T)$ (and therefore, $\rho_{T,N}$ as well) is determined by the Euclidean metric structure of $T$. This is the outcome of the celebrated majorizing measures/generic chaining theory (see the books \cite{LT,Dud-book,Tal} for a detailed exposition on this topic). In the examples we present here, the following estimate, which is, in general, suboptimal, suffices.
\begin{Definition} \label{def:cover}
Let $(T,d)$ be a compact metric space. For every $\eps>0$, let $N(T,d,\eps)$ be the smallest number of open balls of radius $\eps$ needed to cover $T$. The numbers $N(T,d,\eps)$ are called the $\eps$-covering numbers of $T$ relative to the metric $d$.
\end{Definition}
Given $T \subset \R^n$, set $N(T,\eps)=N(T,\| \ \|_2,\eps)$, i.e., the covering numbers relative to the Euclidean metric.
\begin{Proposition} \label{prop:rho-cover}
There exist absolute constants $c$ and $C$ for which the following holds.
If $T \subset \R^n$ then
$$
c \sup_{\eps>0 }\eps \sqrt{\log N (T,\eps)}\leq  \ell(T)  \leq C\int_0^{d(T)} \sqrt{\log N(T,\eps)} d\eps.
$$
\end{Proposition}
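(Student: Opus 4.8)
The plan is to identify $\ell(T)$ as the expected supremum of the canonical Gaussian process on $T$ and to split the claim into its two classical halves, the Sudakov lower bound and the Dudley upper bound. For $t \in T$ set $X_t = \sum_{i=1}^n g_i t_i$; this is a centered Gaussian process whose increments satisfy $\E|X_s - X_t|^2 = \|s-t\|_2^2$, so the canonical metric of the process is exactly the Euclidean metric underlying $N(T,\eps)$. Fixing a reference point $t_0 \in T$ and using that the process is symmetric, one reduces both inequalities, up to absolute constants, to two-sided estimates on $\E \sup_{t \in T}(X_t - X_{t_0})$ in terms of the metric entropy $\log N(T,\eps)$; this is the standard reduction from $\E \sup_t |X_t|$ to the centered process.

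For the upper bound I would run Dudley's chaining. For each integer $k \geq 0$ pick a minimal $\eps_k$-net $T_k \subset T$ with $\eps_k = 2^{-k} d(T)$, so $|T_k| = N(T,\eps_k)$, and let $\pi_k(t)$ denote a nearest point of $T_k$ to $t$. Telescoping $X_t - X_{t_0} = \sum_{k\geq 1}\bl X_{\pi_k(t)} - X_{\pi_{k-1}(t)}\br$, each link joins points at Euclidean distance at most $\eps_k + \eps_{k-1} \leq 3\eps_k$, hence is a centered Gaussian of standard deviation $\leq 3\eps_k$, and there are at most $N(T,\eps_k)^2$ distinct links at level $k$. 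The Gaussian maximal inequality $\E\max_{j \leq M} Z_j \lesssim \sigma\sqrt{\log M}$ for centered Gaussians of deviation at most $\sigma$ then bounds the expected level-$k$ increment by $\lesssim \eps_k\sqrt{\log N(T,\eps_k)}$, and summing over $k$ and comparing the dyadic series with the entropy integral gives $\E\sup_t(X_t - X_{t_0}) \lesssim \int_0^{d(T)}\sqrt{\log N(T,\eps)}\,d\eps$. The only labor here is the bookkeeping of the union bound and the passage from the sum to the integral, both routine.

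For the lower bound I would invoke Sudakov minoration at a single scale. Fix $\eps>0$ and let $\{t_1,\dots,t_M\}$ be a maximal $\eps$-separated subset of $T$; maximality forces it to be an $\eps$-net, so $M \geq N(T,\eps)$, while $\|t_i - t_j\|_2 \geq \eps$ for $i \neq j$. Comparing $(X_{t_i})$ with the independent family $Y_i = (\eps/\sqrt2)\,g_i'$, where the $g_i'$ are iid standard Gaussians, one has $\E|Y_i - Y_j|^2 = \eps^2 \leq \E|X_{t_i}-X_{t_j}|^2$, so the Sudakov--Fernique comparison inequality, which needs no matching of the individual variances, yields $\E\max_i X_{t_i} \geq \E\max_i Y_i \gtrsim \eps\sqrt{\log M} \geq \eps\sqrt{\log N(T,\eps)}$. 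Since $\sup_t|X_t| \geq \max_i X_{t_i}$, taking expectations and then the supremum over $\eps$ gives the claimed lower bound on $\ell(T)$.

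I expect the lower bound to be the genuine obstacle: it rests on the Gaussian comparison inequality of Slepian/Sudakov--Fernique type and on the elementary but easily misquoted estimate $\E\max_{i\leq M}g_i' \gtrsim \sqrt{\log M}$, together with the identification of a maximal separated set as a net. The upper bound, by contrast, is pure chaining. Since both are classical, I would in the end cite \cite{LT,Tal,Dud-book} for the comparison theorem and the entropy integral rather than reproduce their proofs in full.
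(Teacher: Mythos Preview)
Your argument is correct and is precisely the classical route: Dudley's chaining for the upper bound and Sudakov minoration via Slepian/Sudakov--Fernique for the lower bound. The paper itself does not prove this proposition at all; it simply attributes the upper bound to Dudley and the lower bound to Sudakov and refers the reader to \cite{LT,Pis,Dud-book}. So you have supplied more than the paper does, and what you supplied matches the standard proofs those references contain. Your closing remark that one would ultimately cite these sources rather than reprove them is exactly what the paper does.
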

The upper bound is due to Dudley \cite{Dud67} and the lower to Sudakov \cite{Sud}. The proof of both bounds may be found, for example, in \cite{LT,Pis,Dud-book}.

It is straightforward to verify that the gap between the upper and lower bounds in Proposition \ref{prop:rho-cover} is at most $\sim \sqrt{\log n}$, and in all the examples we study below, the resulting estimate is sharp.

\subsubsection{Bounding $\kappa$} \label{sec:kappa}
Here, we present two simple methods for bounding $\inf_{v,w \in S^{n-1}} \kappa(v,w)$ from below. These methods are not the only possibilities by which one may obtain such a bound; rather, they serve as an indication that the assumption on $\kappa$ is less restrictive than may appear at first glance.

Recall the small ball assumption: for every $t \in \R^n$ and every $\eps>0$, $Pr( |\inr{a,t}| \leq \eps \|t\|_2 ) \leq c\eps$.
\begin{Lemma} \label{lemma:small-ball}
If $a$ satisfies the small ball assumption with constant $c$ then
$$
\inf_{v,w \in S^{n-1}} \kappa(v,w) \geq \kappa,
$$
where $\kappa$ depends only on $c$.
\end{Lemma}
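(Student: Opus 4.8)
The plan is to unwind the definition of $\kappa$ and estimate the expectation directly via a one-line ``layer-cake'' argument. Since $v,w \in S^{n-1}$, we have $\kappa(v,w) = \E|\inr{a,v}\inr{a,w}|$, so the goal is a lower bound on this expectation that is uniform over the sphere. The guiding observation is that the product $|\inr{a,v}\inr{a,w}|$ can be small only when at least one of the two factors is small, and the small-ball assumption is precisely what prevents each factor from being small too often.

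Concretely, I would fix a threshold $\delta>0$ (to be chosen) and introduce the bad events $B_v=\{|\inr{a,v}| \leq \delta\}$ and $B_w=\{|\inr{a,w}| \leq \delta\}$. Since $\|v\|_2=\|w\|_2=1$, the small-ball assumption gives $Pr(B_v) \leq c\delta$ and $Pr(B_w) \leq c\delta$, so by the union bound $Pr(B_v \cup B_w) \leq 2c\delta$. On the complementary event both factors exceed $\delta$ in absolute value, hence $|\inr{a,v}\inr{a,w}| \geq \delta^2$ there. Since the integrand is nonnegative, discarding its contribution on $B_v \cup B_w$ yields
$$
\E|\inr{a,v}\inr{a,w}| \geq \delta^2\, Pr\bigl((B_v \cup B_w)^c\bigr) \geq \delta^2(1-2c\delta).
$$
Choosing $\delta = 1/(4c)$ makes $1-2c\delta = 1/2$, which gives $\kappa(v,w) \geq 1/(32c^2)$. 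Taking the infimum over $v,w \in S^{n-1}$ produces a lower bound $\kappa = 1/(32c^2)$ depending only on $c$, as required.

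There is no substantial obstacle here: the only points to watch are that the union bound is applied to the two events simultaneously (so the small-ball constant $c$ enters twice) and that the chosen $\delta$ keeps the surviving probability bounded away from zero. The bound is completely uniform in $v$ and $w$, so no separate covering or continuity argument is needed to pass to the infimum over the sphere.
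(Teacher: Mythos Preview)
Your proof is correct and is essentially the same argument as the paper's: both fix a threshold $\eps=1/(4c)$, use the small-ball bound to ensure each factor exceeds $\eps$ with probability at least $3/4$, combine the two events (you via the union bound on the bad events, the paper via inclusion--exclusion on the good events) to get probability at least $1/2$ that both factors exceed $\eps$, and conclude $\E|\inr{a,v}\inr{a,w}| \geq \eps^2/2 = 1/(32c^2)$.
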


\proof
Consider $\eps$ for which $c\eps \leq 1/4$. Then for every $v \in S^{n-1}$, there is an event of measure at least $3/4$ on which $|\inr{v,a}| \geq \eps$. Hence, for two fixed vectors $v,w \in S^{n-1}$,
$$
Pr( \{|\inr{v,a}| \geq \eps\} \cap \{  |\inr{w,a}| \geq \eps \} ) \geq 1/2,
$$
and thus $\E |\inr{v,a} \inr{w,a}| \geq \eps^2/2$.
\endproof

This type of small ball property is true in many case. The simplest example is the standard gaussian measure on $\R^n$. Indeed, if $a=(g_1,...,g_n)$ then $|\inr{a,w}|$ is distributed as $\|w\|_2 |g|$ and the small ball property follows immediately by applying the $L_\infty$ estimate on the density of $g$.

A more general example is based on the notion of log-concavity. A measure $\mu$ on $\R^n$ is called log-concave if for every nonempty, Borel measurable sets $A,B \subset \R^n$, and any $0 \leq \lambda \leq 1$, $\mu(\lambda A + (1-\lambda)B) \geq \mu^\lambda(A) \mu^{1-\lambda}(B)$, where $\lambda A+ (1-\lambda)B = \{\lambda a + (1-\lambda)b : a \in A, b \in B\}$. It is well known that $\mu$ is a log-concave measure if and only if it has a density of the form $\exp(\phi)$ for a concave function $\phi:\R^n \to \R$.

The following lemma is standard (see e.g. \cite{Gia,Bob,MP}).

\begin{Lemma} \label{lemma:log-concave}
There exists an absolute constant $c$ for which the following holds.
Let $a$ be distributed according to an isotropic, symmetric, log-concave measure. Then for every $\theta \in S^{n-1}$, $\inr{a,\theta}$ is distributed according to an isotropic, symmetric, log-concave measure on $\R$. Also, if $f_\theta$ is the density of $\inr{a,\theta}$ then $\|f_\theta\|_{\infty} \leq c$.
\end{Lemma}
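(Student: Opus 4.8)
The plan is to split the statement into two essentially independent parts: first, that the one-dimensional push-forward $\inr{a,\theta}$ inherits all three structural properties (log-concavity, symmetry, isotropy), and second, the quantitative bound $\|f_\theta\|_\infty \ls c$. The first part is bookkeeping; the genuine content is the density bound, so that is where I would concentrate.

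For the structural part, I would first rotate coordinates so that $\theta = e_1$, which is harmless since the family of isotropic, symmetric, log-concave measures is invariant under orthogonal maps. Writing the density of $\mu$ as $e^{\phi}$ with $\phi$ concave, the density $f_\theta$ of $\inr{a,\theta}=a_1$ is the marginal $\int_{\R^{n-1}} e^{\phi(s,x_2,\dots,x_n)}\,dx_2\cdots dx_n$, and the fact that such a marginal is again log-concave is exactly Pr\'ekopa's theorem (a consequence of the Pr\'ekopa--Leindler inequality), which I would cite as known. Symmetry is immediate, since $a \overset{d}{=} -a$ forces $\inr{a,\theta}\overset{d}{=}-\inr{a,\theta}$; and isotropy on $\R$ means mean zero and variance one, where the mean vanishes by symmetry and $\E\inr{a,\theta}^2 = \|\theta\|_2^2 = 1$ by the isotropy of $\mu$ together with $\theta\in S^{n-1}$.

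For the density bound I would use the standard fact that for a one-dimensional log-concave density the product of its supremum and its standard deviation is an absolute constant; applied to $f_\theta$, whose variance is $1$, this gives $\|f_\theta\|_\infty \ls c$. To prove the underlying inequality, I would use that $f_\theta$ is symmetric and unimodal, so $M := \|f_\theta\|_\infty = f_\theta(0)$, and write $f_\theta = e^{-G}$ with $G$ convex, symmetric and minimized at $0$. Then the super-level sets $\{f_\theta \gr M e^{-u}\} = \{G-G(0) \ls u\}$ are symmetric intervals whose length $L(u)$ is a concave, nondecreasing function of $u\gr 0$, being twice the inverse of a convex increasing function. Via the layer-cake formula and the substitution $y = Me^{-u}$ (so that $e^{-u}\,du$ is a probability measure on $[0,\infty)$), the two constraints $\int f_\theta = 1$ and $\int x^2 f_\theta = 1$ become
\[
M\int_0^\infty L(u)\,e^{-u}\,du = 1, \qquad \frac{M}{12}\int_0^\infty L(u)^3\,e^{-u}\,du = 1 .
\]
The first identity gives $M = \left(\int_0^\infty L e^{-u}\,du\right)^{-1}$, while the second gives $\int_0^\infty L^3 e^{-u}\,du = 12\int_0^\infty L e^{-u}\,du$; feeding the latter into the reverse-H\"older inequality $\int_0^\infty L^3 e^{-u}\,du \ls C\left(\int_0^\infty L e^{-u}\,du\right)^3$ forces $\int_0^\infty L e^{-u}\,du \gr (12/C)^{1/2}$ and hence $M \ls (C/12)^{1/2}$.

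The main obstacle is this last reverse-H\"older step: unlike the rest of the argument it is not a formal manipulation but rests on a genuine convexity input (the extremal profile being $L$ affine). In a self-contained write-up I would either establish $\int_0^\infty L^3 e^{-u}\,du \ls C\left(\int_0^\infty L e^{-u}\,du\right)^3$ directly for nonnegative concave $L$ by comparison with an affine profile, or simply invoke Berwald's (Borell's) inequality; since the lemma is flagged as standard, citing the one-dimensional isotropic log-concave estimate of \cite{Gia,Bob,MP} is also legitimate. Everything else --- the rotation, Pr\'ekopa's theorem, and the verification of symmetry and isotropy --- is routine.
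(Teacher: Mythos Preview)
The paper does not prove this lemma at all: it introduces it as ``standard (see e.g.\ \cite{Gia,Bob,MP})'' and moves on. Your proposal therefore supplies substantially more than the paper does --- an actual argument rather than a citation --- and your closing remark that ``citing the one-dimensional isotropic log-concave estimate of \cite{Gia,Bob,MP} is also legitimate'' is in fact precisely the route the paper takes.

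Your sketch is correct. The structural half (Pr\'ekopa for marginals, symmetry by pushforward, unit variance from isotropy) is indeed routine. The layer-cake reduction is accurate: with $L(u)$ the length of $\{G-G(0)\le u\}$, the identities $M\int_0^\infty L\,e^{-u}\,du=1$ and $(M/12)\int_0^\infty L^3\,e^{-u}\,du=1$ hold as stated, and $L$ is concave nondecreasing on $[0,\infty)$ for the reason you give. The reverse H\"older step you isolate, $\int_0^\infty L^3 e^{-u}\,du \le C\bigl(\int_0^\infty L e^{-u}\,du\bigr)^3$ for nonnegative concave $L$, is genuine but not deep: the sharp constant is $C=6$, attained at $L(u)=\alpha u$ (equivalently, when $f_\theta$ is the variance-one Laplace density, giving $\|f_\theta\|_\infty=1/\sqrt{2}$), and an absolute constant follows in a couple of lines from the fact that $L(u)/u$ is nonincreasing, which yields $L(u)\le L(1)\max(1,u)$ and $\int_0^\infty L\,e^{-u}\,du \ge L(1)e^{-1}$. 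Your identification of this step as the only real content, and of Berwald/Borell as the relevant classical tool, is on target.
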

The desired small-ball estimate clearly follows from the lemma, since
$$
Pr(|\inr{a,\theta}| \leq \eps) = \int_{-\eps}^\eps f_\theta(t)dt \leq 2c\eps.
$$
Among the family of log-concave measures are volume measures on convex symmetric bodies (i.e. measures that have a constant density on the body and zero outside the body). Moreover, it can be shown (see, e.g. \cite{Gia}), that for every convex body $K \subset \R^n$ there is an invertible linear operator $G$ for which the volume measure on $GK$ is also isotropic.

Another example of log-concave measures on $\R^n$ are product measures of log-concave measures on $\R$. If $X$ is a real valued, symmetric,  log-concave random variable (i.e. with a log-concave density) with variance one, and $X_1,...,X_n$ are iid copies of $X$, then $a=(X_1,...,X_n)$ is an isotropic log-concave measure on $\R^n$.
Standard examples for log-concave measures on $\R$ are those with a density proportional to $\exp(-c_p|t|^p)$ for $p \geq 1$. And, of course, any measure with density proportional to $\exp(\phi)$ for a convex function $\phi$ is log-concave.

The second method, which we only outline, is based on the Paley-Zygmund argument.
\begin{Lemma} \label{Lemma:Paley-Zygmund} \cite{GdlP}
Let $Z$ be a random variable, set $0<p<q$ and put $c_{p,q}=\|Z\|_{L_p}/\|Z\|_{L_q}$. Then, for every $0 \leq \lambda \leq 1$,
$$
Pr(|Z| > \lambda \|Z\|_{L_p}) \geq [(1-\lambda^p)c_{p,q}^p]^{q/q-p},
$$
and in particular,
$\E|Z| \geq c_1 \|Z\|_{L_p}$, where $c_1$ depends only on $p,q$ and $c_{p,q}$.
\end{Lemma}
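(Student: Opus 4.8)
The plan is to reduce the claim to the classical Paley--Zygmund argument by passing to the nonnegative random variable $W=|Z|^p$. First I would note that $\E W = \|Z\|_{L_p}^p$ and that the event $\{|Z|>\lambda\|Z\|_{L_p}\}$ coincides with $\{W>\lambda^p\E W\}$. Hence it suffices to establish, for an arbitrary nonnegative $W$ and every $0\le\theta\le1$, a lower bound on $Pr(W>\theta\E W)$, which I will then specialize to $\theta=\lambda^p$.

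The heart of the argument is the decomposition $\E W = \E[W\,\mathbf{1}_{\{W\le\theta\E W\}}] + \E[W\,\mathbf{1}_{\{W>\theta\E W\}}]$. The first term is at most $\theta\E W$ by the definition of the event. For the second term I would apply H\"older's inequality with the conjugate exponents $q/p$ and $q/(q-p)$, giving $\E[W\,\mathbf{1}_{\{W>\theta\E W\}}] \le (\E W^{q/p})^{p/q}\,\bigl(Pr(W>\theta\E W)\bigr)^{(q-p)/q}$. Since $(\E W^{q/p})^{p/q} = (\E|Z|^q)^{p/q} = \|Z\|_{L_q}^p$, combining the two estimates yields $(1-\theta)\E W \le \|Z\|_{L_q}^p\,\bigl(Pr(W>\theta\E W)\bigr)^{(q-p)/q}$. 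Rearranging, raising to the power $q/(q-p)$, and using $\E W/\|Z\|_{L_q}^p = \|Z\|_{L_p}^p/\|Z\|_{L_q}^p = c_{p,q}^p$ produces $Pr(W>\theta\E W)\ge\bigl[(1-\theta)c_{p,q}^p\bigr]^{q/(q-p)}$. Substituting $\theta=\lambda^p$ recovers the stated tail bound.

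For the ``in particular'' statement I would run the elementary estimate $\E|Z| \ge \E[|Z|\,\mathbf{1}_{\{|Z|>\lambda\|Z\|_{L_p}\}}] \ge \lambda\|Z\|_{L_p}\,Pr(|Z|>\lambda\|Z\|_{L_p})$ and insert the tail bound just obtained, so that $\E|Z|\ge\lambda\bigl[(1-\lambda^p)c_{p,q}^p\bigr]^{q/(q-p)}\|Z\|_{L_p}$. Fixing any convenient value, for instance $\lambda=1/2$, gives $\E|Z|\ge c_1\|Z\|_{L_p}$ with $c_1 = 2^{-1}\bigl[(1-2^{-p})c_{p,q}^p\bigr]^{q/(q-p)}$, which depends only on $p$, $q$ and $c_{p,q}$ (note $1-2^{-p}>0$ for every $p>0$).

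I do not anticipate a genuine obstacle here: the argument is short and self-contained. The only point requiring care is keeping the conjugate H\"older exponents $q/p$ and $q/(q-p)$ straight, so that the powers of $c_{p,q}$ and of the tail probability emerge exactly as in the statement, and choosing a fixed $\lambda$ in the final step so as to record a clean, parameter-dependent constant.
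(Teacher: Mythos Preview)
Your argument is correct and is the standard Paley--Zygmund proof: pass to $W=|Z|^p$, split $\E W$ on the event $\{W>\theta\E W\}$, bound the complementary piece trivially and the remaining piece by H\"older with exponents $q/p$ and $q/(q-p)$, and rearrange. The deduction of the ``in particular'' via $\E|Z|\ge\lambda\|Z\|_{L_p}\,Pr(|Z|>\lambda\|Z\|_{L_p})$ with a fixed $\lambda$ is also fine.

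There is nothing to compare against here: the paper does not supply a proof of this lemma but simply cites it from \cite{GdlP}. Your write-up is exactly the argument one finds in that reference, so it matches the intended proof.
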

We will use the lemma for $p=2$ and $q>2$. Assume that $(X_i)_{i=1}^n$ are iid copies of a symmetric, variance $1$ random variable and set $a=(X_1,...,X_n)$. If $v,w \in S^{n-1}$, then a straightforward computation shows that
\begin{align} \label{eq:moment-computation-1}
\E |\inr{a,v} \inr{a,w}|^2 = & \sum_{i \not = j} v_i^2 w_j^2 + 2 \sum_{i \not = j} (v_i w_i) (v_j w_j) + \E X^4 \sum_{i=1}^N v_i^2 w_i^2 \nonumber
\\
= & \sum_{i \not = j} \left(v_i^2 w_j^2+ (v_i w_i)  (v_j w_j)\right) + \inr{v,w}^2 + (\E X^4 -1)\sum_{i=1}^n v_i^2 w_i^2.
\end{align}
Using the fact that $\|v\|_2=\|w\|_2=1$, \eqref{eq:moment-computation-1} reduces to
\begin{equation} \label{eq:moment-computation-2}
\E |\inr{a,v} \inr{a,w}|^2 =1+2\inr{v,w}^2-2\sum_{i=1}^n v_j^2w_j^2 + (\E X^4-1)\sum_{i=1}^n v_i^2 w_i^2.
\end{equation}

Consider two cases. First, if $\sum_{i=1}^n v_j^2w_j^2 \leq 1/10$, and since $\E X^4 \geq (\E X^2)^2=1$,
then by \eqref{eq:moment-computation-2},
$$
\E |\inr{a,v} \inr{a,w}|^2 \geq 1/2.
$$
On the other hand, if the reverse inequality holds, then using
$$
\sum_{i \not = j} \left(v_i^2 w_j^2+ (v_i w_i)  (v_j w_j)\right) + \inr{v,w}^2 = \sum_{i > j} (v_iw_j+v_jw_i)^2 + \inr{v,w}^2 \geq 0,
$$
and applying \eqref{eq:moment-computation-1},
$$
\E |\inr{a,v} \inr{a,w}|^2 \geq (\E X^4 -1)\sum_{i=1}^n v_i^2 w_i^2 \geq (\E X^4 -1)/10.
$$
\begin{Corollary} \label{cor-small-ball-moments}
Let $X$ be a symmetric, variance $1$ random variable, with a finite $L_{2q}$ moment for some $q>2$. If $a=(X_1,...,X_n)$ then
$$
\inf_{v,w \in S^{n-1}} \kappa(u,v) \geq c(\E X^4 -1)^{1/2},
$$
where $c$ depends on $q$ and on $\|X\|_{L_{2q}}$.
\end{Corollary}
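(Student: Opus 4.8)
The plan is to fix $v,w \in S^{n-1}$, write $Z = \inr{a,v}\inr{a,w}$ so that $\kappa(v,w) = \E|Z| = \|Z\|_{L_1}$, and to bound $\|Z\|_{L_1}$ from below uniformly in $v,w$. The route is to first control the second moment $\|Z\|_{L_2}$ from below, and then to transfer this to the first moment via the Paley--Zygmund inequality (Lemma~\ref{Lemma:Paley-Zygmund}) with $p=2$ and the exponent $q>2$ supplied by the hypothesis. The $L_2$ bound is exactly the content of the moment computation \eqref{eq:moment-computation-1}--\eqref{eq:moment-computation-2} and the two-case analysis preceding the statement, so the real work lies in making the Paley--Zygmund step uniform.

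First I would consolidate the two cases into a single uniform lower bound of the form $\|Z\|_{L_2}^2 = \E|Z|^2 \gr c'(\E X^4 - 1)$, valid for all $v,w \in S^{n-1}$. Indeed, when $\sum_i v_i^2 w_i^2 \ls 1/10$ the computation gives $\E|Z|^2 \gr 1/2$, while when $\sum_i v_i^2 w_i^2 > 1/10$ it gives $\E|Z|^2 \gr (\E X^4-1)/10$. Since $q>2$ we have $\E X^4 - 1 \ls \E X^4 = \|X\|_{L_4}^4 \ls \|X\|_{L_{2q}}^4$, so $\E X^4 - 1$ is bounded above by a constant depending only on $\|X\|_{L_{2q}}$; choosing $c' = \min\{1/10,\, 1/(2\|X\|_{L_{2q}}^4)\}$ the first case also yields $\E|Z|^2 \gr c'(\E X^4-1)$. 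This is the step that produces the factor $(\E X^4-1)^{1/2}$ in the conclusion and already fixes the dependence of the final constant on $\|X\|_{L_{2q}}$.

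It then remains to pass from $\|Z\|_{L_2}$ to $\|Z\|_{L_1} = \E|Z|$. By Lemma~\ref{Lemma:Paley-Zygmund}, $\E|Z| \gr c_1 \|Z\|_{L_2}$ with $c_1$ a function of the ratio $c_{2,q} = \|Z\|_{L_2}/\|Z\|_{L_q}$; combined with the previous paragraph this gives $\kappa(v,w) \gr c_1 \sqrt{c'}\,(\E X^4-1)^{1/2}$, and taking the infimum over $v,w$ finishes the argument \emph{provided $c_1$ can be bounded below uniformly}. This is the crux, and I expect it to be the main obstacle: one needs the reverse Hölder (moment-comparison) inequality $\|Z\|_{L_q} \lesssim \|Z\|_{L_2}$ for the degree-two chaos $Z = \inr{a,v}\inr{a,w}$, with implied constant depending only on $q$ and $\|X\|_{L_{2q}}$. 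The naive estimate $\|Z\|_{L_q} \ls \|\inr{a,v}\|_{L_{2q}} \|\inr{a,w}\|_{L_{2q}} \lesssim_q \|X\|_{L_{2q}}^2$ (Cauchy--Schwarz together with Rosenthal's inequality) is \emph{not} sufficient, since it ignores the cancellation that makes $\|Z\|_{L_2}$ small in the near-Rademacher regime and would leave a worse power of $(\E X^4-1)$. To obtain the genuine equivalence I would split $Z = \inr{v,w} + \sum_i v_i w_i (X_i^2 - 1) + \sum_{i \ne j} v_i w_j X_i X_j$ and estimate the $L_q$ norm of each piece by its $L_2$ norm: the off-diagonal term is a decoupled chaos to which hypercontractivity applies, while the diagonal sum of independent mean-zero variables is handled by Rosenthal's inequality. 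The delicate point --- and where the finite $L_{2q}$ hypothesis must really be used --- is the diagonal term $\sum_i v_i w_i (X_i^2 - 1)$, whose higher moments are governed by $\|X^2 - 1\|_{L_q}$; controlling this uniformly in $v,w$ is the heart of the matter.
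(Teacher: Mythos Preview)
Your high-level plan --- lower-bound $\|Z\|_{L_2}$ via \eqref{eq:moment-computation-1}--\eqref{eq:moment-computation-2} and then transfer to $\|Z\|_{L_1}$ by Paley--Zygmund with $p=2$ --- is exactly the paper's. The divergence is in the $L_q$ upper bound on $Z$: the paper uses precisely the ``naive estimate'' you reject, bounding $\|Z\|_{L_q} \leq \|\inr{a,v}\|_{L_{2q}}\|\inr{a,w}\|_{L_{2q}} \lesssim_q \|X\|_{L_{2q}}^2$ via Cauchy--Schwarz and a Rosenthal-type inequality for the linear forms, and then invokes Lemma~\ref{Lemma:Paley-Zygmund} directly. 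Your objection is well taken: with that bound the ratio $c_{2,q}$, and hence the Paley--Zygmund constant, carries a factor $(\E X^4-1)^{1/2}$, so the argument as written delivers a higher power of $(\E X^4-1)$ than the stated $1/2$ unless the constant $c$ is permitted to degenerate as $\E X^4\to 1$. For the role the corollary plays in the paper --- a qualitative lower bound on $\kappa$ whenever $\E X^4>1$ --- this is immaterial, and the paper is simply informal about the exponent.

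Your route to the sharp exponent via a genuine moment comparison $\|Z\|_{L_q}\lesssim\|Z\|_{L_2}$ is more ambitious but remains a sketch: the diagonal piece $\sum_i v_iw_i(X_i^2-1)$ that you yourself flag as ``the heart of the matter'' is left open, and an $L_q$--$L_2$ comparison for it with constant depending only on $q$ and $\|X\|_{L_{2q}}$ is not obvious under a bare finite-moment hypothesis. In short, you have correctly diagnosed a looseness in the paper's argument and proposed a natural refinement, but you have not closed the gap either; if the task is to reproduce the paper's proof, the naive Cauchy--Schwarz/Rosenthal bound on $\|Z\|_{L_q}$ is what is expected.
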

Observe that the two assumptions we make are not very restrictive, since we assume throughout that $a$ is isotropic and $L$-subgaussian. Hence, if $a=(X_1,...,X_n)$, then  $\|X\|_{L_q} \leq Lq$ for every $q \geq 2$ (see Section \ref{sec:prelim}). Also note that for any random variable $X$, $\E X^4 \geq (\E X^2)^2=1$, so that the square-root is well defined.

\proof
Assume that $X \in L_{2q}$ for some $q>2$. Observe that if $v \in S^{n-1}$ then for every $2 \leq r \leq 2q$, $\|\inr{a,v}\|_{L_r} \leq c_r \|X\|_{L_r}$. Indeed, by a Rosenthal type inequality (see, e.g. \cite{GdlP}, Section 1.5),
$$
\|\sum_{i=1}^n v_i X_i \|_{L_r} \leq c_r \max \left\{ \left(\sum_{i=1}^n v_i^2 \E X_i^2 \right)^{1/2}, \left(\sum_{i=1}^n v_i^r \E|X_i|^r \right)^{1/r} \right\}.
$$
Since $\|X\|_{L_r} \geq \|X\|_{L_2}$ and $\|v\|_r \leq \|v\|_2=1$, the claim follows.

Therefore, $\sup_{v \in S^{n-1}} \|\inr{a,v}\|_{L_{2q}} \leq c_q\|X\|_{L_{2q}}$, and thus,
$$
\| \inr{a,v} \inr{a,w} \|_{L_q} \leq \|\inr{a,v}\|_{L_{2q}} \|\inr{a,w}\|_{L_{2q}} \leq_q \|X\|_{L_{2q}}^2.
$$
Let $Z=\inr{a,v} \inr{a,w}$. Using the notation of Lemma \ref{Lemma:Paley-Zygmund}, $c_{2,q} \gtrsim (\E |X|^4 -1 )^{1/2}/\|X\|_{L_{2q}}^2$. Hence, for every $v,w \in S^{n-1}$, $\E |\inr{a,v} \inr{a,w}| \geq c(\E |X|^4 -1 )^{1/2}$, where $c$ depends only on $q$ and on $\|X\|_{L_{2q}}$, as claimed.
\endproof

Observe that if $\E X^4=1$, it is possible that $\E |\inr{a,v} \inr{a,w}|=0$, even for $v,w$ of the specific form one would like to control -- namely, $v=(s+t)/\|s+t\|_2)$ and $w=(s-t)/\|s-t\|_2$ for $s \not = \pm t$. Indeed, let $X$ be a symmetric, $\{-1,1\}$-valued random variable (and in particular, it is $L$-subgaussian as well). Let $(e_i)_{i=1}^n$ be the standard basis in $\R^n$ and set $s=e_1$, $t=e_2$. It is straightforward to verify that in this case, $\E |\inr{a,v} \inr{a,w}|=0$ with probability $1$, and therefore, the assumption on $\E X^4$ can not be relaxed if one is interested in a uniform bound.

\subsection{Examples}
Let us turn to a few special cases of Theorem~\ref{thm:main-random}. To that end, explicit expressions for $\rho_{T,N}$ are required for the sets of interest.

\subsubsection{Entire Space $T=\R^n$}
If $T=\R^n$ then $T_+=T_-=S^{n-1}$, where $S^{n-1}$ is the Euclidean unit sphere in $\R^n$. Therefore,
$$
E=\E \sup_{x \in S^{n-1}} \sum_{i=1}^n g_i x_i = \E(\sum_{i=1}^n g_i^2)^{1/2} \sim \sqrt{n},
$$
implying that
$$
\rho_{\R^{n},N} \lesssim \left(\sqrt{\frac{n}{N}}+\frac{n}{N}\right).
$$
\begin{Corollary} \label{cor:examples-entire-space}
For every $L \geq 1$ there are constants $c_1$, $c_2$ and $c_3$ that depend only on $L$ and for which the following holds.
If $\inf_{v,w \in S^{n-1}} \kappa(v,w) \geq \kappa$, $u \geq c_1$ and $N \geq c_2 u^3 n/\kappa^2$, then with probability at least $1-2\exp(-c_3u n)$, for every $s,t \in \R^n$,
$$
\|\phi(As)-\phi(At)\|_1 \geq \frac{\kappa}{2} \|s-t\|_2 \|s+t\|_2.
$$
\end{Corollary}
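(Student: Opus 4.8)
The plan is to specialize Theorem~\ref{thm:main-random} to $T=\R^n$, using the computation of $E$ and $\rho_{\R^n,N}$ just carried out. As recorded above, for $T=\R^n$ one has $T_-=T_+=S^{n-1}$ (every unit vector $v$ equals $(v-0)/\|v-0\|_2$ and $(v+0)/\|v+0\|_2$), so $\sup_{x\in S^{n-1}}\sum_i g_i x_i=\|g\|_2$ and hence $E=\E\|g\|_2\sim\sqrt n$, which gives
\[
\rho_{\R^n,N}\lesssim \sqrt{\tfrac nN}+\tfrac nN .
\]

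Next I would bound the curvature term $\kappa(s-t,s+t)$ from below uniformly. For any $s,t\in\R^n$ with $s\neq\pm t$, both $(s-t)/\|s-t\|_2$ and $(s+t)/\|s+t\|_2$ lie on $S^{n-1}$, so the hypothesis $\inf_{v,w\in S^{n-1}}\kappa(v,w)\ge\kappa$ gives $\kappa(s-t,s+t)\ge\kappa$. Inserting this and the bound on $\rho_{\R^n,N}$ into Theorem~\ref{thm:main-random}, on the event of that theorem one has, for all $s,t$,
\[
\|\phi(As)-\phi(At)\|_1\ge \|s-t\|_2\|s+t\|_2\bigl(\kappa-c_3u^3\rho_{\R^n,N}\bigr),
\]
so it remains only to pick $N$ large enough that $c_3u^3\rho_{\R^n,N}\le\kappa/2$, after which the parenthesis is at least $\kappa/2$ and the claimed inequality follows.

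To finish I would check the threshold and the probability. Once $N\gtrsim n$ the summand $n/N$ is dominated by $\sqrt{n/N}$, so the binding constraint is $c_3u^3\sqrt{n/N}\le\kappa/2$; taking $N\ge c_2u^3 n/\kappa^2$ with $c_2$ large enough (depending on $L$ through $c_3$ and the implied constant in $E\sim\sqrt n$) secures it, and this choice also forces $N\gtrsim n\sim E^2$. Hence $\min\{N,E^2\}\sim n$ in the probability estimate of Theorem~\ref{thm:main-random}, and since $u\ge c_1\ge1$ we may replace $u^2n$ by $un$ in the exponent to obtain probability at least $1-2\exp(-c_3un)$. For the last assertion, the standard Gaussian measure is isotropic and $L$-subgaussian with an absolute $L$ and satisfies \eqref{eq:small-ball}, so by Lemma~\ref{lemma:small-ball} its $\inf_{v,w\in S^{n-1}}\kappa(v,w)$ is bounded below by an absolute constant, making all the constants absolute.

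Apart from these substitutions the argument is routine; the only point that needs genuine care is the quantitative matching in the last paragraph — fixing the constant, and in particular the correct power of $u$, in the threshold $N\ge c_2u^3n/\kappa^2$ so that $c_3u^3\rho_{\R^n,N}\le\kappa/2$ holds while the probability stays in the stated form.
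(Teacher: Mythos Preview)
Your approach is essentially identical to the paper's: plug $E\sim\sqrt{n}$ and the resulting bound on $\rho_{\R^n,N}$ into Theorem~\ref{thm:main-random}, use the hypothesis on $\kappa$ to lower bound the curvature term, and choose $N$ so that $c_3u^3\rho_{\R^n,N}\le\kappa/2$; the paper's entire proof is the single sentence that with this choice of $N$, $cu^3\rho_{T,N}$ is of order $\kappa/2$. Your flagging of the exact power of $u$ in the threshold is a fair point (the $\sqrt{n/N}$ term naively forces $u^6$ rather than $u^3$) and the paper does not address it; your final paragraph about the Gaussian case is extraneous here, since this corollary has no such ``last assertion'' --- you may be thinking of Corollary~\ref{cor-simple-intro}.
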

The corollary follows from the fact that with this choice of $N$, $cu^3\rho_{T,N}$ is proportional to $\kappa/2$.

When $\kappa$ is given by a constant, independent of the dimension $n$, Corollary~\ref{cor:examples-entire-space} implies that it is sufficient to choose $N \sim n$ to ensure stable recovery with high probability.

\subsubsection{Sparse Vectors}
Let $T=S_k$, the set of $k$-sparse vectors in $\R^n$, put
$U_k=\{x \in S^{n-1} : \|x\|_0 \leq k \}$ and observe that $T_+,T_- \subset U_{2k}$. Therefore,
$$
E=\E \sup_{x \in U_{2k}} \sum_{i=1}^n g_i x_i = \E\left(\sum_{i=1}^{2k} (g_i^*)^2\right)^{1/2},
$$
where $(v_i^*)_{i=1}^n$ is a monotone rearrangement of $(|v_i|)_{i=1}^n$. It is standard to check (see, e.g., \cite{GLMP}) that there is an absolute constant $c$ such that for every $1 \leq k \leq n/4$,
$$
\E \left(\sum_{i=1}^{2k} (g_i^*)^2 \right)^{1/2} \leq c \sqrt{k\log(en/k)}.
$$
Therefore,
$$
\rho_{S_k,N} \lesssim \left(\sqrt{\frac{k\log(en/k)}{N}}+ \frac{k\log(en/k)}{N}\right).
$$
\begin{Corollary} \label{cor:examples-sparse}
For every $L \geq 1$ there are constants $c_1$, $c_2$ and $c_3$ that depend only on $L$ and for which the following holds.
If $\inf_{v,w \in U_k} \kappa(v,w) \geq \kappa$, $u \geq c_1$ and $N \geq c_2 u^3 k\log(en/k)/\kappa^2$, then with probability at least $1-2\exp(-c_3u k\log(en/k))$, for every $s,t \in S_k$,
$$
\|\phi(As)-\phi(At)\|_1 \geq \frac{\kappa}{2} \|s-t\|_2 \|s+t\|_2.
$$
\end{Corollary}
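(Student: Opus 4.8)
The plan is to specialize the general stability estimate of Theorem~\ref{thm:main-random} (equivalently, Proposition~\ref{prop:main-random}) to the set $T=S_k$ and then feed it the two inputs that the theorem leaves unevaluated: the error term $\rho_{S_k,N}$ and a uniform lower bound on $\kappa(s-t,s+t)$. Once both are controlled, the theorem already delivers
$$
\|\phi(As)-\phi(At)\|_1 \geq \|s-t\|_2\|s+t\|_2\left(\kappa(s-t,s+t)-c_3u^3\rho_{S_k,N}\right),
$$
so that it remains only to force the bracket to be at least $\kappa/2$ by taking $N$ large enough.

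For the complexity term I would use the estimate already recorded for sparse vectors: since $T_+,T_-\subset U_{2k}$, one has $E\leq c\sqrt{k\log(en/k)}$, and hence
$$
\rho_{S_k,N}\lesssim \sqrt{\frac{k\log(en/k)}{N}}+\frac{k\log(en/k)}{N}.
$$
For the $\kappa$ term, note that if $s,t\in S_k$ then $s-t$ and $s+t$ are each supported on at most $2k$ coordinates, so the normalized vectors $(s-t)/\|s-t\|_2$ and $(s+t)/\|s+t\|_2$ lie in $U_{2k}$. The hypothesized uniform lower bound then applies and gives $\kappa(s-t,s+t)\geq\kappa$ for every admissible pair. (Strictly, the index set that matters is $U_{2k}$ rather than $U_k$; this merely rescales $k$ by a factor of two and is absorbed into the constants.)

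With these two inputs in hand, the remaining step is purely a matter of calibrating $N$. I would choose $N$ of the stated order $c_2u^3k\log(en/k)/\kappa^2$ so that the error term $c_3u^3\rho_{S_k,N}$ drops below $\kappa/2$; under such a choice the second summand $k\log(en/k)/N$ is dominated by the first, and the first is of order at most $\kappa/2$. Finally, since this $N$ exceeds $E^2\sim k\log(en/k)$, we have $\min\{N,E^2\}=E^2\sim k\log(en/k)$, so the probability bound $1-2\exp(-c_2u^2\min\{N,E^2\})$ of the theorem becomes the claimed $1-2\exp(-c_3uk\log(en/k))$. There is no genuine analytic obstacle here, as all of the heavy lifting — the empirical process estimate of Theorem~\ref{thm:main-emp-est} and the Gaussian-width computation $\E\sup_{x\in U_{2k}}\sum_i g_ix_i\lesssim\sqrt{k\log(en/k)}$ — has already been carried out; the only point demanding care is the bookkeeping of the powers of $u$ and $\kappa$ and the interplay between the two terms of $\rho_{S_k,N}$ when verifying that the error term indeed falls below $\kappa/2$.
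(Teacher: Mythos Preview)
Your proposal is correct and matches the paper's own (implicit) argument essentially line for line: specialize Theorem~\ref{thm:main-random} to $T=S_k$, plug in the bound $\rho_{S_k,N}\lesssim \sqrt{k\log(en/k)/N}+k\log(en/k)/N$ obtained from $T_\pm\subset U_{2k}$, invoke the hypothesis $\kappa(s-t,s+t)\geq\kappa$, and then choose $N$ so that $c_3u^3\rho_{S_k,N}\leq \kappa/2$. You also correctly flag the $U_k$ versus $U_{2k}$ mismatch in the hypothesis and correctly note that $\min\{N,E^2\}=E^2\sim k\log(en/k)$ yields the stated probability (the drop from $u^2$ to $u$ in the exponent is harmless since $u\geq c_1\geq 1$).
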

When $\kappa$ is an absolute constant, Corollary~\ref{cor:examples-entire-space} implies that it is sufficient to choose $N \sim k\log(en/k)$ to ensure stable recovery with high probability.

\subsubsection{Finite Set}
Assume that $T$ is a finite set. Then $T_+,T_- \subset S^{n-1}$ are of cardinality at most $|T|^2$. A straightforward application of the union bound to each random variable $\sum_{i=1}^n v_i g_i$ shows that if $V \subset \R^n$ is a finite set, then
$$
\E \sup_{v \in V} \sum_{i=1}^n g_i v_i \lesssim \sqrt{\log |V|} d(V).
$$
Therefore, $E \lesssim \sqrt{\log |T|^2} \sim \sqrt{\log{|T|}}$, implying that
$$
\rho_{T,N} \lesssim \left(\sqrt{\frac{\log |T|}{N}} +  \frac{\log |T|}{N}\right).
$$
\begin{Corollary} \label{cor:examples-finite}
For every $L \geq 1$ there are constants $c_1$, $c_2$ and $c_3$ that depend only on $L$ and for which the following holds.
If $\inf_{v,w \in T_+} \kappa(v,w) \geq \kappa$, $u \geq c_1$ and $N \geq c_2 u^3 \log|T|/\kappa^2$, then with probability at least $1-2\exp(-c_3u \log|T|)$, for every $s,t \in T$,
$$
\|\phi(As)-\phi(At)\|_1 \geq \frac{\kappa}{2} \|s-t\|_2 \|s+t\|_2.
$$
\end{Corollary}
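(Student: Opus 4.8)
The plan is to obtain the corollary as a direct specialization of Theorem~\ref{thm:main-random}, so that the only genuine computation is the value of the Gaussian complexity $E$ from \eqref{eq:rhoc} when $T$ is finite.

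First I would record the geometry of $T_+$ and $T_-$. Since $T$ is finite, the normalized difference and sum sets defined in \eqref{eq:tpm} are finite subsets of the unit sphere $S^{n-1}$, each of cardinality at most $|T|^2$, namely one vector for each ordered pair of points of $T$; in particular $d(T_+)=d(T_-)=1$. To bound $E$ I would use the elementary maximal inequality for a Gaussian process indexed by a finite set: for any finite $V\subset\R^n$, a union bound over the subgaussian tails of the variables $\sum_{i=1}^n g_i v_i$ (each of variance $\|v\|_2^2\le d(V)^2$) gives $\E\sup_{v\in V}\sum_{i=1}^n g_i v_i \lesssim \sqrt{\log|V|}\,d(V)$. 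Applied to $V=T_+$ and $V=T_-$ this yields $E\lesssim\sqrt{\log|T|^2}\sim\sqrt{\log|T|}$, and hence, by \eqref{eq:rho}, $\rho_{T,N}\lesssim\sqrt{\log|T|/N}+\log|T|/N$.

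With this in hand I would invoke Theorem~\ref{thm:main-random} unchanged. The hypothesis $\inf_{v,w\in T_+}\kappa(v,w)\ge\kappa$ guarantees that $\kappa(s-t,s+t)\ge\kappa$ for every $s\neq\pm t$ in $T$, since the relevant normalized difference and sum lie in $T_-$ and $T_+$. Thus, with probability at least $1-2\exp(-c_2u^2\min\{N,E^2\})$, simultaneously for all $s,t\in T$,
$$
\|\phi(As)-\phi(At)\|_1\ge\|s-t\|_2\|s+t\|_2\bigl(\kappa-c_3u^3\rho_{T,N}\bigr).
$$
The lower bound on $N$ is then calibrated precisely so that $c_3u^3\rho_{T,N}\le\kappa/2$ (using $\kappa\le1$ together with the estimate for $\rho_{T,N}$ above), exactly as in Corollaries~\ref{cor:examples-entire-space} and \ref{cor:examples-sparse}; this turns the right-hand side into $(\kappa/2)\|s-t\|_2\|s+t\|_2$. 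For the confidence, the same lower bound on $N$ forces $N\gtrsim\log|T|\sim E^2$, so $\min\{N,E^2\}\sim\log|T|$, and since $u^2\ge u$ for $u\ge c_1$ the probability is at least the claimed $1-2\exp(-c_3u\log|T|)$.

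I do not expect a real obstacle here: the statement is an application of the already-proved Theorem~\ref{thm:main-random}, and the one nontrivial input, the mean width of a finite subset of the sphere, is the classical $\sqrt{\log|V|}\,d(V)$ bound for the maximum of finitely many Gaussians. The only point needing care is matching the numerical constants in the threshold on $N$ so that the error term $c_3u^3\rho_{T,N}$ is driven below $\kappa/2$, which is identical to the calibration already carried out in the two preceding corollaries.
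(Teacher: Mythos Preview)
Your proposal is correct and follows essentially the same route as the paper: bound $|T_\pm|\le|T|^2$, use the elementary $\sqrt{\log|V|}\,d(V)$ maximal inequality for finitely many Gaussians to get $E\lesssim\sqrt{\log|T|}$, and then specialize Theorem~\ref{thm:main-random} with the same calibration of $N$ as in the other corollaries. Nothing further is needed.
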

In this case, with constant $\kappa$, $N \sim \log |T|$ measurements ensure stable recovery with high probability.

\subsubsection{Block Sparse Vectors}
We next treat the case in which $T=S_k^d$ consists of block sparse vectors of size $d$ \cite{EKB10,EM09a}.
Let $(I_\ell)$, $\ell=1,...,n/d$ be a decomposition of $\{1,...,n\}$ to disjoint blocks of cardinality $d$. Set $W_k$ to be the set of vectors in the unit sphere, supported on at most $k$ blocks. Then $T_+,T_- \subset W_{2k}$, and it remains to estimate
$$
E=\E \sup_{v \in W_{2k}} \sum_{i=1}^n g_i v_i.
$$
\begin{Lemma} \label{lemma:ent-estimate-block}
There exist absolute constants $c_1$ and $c_2$ for which the following holds. For every $0<\eps<1/2$,
$$
\log N(W_k,\eps) \leq c_1 \left(k\log(en/(dk))+dk \log(5/\eps)\right).
$$
Therefore,
$$
\E \sup_{v \in V} \sum_{i=1}^n g_i v_i  \leq c_2 \sqrt{k} \left(\sqrt{\log(en/(dk))}+\sqrt{d}\right).
$$
\end{Lemma}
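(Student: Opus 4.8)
The plan is to prove the covering-number estimate first and then feed it into the upper bound of Proposition~\ref{prop:rho-cover} (Dudley's inequality).

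For the covering estimate, I would decompose $W_k$ according to its block support. Since there are $n/d$ blocks in total and each $v \in W_k$ is supported on at most $k$ of them, the number of admissible supports is at most $\binom{n/d}{k} \leq (en/(dk))^k$, whose logarithm contributes the term $k\log(en/(dk))$. For a fixed choice of $k$ blocks, the vectors of $W_k$ supported on them lie in the unit ball of a $dk$-dimensional Euclidean space, which admits an $\eps$-net of cardinality at most $(5/\eps)^{dk}$ by the standard volume-comparison argument. Taking the union of these nets over all admissible supports yields an $\eps$-net of $W_k$ whose logarithmic cardinality is at most $c_1\bigl(k\log(en/(dk)) + dk\log(5/\eps)\bigr)$, which is exactly the first claim.

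For the Gaussian complexity I would note that $d(W_k)=1$, since $W_k \subset S^{n-1}$, and invoke Dudley's bound to get $\E\sup_{v \in W_k}\sum_{i=1}^n g_i v_i \lesssim \int_0^1 \sqrt{\log N(W_k,\eps)}\,d\eps$. Substituting the covering estimate and using $\sqrt{a+b}\leq \sqrt{a}+\sqrt{b}$ splits the integral into two pieces. The first involves the $\eps$-independent quantity $\sqrt{k\log(en/(dk))}$ and integrates to $\sqrt{k\log(en/(dk))}$. The second is $\sqrt{dk}\int_0^1 \sqrt{\log(5/\eps)}\,d\eps$. Combining the two pieces and factoring out $\sqrt{k}$ gives the bound $\sqrt{k}\bigl(\sqrt{\log(en/(dk))}+\sqrt{d}\bigr)$ up to an absolute constant.

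The only point that needs a moment of care is verifying that $\int_0^1 \sqrt{\log(5/\eps)}\,d\eps$ is a finite absolute constant rather than a source of an extra logarithmic factor; this holds because the singularity of $\sqrt{\log(1/\eps)}$ at $0$ is integrable, as a change of variables reduces the integral to a convergent Gamma-type integral. Granting this, the $\log(5/\eps)$ term is absorbed into the constant $c_2$. Overall there is no genuine obstacle here: the argument is the standard support-counting-plus-Dudley-integral computation, and the main thing to track is precisely that the $\log(5/\eps)$ contribution integrates to a constant, leaving the clean $\sqrt{k}(\sqrt{\log(en/(dk))}+\sqrt d)$ scaling.
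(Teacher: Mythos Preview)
Your proposal is correct and follows essentially the same approach as the paper: decompose $W_k$ according to block support, combine the $\binom{n/d}{k}$ count with the volumetric $(5/\eps)^{dk}$ estimate on each fixed-support sphere, and then plug into Dudley's integral (Proposition~\ref{prop:rho-cover}). The paper is terser about the Dudley integral step, merely invoking Proposition~\ref{prop:rho-cover} together with the monotonicity of $N(T,\eps)$, whereas you spell out the split $\sqrt{a+b}\le\sqrt a+\sqrt b$ and the finiteness of $\int_0^1\sqrt{\log(5/\eps)}\,d\eps$; both arrive at the same bound.
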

\proof
Let $I_J=\{i \in I_j, \ j \in J\}$ and observe that
$$
W_k = \bigcup_{\{J \subset \{1,...,n\}:|J|=k\}} S^{I_J},
$$
where for every $I \subset \{1,...,n\}$, $S^{I}$ is the Euclidean sphere on the coordinates $I$.
Clearly, there are at most $\binom{n/d}{k}$ such subsets $J$. Using a standard volumetric estimate (see, e.g., \cite{Pis,LAMA}), for every fixed set $J$ and every $\eps<1/2$, one needs at most $(5/\eps)^{d|J|}=(5/\eps)^{dk}$ Euclidean balls of radius $\eps$ to cover $S^{I_J}$. Therefore, for every $0<\eps<1/2$,
$$
\log N(W_k,\eps B_2^n) \lesssim  k\log(en/(dk)) + dk \log(5/\eps),
$$
as claimed.

The second part of the claim is an immediate consequence of Proposition~\ref{prop:rho-cover} and the fact that $N(T,\eps)$ is a decreasing function of $\eps$.
\endproof

\begin{Corollary} \label{cor:examples-block-sparse}
For every $L \geq 1$ there are constants $c_1$, $c_2$ and $c_3$ that depend only on $L$ and for which the following holds.
If $\inf_{v,w \in W_k} \kappa(v,w) \geq \kappa$, $u \geq c_1$ and $N \geq c_2 u^3 (k\log(en/(dk)) + dk)/\kappa^2$, then with probability at least $1-2\exp(-c_3u (k\log(en/(dk)) + dk))$, for every $s,t \in S_k^d$,
$$
\|\phi(As)-\phi(At)\|_1 \geq \frac{\kappa}{2} \|s-t\|_2 \|s+t\|_2
$$
\end{Corollary}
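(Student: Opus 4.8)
The plan is to prove Corollary~\ref{cor:examples-block-sparse} exactly as the earlier examples (Corollaries~\ref{cor:examples-entire-space}, \ref{cor:examples-sparse} and~\ref{cor:examples-finite}) were handled: reduce to Theorem~\ref{thm:main-random}, and then choose $N$ large enough that the error term $c_3 u^3 \rho_{T,N}$ is dominated by $\kappa/2$. The only problem-specific input is the value of the complexity parameter $E$ from \eqref{eq:rhoc} for the block-sparse set, and this is precisely what Lemma~\ref{lemma:ent-estimate-block} supplies.

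First I would record the inclusion $T_+,T_-\subset W_{2k}$. If $s,t\in S_k^d$ are each supported on at most $k$ blocks, then both $s-t$ and $s+t$ are supported on at most $2k$ blocks; after normalizing, each of $(s-t)/\|s-t\|_2$ and $(s+t)/\|s+t\|_2$ lies on the unit sphere restricted to those blocks, hence in $W_{2k}$. Consequently the two suprema defining $E$ are taken over subsets of $W_{2k}$, so that $E\le \E\sup_{v\in W_{2k}}\sum_{i=1}^n g_i v_i$.

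Next I would apply Lemma~\ref{lemma:ent-estimate-block} with $2k$ in place of $k$. Since $\log(en/(2dk))\le \log(en/(dk))$, this gives $E\lesssim \sqrt{k}\bigl(\sqrt{\log(en/(dk))}+\sqrt{d}\bigr)$, and hence $E^2\lesssim k\log(en/(dk))+dk$. Writing $m=k\log(en/(dk))+dk$ and substituting into \eqref{eq:rho} yields $\rho_{S_k^d,N}\lesssim \sqrt{m/N}+m/N$. With the hypothesis $\inf_{v,w}\kappa(v,w)\ge\kappa$ in force (so that $\kappa(s-t,s+t)\ge\kappa$ for all admissible $s,t$, exactly as in Corollary~\ref{cor:examples-sparse}), it then remains to pick $N$ so that $c_3 u^3\rho_{S_k^d,N}\le\kappa/2$. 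For $N\gtrsim u^3 m/\kappa^2$ both terms of $\rho_{S_k^d,N}$ are suppressed; moreover this choice forces $\min\{N,E^2\}\sim m$, so the probability estimate of Theorem~\ref{thm:main-random} becomes $1-2\exp(-c_3 u m)$. Combining these gives the stated conclusion.

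The only genuinely nontrivial ingredient is the entropy estimate for $W_{2k}$, which is already carried out in Lemma~\ref{lemma:ent-estimate-block}: it balances the combinatorial cost $k\log(en/(dk))$ of selecting which blocks are active against the volumetric cost $dk$ of covering the Euclidean sphere inside the resulting $dk$-dimensional coordinate subspace, and then feeds this through Dudley's bound (Proposition~\ref{prop:rho-cover}). Granting that lemma, the remaining work is pure bookkeeping; the one point that deserves care is tracking the absolute constants and the power of $u$, to confirm that the prescribed threshold on $N$ really does drive $c_3 u^3\rho_{S_k^d,N}$ below $\kappa/2$ and that $\min\{N,E^2\}$ is indeed of order $m$ in the exponent.
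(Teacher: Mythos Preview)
Your proposal is correct and follows exactly the template the paper uses (implicitly) for all of the example corollaries in Section~3.3: the paper states Corollary~\ref{cor:examples-block-sparse} without proof immediately after Lemma~\ref{lemma:ent-estimate-block}, relying on the same reduction to Theorem~\ref{thm:main-random} via the inclusion $T_+,T_-\subset W_{2k}$ and the complexity bound $E^2\lesssim k\log(en/(dk))+dk$ that you spell out. Your caveat about tracking the power of $u$ is well placed, since the paper is somewhat informal on this point across all of the example corollaries.
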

When $\kappa$ is constant we conclude that $N \sim k(\log(en/(kd)) + d)$ measurements are needed for stability. This result is consistent with that of \cite{EM09a} which shows that the same value $N$ ensures that a random Gaussian matrix satisfies the block restricted isometry constant.

\section{Noisy Measurements}
\label{sec:noise}

Next, consider the phase retrieval problem in the presence of noise. The goal is to find an estimate $\hx$ of the true signal $x_0$ that is close to $x_0$ (or $-x_0$) in a squared error sense.

Suppose that
\begin{equation}
\label{eq:measn}
y_i=|\inner{a_i}{x_0}|^2+w_i,\quad i=1,\ldots,N
\end{equation}
for some $x_0 \in T$. Let $a$ be an isotropic, $L$-subgaussian random vector and assume that the noise $w$ is independent of $a$, symmetric, and of reasonable decay properties, which will be specified in Assumption~\ref{asp:noise} below.
\begin{Question} \label{qu:noise}
Given $(a_i,y_i)_{i=1}^N$, combined with the information that the noisy data $y_i$ is generated by a point $x_0 \in T$ via (\ref{eq:measn}), is it possible to produce an estimate $\hx \in T$ for which $\|\hx-x_0\|_2 \|\hx+x_0\|_2$ is small?
\end{Question}
Note that the error is measured by the product $\|\hx-x_0\|_2 \|\hx+x_0\|_2$, since it is impossible to distinguish between $x_0$ and $-x_0$.

The answer to this question is affirmative, as shown in Theorem~\ref{thm:noisy-main}.

\subsection{Preliminaries: $\psi_\alpha$ Random Variables}
\label{sec:noised}

Throughout our analysis we assume that the noise $w$ decays properly.
In order to quantify this decay we rely on the notion of $\psi_\alpha$ random variables, which are defined below (see \cite{LT,VW,LAMA} as general references for properties of $\psi_\alpha$ random variables).
\label{sec:prelim}
\begin{Definition} \label{def-psi-alpha}
Let $X$ be a random variable. For $1 \leq \alpha \leq 2$ let
$$
\|X\|_{\psi_\alpha} = \inf \left\{ C>0 : \E \exp(|X/C|^\alpha) \leq 2\right\},
$$
and denote by $L_{\psi_\alpha}$ the set of random variables for which $\|X\|_{\psi_\alpha} < \infty$.
\end{Definition}
The $\psi_\alpha$ norm can be characterized using information on the tail of $X$. Indeed, there exists an absolute constant $c$, for which, if $t \geq 1$, then $Pr( |X| \geq t) \leq 2 \exp(-ct^\alpha/\|X\|_{\psi_\alpha}^\alpha)$. The reverse direction is also true, that is, if $Pr( |X| \geq t) \leq 2 \exp(-t^\alpha/A^\alpha)$, then $\|X\|_{\psi_\alpha} \leq c_1A$ for an absolute constant $c_1$.

It is well known that $\| \ \|_{\psi_\alpha}$ is a norm on $L_{\psi_\alpha}$, and that
$$
\|X\|_{\psi_\alpha} \sim \sup_{p \geq 1} \frac{\|X\|_{L_p}}{p^{1/\alpha}}.
$$
In other words,
\begin{equation} \label{eq:orlicz-moments}
\|X\|_{L_p} \lesssim \|X\|_{\psi_\alpha} p^{1/\alpha}, \quad \forall p>1.
\end{equation}

In the language of the previous section, $X$ is $L$-subgaussian if and only if $\|X\|_{\psi_2} \leq cL\|X\|_{L_2}$. Since the $\psi_\alpha$ norms have a natural hierarchy, it follows that if $X$ is $L$-subgaussian then
$$
\|X\|_{L_2} \lesssim \|X\|_{\psi_1} \leq \|X\|_{\psi_2} \leq cL\|X\|_{L_2}.
$$
Therefore, if $X$ is $L$-subgaussian and mean-zero then $\|X\|_{\psi_2} \sim_L \sigma_X$, where $\sigma_X$ is the standard deviation of $X$.

A straightforward application of the tail behavior of a $\psi_\alpha$ random variable implies that if $X_1,...,X_N$ are independent copies of $X$ and $t \geq 1$, then
$$
Pr\left( \max_{i \leq N} |X_i| \geq t \log^{1/\alpha} N \|X\|_{\psi_\alpha}\right) \leq 2\exp(-c_2t^{1/\alpha});
$$
hence,
\begin{equation} \label{eq:max-orlicz}
\|\max_{1 \leq i \leq N} X_i \|_{\psi_\alpha} \leq c_3 \|X\|_{\psi_\alpha} \log^{1/\alpha} N.
\end{equation}

From the definition of the $\psi_\alpha$ norm it is evident that if $\alpha=\beta/q$ then
\begin{equation}
\label{eq:psi-beta}
\| |X|^q \|_{\psi_\alpha} = \|X\|_{\psi_\beta}^q,
\end{equation}
and in particular, $X \in L_{\psi_\beta}$ for $\beta >1$ if and only if $|X|^\beta \in L_{\psi_1}$.

Although there are versions of the following theorem (and of Definition \ref{def-psi-alpha}) for any $0 < \alpha $, for the sake of simplicity, we shall restrict ourselves to the case $\alpha=1$, which is the setting needed in the proofs below.
\begin{Theorem} \label{thm:concentration-psi-1}
There exists an absolute constant $c_1$ for which the following holds. If $X \in L_{\psi_1}$ and $X_1,...,X_N$ are independent copies of $X$, then for every $t>0$,
$$
Pr \left( \left|\frac{1}{N}\sum_{i=1}^N X_i - \E X \right| > t\|X\|_{\psi_1} \right) \leq 2\exp(-cN \min\{t^2,t\}).
$$
\end{Theorem}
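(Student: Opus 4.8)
The plan is to prove a Bernstein-type inequality by the classical Chernoff / moment-generating-function method, the only input being the moment growth \eqref{eq:orlicz-moments} of a $\psi_1$ variable. First I would reduce to a normalized, centered setting. Both sides of the asserted inequality are invariant under the scaling $X \mapsto X/\|X\|_{\psi_1}$, so I may assume $\|X\|_{\psi_1}=1$. Setting $Y=X-\E X$, the fact that $\| \ \|_{\psi_1}$ is a norm together with $\|\E X\|_{\psi_1}=|\E X|/\log 2 \lesssim \|X\|_{L_2} \lesssim \|X\|_{\psi_1}$ (the last step by \eqref{eq:orlicz-moments} with $p=2$, $\alpha=1$) shows that $\|Y\|_{\psi_1}$ is bounded by an absolute constant. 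Since the deviation threshold $t\|X\|_{\psi_1}$ is comparable to $t\|Y\|_{\psi_1}$, a further rescaling of $Y$ by its $\psi_1$ norm reduces everything to the following claim about a mean-zero variable $Z$ with $\|Z\|_{\psi_1}\le 1$: for every $s>0$, $Pr\bigl(\frac{1}{N}\sum_{i=1}^N Z_i > s\bigr) \le \exp(-cN\min\{s^2,s\})$; the two-sided statement and the factor $2$ will follow by applying the same bound to $-Z$ and taking a union bound, and the change of scale only alters the absolute constant $c$.

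The technical heart is a bound on the moment generating function near the origin: I claim there are absolute constants $c_0,C_0$ so that a mean-zero $Z$ with $\|Z\|_{\psi_1}\le 1$ satisfies $\E\exp(\lambda Z)\le \exp(C_0\lambda^2)$ for all $|\lambda|\le c_0$. To establish this I expand the exponential, discard the first-order term using $\E Z = 0$, and estimate $\E|Z|^p \le \|Z\|_{L_p}^p \le (C_1 p)^p$ for $p\ge 2$ via \eqref{eq:orlicz-moments}. Combining with $p!\ge (p/e)^p$ yields $\E\exp(\lambda Z) \le 1 + \sum_{p\ge2}(C_1 e|\lambda|)^p$, and for $|\lambda|\le c_0 := (2C_1 e)^{-1}$ the ratio $C_1 e|\lambda|\le 1/2$, so the geometric series is bounded by $2(C_1 e)^2\lambda^2 =: C_0\lambda^2$, giving $\E\exp(\lambda Z)\le 1+C_0\lambda^2 \le \exp(C_0\lambda^2)$. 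I expect this step to be the main obstacle, since one must combine the linear-in-$p$ moment growth with the factorial decay precisely enough to isolate the quadratic behaviour of the log-MGF on the range $|\lambda|\le c_0$; outside this range the MGF of a $\psi_1$ variable genuinely need not be finite, which is exactly what forces the two regimes in $\min\{s^2,s\}$.

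With the MGF bound in hand the proof concludes by the Chernoff argument. For $0\le\lambda\le c_0$, independence gives $Pr\bigl(\sum_{i=1}^N Z_i > Ns\bigr) \le e^{-\lambda Ns}\prod_{i=1}^N \E e^{\lambda Z_i} \le \exp\bigl(-N(\lambda s - C_0\lambda^2)\bigr)$. I then optimize the exponent over the admissible interval $[0,c_0]$. The unconstrained maximizer is $\lambda^\ast = s/(2C_0)$, which yields the sub-Gaussian bound $\exp(-Ns^2/(4C_0))$ whenever $\lambda^\ast \le c_0$, that is $s\le 2C_0 c_0$. For $s> 2C_0 c_0$ the quadratic $\lambda s - C_0\lambda^2$ is increasing on $[0,c_0]$, so choosing $\lambda=c_0$ gives exponent $N c_0(s-C_0 c_0)\ge N c_0 s/2$, i.e.\ the exponential bound $\exp(-N c_0 s/2)$. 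The two regimes combine into $\exp(-cN\min\{s^2,s\})$. Applying the identical estimate to $-Z$ controls the lower tail, the union bound produces the factor $2$, and undoing the normalization replaces $s$ by a constant multiple of $t$ while absorbing all constants into $c$, which completes the proof.
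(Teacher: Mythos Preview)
Your proof is correct and follows the standard Chernoff/MGF route to Bernstein's inequality for $\psi_1$ variables. Note, however, that the paper does not give its own proof of this theorem: it is stated as a known result in the preliminaries (Section~\ref{sec:noised}), with the books \cite{LT,VW,LAMA} cited as general references for properties of $\psi_\alpha$ random variables. So there is no ``paper's proof'' to compare against; your argument is exactly the kind of proof one finds in those references (e.g.\ Vershynin's treatment), and it is complete as written.
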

Combining Theorem~\ref{thm:concentration-psi-1} and \eqref{eq:psi-beta} leads to the following corollary:
\begin{Corollary}
\label{cor:ewp}
Let $p>1$ and assume that $w$ is a random variable for which $|w|^p \in L_{\psi_1}$ (or $w \in L_{\psi_p}$).
Then, with probability at least $1-2\exp(-ct)$,
$$
\left| \frac{1}{N} \sum_{i=1}^N |w_i|^p - \E |w|^p \right| \leq \|w\|_{\psi_p}^p \sqrt{\frac{t}{N}}.
$$
\end{Corollary}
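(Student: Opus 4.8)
The plan is to reduce the claim to a direct application of Theorem~\ref{thm:concentration-psi-1} to the random variable $X=|w|^p$. First I would set $X_i=|w_i|^p$ and note that these are independent copies of $X$; the hypothesis that $|w|^p\in L_{\psi_1}$ (equivalently $w\in L_{\psi_p}$) is exactly the statement that $X\in L_{\psi_1}$, so Theorem~\ref{thm:concentration-psi-1} is applicable to the average $\frac1N\sum_{i=1}^N X_i$, whose mean is $\E X=\E|w|^p$.

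The only translation needed is to rewrite the $\psi_1$-norm of $X$ in terms of the $\psi_p$-norm of $w$. This is supplied by \eqref{eq:psi-beta}: taking $q=\beta=p$ there (so that $\alpha=\beta/q=1$) gives $\|X\|_{\psi_1}=\||w|^p\|_{\psi_1}=\|w\|_{\psi_p}^p$. Hence the threshold of the form $s\|X\|_{\psi_1}$ produced by the theorem is exactly $s\|w\|_{\psi_p}^p$.

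Next I would instantiate Theorem~\ref{thm:concentration-psi-1} with deviation level $s=\sqrt{t/N}$ (writing $s$ to avoid a clash with the parameter $t$ in the corollary). With this choice the deviation bound reads $\bigl|\frac1N\sum_{i=1}^N|w_i|^p-\E|w|^p\bigr|\le \sqrt{t/N}\,\|w\|_{\psi_p}^p$, which is precisely the right-hand side of the claim, while the failure probability is $2\exp(-cN\min\{s^2,s\})$. Since $s^2=t/N$ we have $Ns^2=t$, so it remains only to check that the minimum selects the quadratic term: in the relevant range $t\le N$ one has $s\le 1$, hence $\min\{s^2,s\}=s^2$ and the exponent equals $-ct$, giving the stated probability $1-2\exp(-ct)$.

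The argument is essentially bookkeeping, so there is no genuine obstacle; the one point requiring care is the piecewise nature of the exponent $\min\{s^2,s\}$ coming from the $\psi_1$ (sub-exponential) tail in Theorem~\ref{thm:concentration-psi-1}. For $t>N$ one has $s>1$ and the bound degrades to $2\exp(-c\sqrt{Nt})$ rather than $2\exp(-ct)$, so the clean statement is to be read in the sub-Gaussian regime $t\lesssim N$, which is the only regime invoked later. Everything else follows mechanically from the two cited facts.
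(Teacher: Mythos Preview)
Your proposal is correct and follows exactly the same route as the paper: apply Theorem~\ref{thm:concentration-psi-1} to $X=|w|^p$ with the substitution $t'=\sqrt{t/N}$ for $0<t<N$, and use $\||w|^p\|_{\psi_1}=\|w\|_{\psi_p}^p$ from \eqref{eq:psi-beta}. Your explicit remark about the restriction $t\le N$ needed to land in the sub-Gaussian branch of the $\min$ is precisely the paper's ``for $0<t<N$'' clause.
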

The corollary follows immediately from Theorem \ref{thm:concentration-psi-1} by taking $t^\prime = \sqrt{t/N}$ for $0<t<N$, and since $\||w|^p\|_{\psi_1}=\|w\|_{\psi_p}^p$.

We will also be interested in decay properties of the random variable $\sup_{t \in T} |\inr{X,t}|$ for a set $T \subset \R^n$.
If $\mu$ is an isotropic, $L$-subgaussian measure on $\R^n$, one has the following (see, e.g. \cite{Men-GAFA}).
\begin{Theorem} \label{thm:monotone-psi-1}
For every $L>1$ there exist constants $c_1$, $c_2, c_3$ and $c_4$ that depend only on $L$ and for which the following holds. If $u \geq c_1$, then with probability at least $1-2\exp(-c_2 u \log N)$,
$$
\max_{1 \leq i \leq N} \sup_{t \in T} |\inr{a_i,t}|^2 \leq c_3u\left(\ell^2(T)+d^2(T) \log N\right),
$$
where $\ell(T)$ and $d(T)$ are defined by \eqref{eq:lt} and \eqref{eq:dt}.
In particular,
$$
\| \max_{1 \leq i \leq N} \sup_{t \in T} |\inr{a_i,t}|^2 \|_{\psi_1} \leq c_4 \left(\ell^2(T)+d^2(T) \log N\right).
$$
\end{Theorem}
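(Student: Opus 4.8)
The plan is to analyze the single random variable $Z=\sup_{t\in T}|\inr{a,t}|$ and then pass to the maximum over the $N$ independent copies $Z_i=\sup_{t\in T}|\inr{a_i,t}|$. Since $\mu$ is isotropic and $L$-subgaussian, the process $t\mapsto\inr{a,t}$ has subgaussian increments with respect to the Euclidean metric, $\|\inr{a,s}-\inr{a,t}\|_{\psi_2}\lesssim_L\|s-t\|_2$. Two facts about $Z$ drive everything. First, by the majorizing measures / generic chaining comparison for subgaussian processes, the expectation is governed by the Gaussian parameter, $\E Z\lesssim_L\ell(T)$, with $\ell(T)$ as in \eqref{eq:lt}. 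Second, and crucially, $Z$ concentrates around its mean at the scale of the diameter rather than of $\ell(T)$:
$$
\|Z-\E Z\|_{\psi_2}\lesssim_L d(T),
$$
with $d(T)$ as in \eqref{eq:dt}.

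The second estimate is the heart of the matter and is where I expect the real work to lie. It is the subgaussian analogue of the Borell--TIS inequality and follows from the generic chaining tail bound: writing $S=\sup_{s,t\in T}|\inr{a,s}-\inr{a,t}|$ for the oscillation, one has $Pr\bigl(S\ge C(\gamma_2(T)+v)\bigr)\le 2\exp(-cv^2/d^2(T))$ for every $v\ge 0$, so the deterministic chaining term $\gamma_2(T)\lesssim_L\ell(T)$ carries the whole ``size'' of the supremum while the genuine fluctuations occur only at scale $d(T)$. Fixing a base point $t_0\in T$ and using $\bigl|\,Z-|\inr{a,t_0}|\,\bigr|\le S$ together with $\||\inr{a,t_0}|\|_{\psi_2}\lesssim_L\|t_0\|_2\le d(T)$ then yields the displayed concentration. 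The point to stress is that the cruder route -- bounding $\|Z\|_{\psi_2}\lesssim_L\ell(T)+d(T)$ and feeding $Z^2$ straight into \eqref{eq:max-orlicz} -- would multiply the \emph{entire} bound by $\log N$ and give the weaker estimate $(\ell^2(T)+d^2(T))\log N$; separating the deterministic part $\ell^2(T)$ from the fluctuating part is exactly what saves the $\log N$ factor on the $\ell^2(T)$ term.

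Granting the two facts, the endgame is routine. From $Z\lesssim_L\ell(T)+|Z-\E Z|$ I get $Z^2\lesssim_L\ell^2(T)+|Z-\E Z|^2$, and by \eqref{eq:psi-beta} the fluctuating part satisfies $\||Z-\E Z|^2\|_{\psi_1}=\|Z-\E Z\|_{\psi_2}^2\lesssim_L d^2(T)$. Hence
$$
\max_{1\le i\le N}Z_i^2\lesssim_L\ell^2(T)+\max_{1\le i\le N}|Z_i-\E Z|^2 .
$$
For the high-probability statement I apply a union bound at the scale $\sqrt{u\log N}\,d(T)$: since each $|Z_i-\E Z|$ is $\psi_2$ with constant $\lesssim_L d(T)$,
$$
Pr\Bigl(\max_{i\le N}|Z_i-\E Z|\ge c\sqrt{u\log N}\,d(T)\Bigr)\le N\cdot 2\exp(-c'u\log N)\le 2\exp(-c_2u\log N)
$$
once $u\ge c_1$, the factor $N=\exp(\log N)$ being absorbed precisely because the deviation is measured at scale $\sqrt{\log N}$. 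Combining the last two displays gives the asserted event $\max_{i\le N}\sup_{t\in T}|\inr{a_i,t}|^2\le c_3u(\ell^2(T)+d^2(T)\log N)$ with the stated probability. Finally, the ``in particular'' $\psi_1$ bound follows either by integrating this tail through the tail characterization of the $\psi_1$ norm from Definition~\ref{def-psi-alpha}, or directly by applying \eqref{eq:max-orlicz} to the fluctuating term alone, $\|\max_{i\le N}|Z_i-\E Z|^2\|_{\psi_1}\lesssim\|Z-\E Z\|_{\psi_2}^2\log N\lesssim_L d^2(T)\log N$, and noting that the deterministic term contributes only $\ell^2(T)$.
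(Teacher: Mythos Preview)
The paper does not supply a proof of this theorem; it is stated with a citation to \cite{Men-GAFA} and used as a black box. So there is no ``paper's own proof'' to compare against.

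Your approach is correct and is the standard one. The two ingredients you isolate---$\E Z\lesssim_L\ell(T)$ from generic chaining for subgaussian processes, and the deviation bound at scale $d(T)$ rather than $\ell(T)$---are exactly right, and you have correctly identified the second as the essential point that prevents the $\log N$ factor from contaminating the $\ell^2(T)$ term. One minor remark: for the high-probability statement you only need the \emph{upper} tail $Pr(Z\geq C_L(\ell(T)+v\,d(T)))\leq 2\exp(-cv^2)$, which comes directly from Talagrand's generic chaining tail inequality for subgaussian processes (plus the base-point term $|\inr{a,t_0}|$, which is itself $\psi_2$ of order $d(T)$). You do not need the full two-sided concentration $\|Z-\E Z\|_{\psi_2}\lesssim_L d(T)$, whose lower tail requires a separate argument. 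Squaring the one-sided tail gives $Pr\bigl(Z^2\geq C_L'(\ell^2(T)+v^2 d^2(T))\bigr)\leq 2\exp(-cv^2)$, and the union bound with $v^2=u\log N$ finishes exactly as you wrote. The $\psi_1$ bound then follows either by integrating the tail or, as you say, by applying \eqref{eq:max-orlicz} to the fluctuating part alone.
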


\subsection{The Recovery Algorithm}
\label{sec:alg}
The assumptions we make throughout this section are as follows:
\begin{Assumption}
Assume that $a$ is isotropic and subgaussian, and that
the noise $w$ in (\ref{eq:measn}) is a symmetric, $\psi_2$ random variable that is independent of $a$.
\end{Assumption}
\label{asp:noise}

Recall that the goal is to find an estimate $\hx$ of $x_0$ that is close to $x_0$ or to $-x_0$. Given the measurements $(y_i)_{i=1}^N$, a reasonable approach is to seek a value of $x$ that minimizes the empirical risk function:
\begin{equation}
\label{eq:pnlx}
P_N \ell_x=\frac{1}{N}\sum_{i=1}^N \big | |\inr{a_i,x}|^2 - y_i \big|^p,
\end{equation}
 for some $p$. Here we will consider values of $p$ in the regime $1<p \leq 2$; the exact choice of $p$ will become clear later on.
Note that for every $x \in T$,
\begin{equation}
\label{eq:lx}
\ell_x=\big | |\inr{a,x}|^2 - y \big|^p=| \inr{a,x-x_0}\inr{a,x+x_0}-w|^p.
\end{equation}
Since the empirical average $P_N\ell_x$ is not a convex function in $x$, it is impossible in general to find a value of $x$ that minimizes it. Luckily, for our purposes, one does not need an exact minimizer. Instead, in order to bound the estimation error, it is sufficient to find a value of $x$ for which the empirical risk is bounded above, as incorporated in the Definition \ref{def:hax-t} below.
To this end, one may use any algorithm for phase minimization and check whether the resulting solution satisfies the bound. Particularly useful in this context are techniques that depend on the initial starting point; such methods can be started from several different points, and in that way, if a particular solution does not satisfy the bound then the algorithm may be used again, but from a different starting point. Eventually, with high probability, a point satisfying the bound will be obtained. One algorithm of this form is the GESPAR method developed in \cite{SBE12}.
\begin{Definition} \label{def:hax-t}
Let $1<p \leq 2$ be given, and choose a value of $u \geq 1$. Given the data $(a_i,y_i)_{i=1}^N$, $\hx \in T$ is called a good estimate if it satisfies that
\begin{equation} \label{eq:hax-t}
\frac{1}{N}\sum_{i=1}^N \left| |\inr{a_i,\hx}|^2 - y_i \right|^p \leq \E|w|^p + u\left(Q_{T,N,W}-\frac{\||w|^p\|_{\psi_1}}{\sqrt{N}}\right),
\end{equation}
where
\begin{equation} \label{eq:qtn}
Q_{T,N}= d(T) \frac{\ell(T)}{\sqrt{N}} + \frac{\ell^2(T)}{N}, \ \ \ Q_{T,N,W}=Q_{T,N} + \frac{\||w|^p\|_{\psi_1}}{\sqrt{N}},
\end{equation}
and $\ell(T),d(T)$ are defined by \eqref{eq:lt},\eqref{eq:dt}.
\end{Definition}

To motivate the choice of $\hx$ in Definition~\ref{def:hax-t}, observe that
$Q_{T,N,W}$ captures the ``statistical complexity" of the problem -- namely, the sum of the ``gaussian complexity" of $T$, $Q_{T,N}$, and the influence of the noise, $\||w|^p\|_{\psi_1}/\sqrt{N}$. The parameter $u$ tunes the probability estimate, for the moment is of secondary importance. The exact choice of $p$ and $u$ will be specified in Theorem \ref{thm:noisy-main}.

This approach is based on a modified empirical risk minimization -- modified in two ways. First, instead of minimizing the loss functional $P_N \ell_x$, the search is for an empirical feasible point; some $x \in T$ for which
\begin{equation} \label{eq:hax-x-emp}
\frac{1}{N}\sum_{i=1}^N \left| |\inr{a_i,x}|^2 - y_i \right|^p - \frac{1}{N}\sum_{i=1}^N |w_i|^p \leq uQ_{T,N,W}.
\end{equation}
Observe that the value on the left hand side of \eqref{eq:hax-x-emp} is the empirical excess risk $P_N {\cal L}_x$ where
\begin{equation}
\label{eq:Lx}
{\cal L}_x = \ell_x -\ell_{x_0} = | \inr{a,x-x_0}\inr{a,x+x_0}-w |^p -
 |w|^p,
\end{equation}
is the excess loss functional. The definition implies that the empirical excess risk at $\hx$ is of the same order of magnitude as the ``statistical error" and thus
$$
P_N {\cal L}_{\hx} \leq uQ_{T,N,W}.
$$
One of the key components of the proof is to show that if $P_N {\cal L}_{\hx}$ is small, then so is the conditional expectation, $\E {\cal L}_{\hx}$. The second key component is that if $\E {\cal L}_{\hx}$ is small, then so is $\|\hat{x}-x_0\|_2 \|\hat{x}+x_0\|_2$.

Unfortunately, it is impossible to estimate the empirical excess risk since one does not have access to the sampled noise $w_1,...,w_N$, and therefore, nor to $\frac{1}{N}\sum_{i=1}^N |w_i|^p$ -- which is the reason for the second modification. By Assumption~\ref{asp:noise}, $w  \in L_{\psi_2}$ and consequently $|w|^p \in L_{\psi_1}$. From Corollary~\ref{cor:ewp},  if $u \leq N$, then with probability at least $1-2\exp(-c_1u^2)$,
$$
\frac{1}{N} \sum_{i=1}^N |w_i|^p \geq \E|w|^p - u\frac{\||w|^p\|_{\psi_1}}{\sqrt{N}}.
$$
Therefore, if $\hx$ satisfies \eqref{eq:hax-t}, then it also satisfies \eqref{eq:hax-x-emp}, meaning that its empirical excess risk is bounded above by the desired quantity.
This leads to the following proposition.
\begin{Proposition}
\label{prop:pnlx}
There exists an absolute constant $c_1$ for which the following holds. Let $\hx$ be a  point that satisfies \eqref{eq:hax-t} and let $w  \in L_{\psi_2}$. If $0 \leq u \leq N$, then with probability at least $1-2\exp(-c_1u^2)$, $P_N {\cal L}_{\hx} \leq uQ_{T,N,W}$.
\end{Proposition}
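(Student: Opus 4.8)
The plan is to reduce the claim to the feasibility inequality \eqref{eq:hax-t} combined with a single one-sided concentration estimate for the empirical average of the noise. First I would make $P_N{\cal L}_{\hx}$ explicit. Since $y_i=|\inr{a_i,x_0}|^2+w_i$, the loss at the true point is $\ell_{x_0}=|w|^p$, so by \eqref{eq:Lx} the empirical excess risk is exactly
$$
P_N{\cal L}_{\hx}=\frac{1}{N}\sum_{i=1}^N \big| |\inr{a_i,\hx}|^2 - y_i \big|^p-\frac{1}{N}\sum_{i=1}^N |w_i|^p,
$$
which is the quantity appearing on the left of \eqref{eq:hax-x-emp}. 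The first sum is already bounded from above by the hypothesis \eqref{eq:hax-t}, so the whole argument reduces to bounding the second sum from below.

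For the lower bound I would invoke Corollary~\ref{cor:ewp}. The hypothesis $w\in L_{\psi_2}$ forces $|w|^p\in L_{\psi_1}$ by \eqref{eq:psi-beta}, with $\||w|^p\|_{\psi_1}=\|w\|_{\psi_p}^p<\infty$, so the corollary applies to the i.i.d.\ variables $|w_i|^p$. I would choose the deviation parameter so that the corollary's right-hand side equals the shift $u\||w|^p\|_{\psi_1}/\sqrt N$ occurring in \eqref{eq:hax-t}; concretely this means taking $t=u^2$, so that $\sqrt{t/N}=u/\sqrt N$. This yields, on an event of probability at least $1-2\exp(-c_1u^2)$, the one-sided estimate $\frac{1}{N}\sum_{i=1}^N |w_i|^p\geq \E|w|^p-u\||w|^p\|_{\psi_1}/\sqrt N$.

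Combining the two estimates on this event closes the proof: inserting the upper bound from \eqref{eq:hax-t} for the first sum and the lower bound just obtained for the second gives
$$
P_N{\cal L}_{\hx}\leq \E|w|^p+u\!\left(Q_{T,N,W}-\frac{\||w|^p\|_{\psi_1}}{\sqrt N}\right)-\left(\E|w|^p-u\frac{\||w|^p\|_{\psi_1}}{\sqrt N}\right)=uQ_{T,N,W},
$$
since the two $\E|w|^p$ terms cancel and so do the two copies of $u\||w|^p\|_{\psi_1}/\sqrt N$. I do not expect any genuine analytic obstacle here; all the substance sits in the already-established Corollary~\ref{cor:ewp}. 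The only points demanding care are purely bookkeeping: identifying $\ell_{x_0}$ with $|w|^p$, aligning the corollary's deviation parameter with the shift in \eqref{eq:hax-t} (the choice $t=u^2$), and checking that this $t$ keeps the relevant branch of the $\min\{t^2,t\}$ in Theorem~\ref{thm:concentration-psi-1} the Gaussian one, which is what the hypothesis $0\leq u\leq N$ guarantees and what produces the stated probability $1-2\exp(-c_1u^2)$.
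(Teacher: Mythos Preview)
Your proposal is correct and follows exactly the paper's own argument, which is the discussion immediately preceding the proposition: write $P_N{\cal L}_{\hx}$ as the difference of the two empirical sums, apply Corollary~\ref{cor:ewp} (with the parameter choice $t=u^2$) to bound $\frac{1}{N}\sum_i |w_i|^p$ from below, and combine with \eqref{eq:hax-t} so that the $\E|w|^p$ and $u\||w|^p\|_{\psi_1}/\sqrt{N}$ terms cancel. One minor bookkeeping remark: the Gaussian branch of $\min\{t^2,t\}$ in Theorem~\ref{thm:concentration-psi-1} actually requires $t'=u/\sqrt{N}\leq 1$, i.e.\ $u\leq\sqrt{N}$ rather than $u\leq N$, but this is the paper's own (slightly loose) parametrization and not a defect in your argument.
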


To see that there is always a point $\hat{x}$ that satisfies \eqref{eq:hax-t}, observe that for $x_0$ and $0 < u \leq N$, with probability at least $1-2\exp(-cu^2)$ (see Corollary~\ref{cor:ewp}),
\begin{align*}
\frac{1}{N}\sum_{i=1}^N \left| |\inr{a_i,x_0}|^2 - y_i\right|^p =& \frac{1}{N} \sum_{i=1}^N |w_i|^p \leq \E |w|^p +u \frac{\| |w|^p\|_{\psi_1}}{\sqrt{N}}
\\
\leq & \E |w|^p + u Q_{T,N,W}.
\end{align*}
Moreover, unless $T$ is very small and $W$ is very large, $Q_{T,N}$ is the dominant term in $Q_{T,N,W}$. For example, consider the case in which $w$ is a centered Gaussian with variance $\sigma$ and $T$ is the set of $k$-sparse vectors on the unit sphere. Then, $\||w|^p\|_{\psi_1}/{\sqrt{N}} \sim \sigma^p/\sqrt{N}$, while $Q_{T,N} \sim \sqrt{k\log(en/k)}/\sqrt{N}$ which  clearly is larger than $\||w|^p\|_{\psi_1}/{\sqrt{N}}$, as long as $k$ is large relative to $\sigma$.

We are now ready to state our main result. To this end recall the definition of $\kappa(s,t)$ given by \eqref{eq:kappa}, and let $\kappa_T=\inf_{s,t\in T} \kappa(s,t)$.
\begin{Theorem} \label{thm:noisy-main}
For every $\kappa>0$ and every $L \geq 1$ there exists constants $c_1,c_2,c_3$ and $c_4$ that depend only on $L$ and $\kappa$, for which the following holds. Let $a$ be distributed according to an isotropic, $L$-subgaussian measure, and assume that $\kappa_T \geq \kappa$. Assume further that $\|w\|_{\psi_2} < \infty$.
For every integer $N$ set
$$
\beta_N = \max\{c_1\left((\|w\|_{\psi_1} + d^2(T)) \log N + \ell^2(T)\right),e\}
$$
and
$$p=1+1/\log \beta_N.$$ Let $\hx$ be chosen to satisfy \eqref{eq:hax-t}.
Then, for $u \geq c_2$, with probability at least $1-2\exp(-c_3u^{1/3})$,
$$
\|x_0-\hx\|_2  \|x_0+\hx\|_2 \leq c_4 (uQ_{T,N,W})^{1/p} \sqrt{\log \beta_N},
$$
where $Q_{T,N,W}$ is defined by (\ref{eq:qtn}).
\end{Theorem}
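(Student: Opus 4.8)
The plan is to control the data-dependent point $\hx$ through its excess loss $\mathcal{L}_{\hx}$ from \eqref{eq:Lx}, exploiting the two implications announced before the statement: that a small empirical excess risk $P_N\mathcal{L}_{\hx}$ forces a small conditional expectation $\E\mathcal{L}_{\hx}$, and that a small $\E\mathcal{L}_{\hx}$ forces $r:=\|x_0-\hx\|_2\|x_0+\hx\|_2$ to be small. Writing $Z_x=\inr{a,x-x_0}\inr{a,x+x_0}$, so that $\mathcal{L}_x=|Z_x-w|^p-|w|^p$, the first input is Proposition~\ref{prop:pnlx}: on an event of probability at least $1-2\exp(-cu^2)$ one has $P_N\mathcal{L}_{\hx}\le uQ_{T,N,W}$. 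It therefore suffices to (i) upgrade this to $\E\mathcal{L}_{\hx}\lesssim uQ_{T,N,W}$ by a uniform deviation estimate over $x\in T$, and (ii) prove a matching curvature lower bound for $\E\mathcal{L}_x$ in terms of $r$.

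For (i) the goal is $\sup_{x\in T}|P_N\mathcal{L}_x-\E\mathcal{L}_x|\lesssim uQ_{T,N,W}$. The first step is a magnitude truncation: by Theorem~\ref{thm:monotone-psi-1} applied to $T-x_0$ and $T+x_0$, together with \eqref{eq:max-orlicz} for the $\psi_1$ variable $|w|$, on a high probability event one has $\max_i|Z_x(a_i)|\lesssim\beta_N$ uniformly in $x$ and $\max_i|w_i|\lesssim\beta_N$. Using the elementary inequality $\bigl||s|^p-|t|^p\bigr|\le p|s-t|(|s|+|t|)^{p-1}$ with $s=Z_x-w$ and $t=-w$, one gets $|\mathcal{L}_x|\le p|Z_x|(|Z_x|+2|w|)^{p-1}\lesssim|Z_x|\,\beta_N^{p-1}$ on this event; the point of the choice $p=1+1/\log\beta_N$ is precisely that $\beta_N^{p-1}=\exp((p-1)\log\beta_N)=e$, so that on the good event $x\mapsto\mathcal{L}_x$ is, up to an absolute constant, a $1$-Lipschitz function of $Z_x$ that vanishes at $Z_x=0$. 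A contraction argument (after conditioning on the $w_i$ and symmetrizing) then reduces the deviation of the $\mathcal{L}_x$-process to that of the product process indexed by $Z_x=\inr{a,x-x_0}\inr{a,x+x_0}$, which is exactly what Theorem~\ref{thm:main-emp-est} controls with $F=\{\inr{t-x_0,\cdot}:t\in T\}$ and $H=\{\inr{t+x_0,\cdot}:t\in T\}$. The cubic dependence on the confidence parameter in that theorem is responsible, after rescaling so as to obtain a deviation linear in $u$, for the exponent $u^{1/3}$ in the final probability. Adding the noise contribution $\||w|^p\|_{\psi_1}/\sqrt N$ gives $\sup_x|P_N\mathcal{L}_x-\E\mathcal{L}_x|\lesssim uQ_{T,N,W}$, and combined with Proposition~\ref{prop:pnlx} this yields $\E\mathcal{L}_{\hx}\lesssim uQ_{T,N,W}$.

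For (ii) I would establish $\E\mathcal{L}_x\gtrsim_{\kappa,p} r^p(\log\beta_N)^{-p/2}$. Since $w$ is symmetric and independent of $a$, conditioning on $a$ and symmetrizing gives $\E_w\mathcal{L}_x=\tfrac12\E_w\bigl(|Z_x-w|^p+|Z_x+w|^p-2|w|^p\bigr)\ge0$, and for $1<p\le2$ the integrand exceeds a constant multiple of $|Z_x|^p$ on the event $\{|w|\le|Z_x|/4\}$. Writing $Z_x=r\,\xi$ with $\xi=\inr{a,v}\inr{a,w'}$ for the normalized difference $v$ and sum $w'$, the assumption $\kappa_T\ge\kappa$ gives $\E|\xi|\ge\kappa$; a Paley--Zygmund argument (Lemma~\ref{Lemma:Paley-Zygmund} with $p=1$, $q=2$), whose hypothesis holds because $\E|\xi|\ge\kappa$ while $\|\xi\|_{L_2}\lesssim_L1$ by subgaussianity of $a$, produces a constant-probability event on which $|\xi|\gtrsim\kappa$, hence $|Z_x|\gtrsim r$. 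Integrating the pointwise bound over this event, and tracking the noise scale through the truncation $\{|w|\le|Z_x|/4\}$, yields the lower bound with the stated logarithmic loss.

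Finally I would invert: combining $r^p(\log\beta_N)^{-p/2}\lesssim\E\mathcal{L}_{\hx}\lesssim uQ_{T,N,W}$ gives $r\lesssim(uQ_{T,N,W})^{1/p}\sqrt{\log\beta_N}$, which is the assertion. I expect the main obstacle to be step (i): the loss $\mathcal{L}_x$ is unbounded, nonlinear, and couples the random noise with the random design, so the delicate points are the simultaneous magnitude truncation at scale $\beta_N$ (which makes $\beta_N^{p-1}$ collapse to a constant, the whole reason for the peculiar choice of $p$) and the contraction step that transfers control, uniformly over $x\in T$ and over the noise realizations, from $\mathcal{L}_x$ to the product process of Theorem~\ref{thm:main-emp-est}.
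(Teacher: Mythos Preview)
Your two-step architecture matches the paper's, and step~(i) is essentially the paper's argument: symmetrization, then the Bernoulli contraction inequality (Theorem~\ref{thm:contraction-Bernoulli}) with data-dependent Lipschitz constant of order $D_{\infty,N}^{p-1}$, then Theorem~\ref{thm:main-emp-est} for the product process $\inr{a,x-x_0}\inr{a,x+x_0}$. The one technical difference is that the paper runs everything in $L_q$ moments---splitting off $\|D_{\infty,N}^{p-1}\|_{L_{2q}}$ via Cauchy--Schwarz and bounding it through $\|D_{\infty,N}\|_{\psi_1}\le\beta_N$---rather than first restricting to a high-probability truncation event. This matters because symmetrization (Theorem~\ref{thm:symmetrization}) is a moment inequality, not a pointwise one, so your ``truncate, then symmetrize and contract'' order of operations needs extra care; but it can be made to work.

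The real gap is step~(ii). On $\{|w|\le|Z_x|/4\}$ the integrand is indeed $\gtrsim|Z_x|^p$, but with an \emph{absolute} constant that does not depend on $p-1$, so restricting there and using nonnegativity elsewhere gives only
\[
\E{\cal L}_x\gtrsim\E_a\bigl[|Z_x|^p\,Pr(|w|\le|Z_x|/4)\bigr].
\]
For small $r$ and, say, Gaussian noise of scale $\sigma$, one has $Pr(|w|\le c\kappa r)\sim r/\sigma$, so after inversion you would obtain $r\lesssim(\sigma\,uQ_{T,N,W})^{1/(p+1)}$---the wrong rate, and nothing in your argument manufactures the factor $(p-1)^{p/2}$. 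The $\sqrt{\log\beta_N}$ loss in the theorem is \emph{not} a noise-truncation artifact; it is the uniform-convexity constant of $|\cdot|^p$ as $p\downarrow1$. The paper's route is to invoke the $2$-uniform convexity inequality \eqref{eq:pointwise-1} with $c=|w|$ and $d=Z_x$, from which it derives the pointwise bound $\tfrac12(|c+d|^p+|c-d|^p)-|c|^p\ge(p-1)^{p/2}|d|^p$ for all $c,d$, hence $\E{\cal L}_x\ge(p-1)^{p/2}\E|Z_x|^p$ with no restriction on $w$ whatsoever. Then Jensen and the hypothesis $\kappa_T\ge\kappa$ give $\E|Z_x|^p\ge(\E|Z_x|)^p\ge(\kappa r)^p$ directly---no Paley--Zygmund is needed---and $(p-1)^{-1/2}=\sqrt{\log\beta_N}$ falls out upon inverting.
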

Note that $\|w\|_{\psi_2} < \infty$ implies that $\|w\|_{\psi_p} < \infty$ for any $p \leq 2$.

Since $Q_{T,N,W}$  decays as $\sqrt{N}$ while $\beta_N$ grows as $\log N$, it is always possible to choose $N$ large enough so that the error given in the theorem is made sufficiently small.
As an example, consider the case of $k$-sparse vectors on the sphere. Hence, $d(T)=1$, and recall from Section~\ref{sec:examples} that $\ell(U_k) \sim (k\log(en/k))^{1/2}$. If $w$ is $L$-subgaussian then
$\|w\|_{\psi_1} \lesssim_L \sigma$ where $\sigma$ is the noise standard deviation, and
$$
\beta_N \sim_L (\sigma+1)\log N + k\log(en/k).
$$
If $k\log(en/k) \geq (\sigma+1) \log N$ (which is the reasonable range, as one expects $N  \sim k$ up to logarithmic factors), then $\beta_N \sim k\log(en/k)$,
and by Theorem~\ref{thm:noisy-main},
\begin{equation}
\label{eq:noisy-sparse}
 \|\hx-x_0\|_{2}  \|\hx+x_0\|_2  \lesssim_{L,\kappa}  \sqrt{\log \beta_N}  \left(\frac{k\log(en/k)}{N}\right)^{1/2-c/\log \beta_N},
\end{equation}
where $c$ is a constant.

To proceed, and as will be noted in Section \ref{sec:linear}, in the case of linear measurements, with high probability,
\begin{equation*}
\|\hx-x_0\|_{2}^2  \lesssim_{L}   \left(\frac{k\log(en/k)}{N}\right)^{1/2}.
\end{equation*}
To compare the ``quadratic" estimate with the linear one, note that if $N \leq (k\log(en/k))^\gamma$ for $\gamma \geq c_1$ and some constant $c_1 \geq 1$, then recalling that for every $x$, $x^{1/\log x} \leq e$, it is evident that
$$
\left( \frac{k \log(en/k)}{N}\right)^{-c/\log \beta_N}=\left(
k \log(en/k)\right)^{c(\gamma-1)/\log \beta_N} \sim \beta_N^{c(\gamma-1)/\log \beta_N}\sim C^{\gamma-1},
$$
where $C$ is an absolute constant. Therefore, with this choice of $N$,
$$
\|\hx-x_0\|_{2}  \|\hx+x_0\|_2  \lesssim_{L,\kappa,\gamma} \sqrt{\log (k\log(en/k))}  \left( \frac{k \log(en/k)}{N}\right)^{1/2},
$$
and up to logarithmic factors scales as the estimate in the linear case.

Clearly, it suffices to take $N \gtrsim_{L,\gamma,\eps} k\log(en/k) \log k$ to ensure that $\|\hx-x_0\|_2  \|\hx+x_0\|_2 \leq \eps$, which is off only by a $\log k$ factor from the optimal estimate in the linear case.

\begin{Corollary} \label{cor-noisy-sparse}
For every $L \geq 1$ and $\kappa>0$ there exist constants $c_1$, $c_2,c_3$ that depend only on $L$ and $\kappa$ and for which the following holds. Let $T$ be the set of $k$-sparse vectors on the sphere, set $a$ to be distributed according to an isotropic, $L$-subgaussian measure and assume that $\kappa_T \geq \kappa$. If the noise $w$ is $L$-subgaussian, $N \leq (k\log(en/k))^\gamma$ for $\gamma \geq c_1 \geq 1$ and $u>c_2$, then  with probability at least $1-2\exp(-c_3u^{1/3})$,
$$
\|\hx-x_0\|_{2}  \|\hx+x_0\|_2  \lesssim_{L,\kappa,\gamma,u} \sqrt{\log (k\log(en/k))}  \left( \frac{k \log(en/k)}{N}\right)^{1/2}.
$$
In particular, if $N \gtrsim_{L,\gamma,\eps,\delta} k\log(en/k) \log k$ then $\|\hx-x_0\|_2  \|\hx+x_0\|_2 \leq \eps$ with probability at least $1-\delta$.
\end{Corollary}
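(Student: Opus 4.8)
\noindent
The plan is to specialize Theorem~\ref{thm:noisy-main} to the set $T$ of $k$-sparse vectors on the sphere and then to simplify the resulting bound under the polynomial constraint $N \leq (k\log(en/k))^\gamma$; in fact almost all of the work has already been recorded in the discussion following Theorem~\ref{thm:noisy-main}. First I would collect the two complexity parameters of $T$: since the vectors lie on $S^{n-1}$ we have $d(T)=1$, and by the computation carried out for $U_{2k}$ in Section~\ref{sec:examples}, $\ell(T)\sim\sqrt{k\log(en/k)}$. Because $w$ is $L$-subgaussian, $\|w\|_{\psi_2}<\infty$ and $\|w\|_{\psi_1}\lesssim_L\sigma$, so the hypotheses of Theorem~\ref{thm:noisy-main} are met. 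Feeding these into the definition of $\beta_N$ yields $\beta_N\sim_L(\sigma+1)\log N+k\log(en/k)$, and since $N\leq(k\log(en/k))^\gamma$ forces $\log N\lesssim_\gamma\log(k\log(en/k))$, the term $k\log(en/k)$ dominates, so $\beta_N\sim k\log(en/k)$ and $\log\beta_N\sim\log(k\log(en/k))$.

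Next I would evaluate $Q_{T,N,W}$. With $d(T)=1$ and $\ell(T)\sim\sqrt{k\log(en/k)}$, the term $Q_{T,N}$ equals $\sqrt{k\log(en/k)/N}+k\log(en/k)/N$, whose first summand dominates once $N\gtrsim k\log(en/k)$, while the noise contribution $\| |w|^p \|_{\psi_1}/\sqrt{N}\sim\sigma^p/\sqrt{N}$ is of lower order when $k$ is large relative to $\sigma$; hence $Q_{T,N,W}\sim\sqrt{k\log(en/k)/N}$. Substituting into Theorem~\ref{thm:noisy-main}, absorbing the factor $u^{1/p}\sim u^{1/2}$ into the constant, and using $\frac{1}{2p}=\frac12-\frac{1}{2(\log\beta_N+1)}$, produces precisely the bound \eqref{eq:noisy-sparse}, namely a quantity of the form $\sqrt{\log\beta_N}\,(k\log(en/k)/N)^{1/2-c/\log\beta_N}$ for some constant $c$.

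The crux of the argument is to absorb the deviation of the exponent from $1/2$ into a constant depending only on $\gamma$. Invoking $N\leq(k\log(en/k))^\gamma$ gives $k\log(en/k)/N\geq(k\log(en/k))^{1-\gamma}$, so
$$
\left(\frac{k\log(en/k)}{N}\right)^{-c/\log\beta_N}\leq(k\log(en/k))^{c(\gamma-1)/\log\beta_N}\sim\beta_N^{c(\gamma-1)/\log\beta_N}.
$$
The elementary identity $\beta_N^{1/\log\beta_N}=e$ then collapses the right-hand side to $e^{c(\gamma-1)}=C^{\gamma-1}$, an absolute constant depending only on $\gamma$. This turns \eqref{eq:noisy-sparse} into the advertised estimate $\|\hx-x_0\|_2\|\hx+x_0\|_2\lesssim_{L,\kappa,\gamma,u}\sqrt{\log(k\log(en/k))}\,(k\log(en/k)/N)^{1/2}$, while the probability bound $1-2\exp(-c_3u^{1/3})$ is inherited verbatim from Theorem~\ref{thm:noisy-main}.

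Finally, for the ``in particular'' clause I would solve the inequality, noting that the right-hand side is at most $\eps$ as soon as $N\gtrsim k\log(en/k)\log(k\log(en/k))/\eps^2$, which in the regimes of interest reduces to the claimed $N\gtrsim_{L,\gamma,\eps}k\log(en/k)\log k$; to upgrade the probability to $1-\delta$ it suffices to take $u$ with $2\exp(-c_3u^{1/3})\leq\delta$, i.e. $u\gtrsim(\log(2/\delta))^3$. I expect the only genuinely delicate point to be the exponent-collapsing step through $\beta_N^{1/\log\beta_N}=e$: one must verify that after this cancellation the hidden constant truly depends on $\gamma$ alone, and that $u^{1/p}$ and the noise term are correctly subsumed. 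Everything else is routine bookkeeping with the parameters $d(T)$, $\ell(T)$ and $\beta_N$ already computed above.
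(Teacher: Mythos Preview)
Your proposal is correct and follows essentially the same argument as the paper's discussion preceding the corollary: compute $d(T)=1$ and $\ell(T)\sim\sqrt{k\log(en/k)}$, feed these into Theorem~\ref{thm:noisy-main}, and use the identity $\beta_N^{1/\log\beta_N}=e$ together with $N\leq(k\log(en/k))^\gamma$ to collapse the exponent $1/2-c/\log\beta_N$ into a $\gamma$-dependent constant. One minor slip: since $p=1+1/\log\beta_N$ is close to $1$, not $2$, you have $u^{1/p}\sim u$ rather than $u^{1/2}$; this is harmless because the factor is absorbed into the constant depending on $u$ regardless.
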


\subsection{Proof of Theorem~\ref{thm:noisy-main}}
\label{sec:proof}
The proof of the theorem requires several preliminary facts about empirical and Bernoulli processes. We refer the reader to \cite{LT,KW} for more details on these processes.

Throughout this section, $(\Omega,\mu)$ is a probability space and $(X_i)_{i=1}^N$ are iid, distributed according to $\mu$.  Let $\eps_1,...,\eps_N$ be independent, symmetric, $\{-1,1\}$-valued random variables, that are independent of $X_1,...,X_N$.

The first result we require is the contraction inequality for Bernoulli processes.
\begin{Theorem} \label{thm:contraction-Bernoulli} \cite{LT}
Let $F:\R^+ \to \R^+$ be convex and increasing and let $\phi_i:\R \to \R$ satisfy that $\phi(0)=0$ and  $\max_{1 \leq i \leq N} \|\phi_i\|_{\rm lip} \leq A$, where $\|\phi_i\|_{\rm lip}$ is the Lipschitz constant of $\phi_i$. Then, for any bounded $T \subset \R^N$,
$$
\E F \left(\frac{1}{2A}\sup_{t \in T} \left|\sum_{i=1}^N \eps_i \phi_i(t_i) \right| \right) \leq  \E F \left(\sup_{t \in T} \left|\sum_{i=1}^N \eps_i t_i \right| \right).
$$
\end{Theorem}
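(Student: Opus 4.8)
The plan is to follow the classical route for the contraction (comparison) principle. First I would reduce to the case $A=1$: replacing each $\phi_i$ by $\phi_i/A$ leaves the hypotheses intact (the new maps are $1$-Lipschitz and still vanish at $0$) and absorbs the factor $1/A$, so it remains to prove $\E F(\frac12 \sup_t|\sum_i \eps_i\phi_i(t_i)|) \le \E F(\sup_t|\sum_i\eps_i t_i|)$ when each $\phi_i$ is a contraction with $\phi_i(0)=0$. I would then split the argument into two parts: (i) the ``one-sided'' comparison without absolute values, $\E F(\sup_t\sum_i\eps_i\phi_i(t_i)) \le \E F(\sup_t\sum_i\eps_i t_i)$, which carries the real content; and (ii) a short convexity argument that upgrades (i) to the absolute-value statement at the cost of the factor $1/2$.

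For (i) I would argue by replacing the contractions by the identity one coordinate at a time, so it suffices to treat a single coordinate, say the $k$-th. Conditioning on all signs $\eps_i$ with $i\ne k$ fixes a function $s(\cdot)$ on $T$ (the partial sum over the other coordinates), and since $\eps_k$ is symmetric the two sides reduce to comparing $\frac12(F(a)+F(b))$ with $\frac12(F(a')+F(b'))$, where $a=\sup_t(s(t)+\phi_k(t_k))$, $b=\sup_t(s(t)-\phi_k(t_k))$ and $a',b'$ are the analogues with $\phi_k$ replaced by the identity. The key point is to verify that $(a,b)$ is weakly submajorized by $(a',b')$, i.e. $\max(a,b)\le\max(a',b')$ and $a+b\le a'+b'$. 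Choosing near-maximizers $t^1,t^2$ for $a,b$, the bound on the sum follows from the contraction property $\phi_k(t_k^1)-\phi_k(t_k^2)\le |t_k^1-t_k^2|$ (whichever of $\pm(t_k^1-t_k^2)$ is nonnegative produces the matching pairing on the right), while the bound on the maximum follows from $\phi_k(t_k)\le|t_k|$, a consequence of $\phi_k(0)=0$ and $1$-Lipschitzness. Since $F$ is convex and increasing, weak submajorization yields $F(a)+F(b)\le F(a')+F(b')$ by the two-point case of the Hardy--Littlewood--P\'olya inequality. Taking the conditional expectation over $\eps_k$, then the full expectation, and iterating over the $N$ coordinates proves (i).

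For (ii) I would first note that one may assume $0\in T$: adjoining the zero vector changes neither side of the desired inequality, since it contributes the value $0$, which never exceeds a supremum of absolute values. With $0\in T$ the one-sided suprema $S^+=\sup_t\sum_i\eps_i\phi_i(t_i)$ and $S^-=\sup_t\sum_i\eps_i(-\phi_i(t_i))$ are nonnegative, so $\sup_t|\sum_i\eps_i\phi_i(t_i)|=\max(S^+,S^-)\le S^++S^-$, whence $\frac12\sup_t|\cdots|\le\frac12 S^++\frac12 S^-$. Monotonicity and convexity of $F$ give $F(\frac12\sup_t|\cdots|)\le \frac12 F(S^+)+\frac12 F(S^-)$, and applying (i) to the contractions $\phi_i$ and to $-\phi_i$ (both admissible), together with the symmetry $\eps_i\mapsto-\eps_i$ and $\sup_t\sum_i\eps_i t_i\le\sup_t|\sum_i\eps_i t_i|$, bounds each term by $\E F(\sup_t|\sum_i\eps_i t_i|)$; this is exactly the claim. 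The main obstacle is the single-coordinate contraction step in (i): everything hinges on recognizing that replacing one coordinate map by the identity amounts to a two-point weak-submajorization relation, and on checking the two comparisons $\max(a,b)\le\max(a',b')$ and $a+b\le a'+b'$ carefully, since it is only through these that the $1$-Lipschitz and vanishing-at-zero hypotheses enter. The passage to the absolute value is, by contrast, a routine convexity manipulation once $0$ is adjoined to $T$.
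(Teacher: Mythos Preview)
The paper does not supply its own proof of this statement: Theorem \ref{thm:contraction-Bernoulli} is quoted from \cite{LT} as a black box and is used only as a tool inside the proof of Lemma \ref{lemma:noisy-oracle-inequality}. So there is no in-paper argument to compare against.

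Your proposal is correct and is exactly the classical Ledoux--Talagrand proof (Theorem~4.12 and Corollary there). The reduction to $A=1$, the coordinate-by-coordinate replacement via the two-point weak submajorization $\max(a,b)\le\max(a',b')$, $a+b\le a'+b'$, and the passage to absolute values by adjoining $0$ and using convexity are precisely the steps in \cite{LT}. One small presentational point: you invoke the one-sided inequality (i) before adjoining $0\in T$, but $F$ is only defined on $\R^+$, so the one-sided suprema must be nonnegative for (i) to even be meaningful; the clean fix is to adjoin $0$ at the outset (which, as you note, leaves both sides of the two-sided statement unchanged) and then run (i) and (ii) in that order. With that reordering the argument is complete.
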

The following symmetrization argument allows one to bound an empirical process using the Bernoulli process indexed by the random set $\{ (h(X_i))_{i=1}^N : h \in H\}$.
\begin{Theorem} \label{thm:symmetrization} \cite{VW}
Let $F:\R^+ \to \R^+$ be convex and increasing and let ${\cal H}$ be a class of functions. Then
\begin{align*}
\E F \left(\sup_{h \in H} \left|\frac{1}{N}\sum_{i=1}^N h(X_i) -\E h \right| \right) & \leq \E F \left(2 \sup_{h \in H} \left|\frac{1}{N}\sum_{i=1}^N \eps_i h(X_i) \right|\right)
\\
& \leq \E F \left(4 \sup_{h \in H} \left|\frac{1}{N}\sum_{i=1}^N h(X_i) -\E h \right| \right).
\end{align*}
\end{Theorem}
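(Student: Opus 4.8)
The plan is to derive both inequalities from the classical symmetrization mechanism, whose engine is an independent ``ghost'' copy of the sample together with the symmetry of the Rademacher signs. Let $(X_i')_{i=1}^N$ be an independent copy of $(X_i)_{i=1}^N$, independent also of $(\eps_i)_{i=1}^N$. Write $P_N h=\frac1N\sum_{i=1}^N h(X_i)$ and $P_N' h=\frac1N\sum_{i=1}^N h(X_i')$, so that $\E h=\E_{X'}P_N'h$ for every $h\in H$. The two structural facts I would use repeatedly are that $F$ is convex and increasing, so that $F(\E_{X'}\,\cdot\,)\le \E_{X'}F(\cdot)$ by Jensen and $F(a+b)\le\frac12 F(2a)+\frac12 F(2b)$, and that for each $i$ the pair $(X_i,X_i')$ is exchangeable, so that $(h(X_i)-h(X_i'))_{i}$ has the same joint law as $(\eps_i(h(X_i)-h(X_i')))_{i}$.

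For the left-hand inequality I would start from the identity $P_Nh-\E h=\E_{X'}\big[\frac1N\sum_i(h(X_i)-h(X_i'))\big]$. Taking the supremum over $h\in H$, moving $\E_{X'}$ outside the supremum, and then pushing it outside $F$ via Jensen and monotonicity gives
\[
\E F\Big(\sup_{h}\big|P_Nh-\E h\big|\Big)\le \E_{X,X'} F\Big(\sup_{h}\Big|\tfrac1N\sum_i\big(h(X_i)-h(X_i')\big)\Big|\Big).
\]
By exchangeability the summands may be multiplied by $\eps_i$ without changing the right-hand side; a triangle inequality then splits the symmetrized supremum into $\sup_h|\frac1N\sum_i\eps_i h(X_i)|$ and the identically distributed ghost term $\sup_h|\frac1N\sum_i\eps_i h(X_i')|$, and the convexity bound $F(a+b)\le\frac12F(2a)+\frac12F(2b)$ merges the two equal contributions into $\E F\big(2\sup_h|\frac1N\sum_i\eps_i h(X_i)|\big)$. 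This is exactly the first inequality.

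The right-hand (desymmetrization) inequality runs the same machine in reverse, and this is the step I expect to be the main obstacle. Inserting the ghost sample through $\E h=\E_{X'}P_N'h$, one rewrites the centered Rademacher average $\frac1N\sum_i\eps_i(h(X_i)-\E h)$ as $\E_{X'}\big[\frac1N\sum_i\eps_i(h(X_i)-h(X_i'))\big]$, removes the signs since the differenced summands are symmetric, and bounds $\sup_h|\frac1N\sum_i(h(X_i)-h(X_i'))|$ by $\sup_h|P_Nh-\E h|+\sup_h|P_N'h-\E h|$ via the triangle inequality; convexity and the equal law of the two empirical terms yield a factor $2$, which combines with the factor $2$ already present in $2\sup_h|\cdot|$ to produce the $4$ on the right. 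The genuinely delicate point — and the reason I single this direction out — is that the Rademacher average in the statement is not a priori centered, so before this argument applies one must reconcile $\sup_h|\frac1N\sum_i\eps_i h(X_i)|$ with its centered counterpart $\sup_h|\frac1N\sum_i\eps_i(h(X_i)-\E h)|$. It is precisely the convexity and monotonicity of $F$, together with the fact that each exchange of an expectation (Jensen) and each triangle-inequality split costs exactly one controllable factor absorbed by $F(a+b)\le\frac12F(2a)+\frac12F(2b)$ against identically distributed terms, that keeps the constants from compounding beyond the claimed $2$ and $4$.
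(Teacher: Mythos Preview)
The paper does not prove this theorem at all; it is quoted from \cite{VW} and used as a black box in the proof of Lemma~\ref{lemma:noisy-oracle-inequality}. So there is no ``paper's proof'' to compare against, and your write-up is being judged on its own merits.

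Your argument for the first (symmetrization) inequality is the standard one and is correct: ghost sample, Jensen to pull $\E_{X'}$ outside the supremum and then outside $F$, exchangeability to insert the signs, and the convexity splitting $F(a+b)\le\tfrac12F(2a)+\tfrac12F(2b)$ against two identically distributed pieces. Nothing to add there.

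For the desymmetrization direction you have in fact proved the \emph{centered} inequality
\[
\E F\Bigl(2\sup_{h\in H}\Bigl|\tfrac1N\sum_{i=1}^N \eps_i\bigl(h(X_i)-\E h\bigr)\Bigr|\Bigr)\le \E F\Bigl(4\sup_{h\in H}\Bigl|\tfrac1N\sum_{i=1}^N h(X_i)-\E h\Bigr|\Bigr),
\]
and your derivation of it (ghost sample, remove signs by symmetry of the differences, triangle inequality against the two centered empirical processes, convexity) is correct. You then flag, rightly, that the theorem as stated has the \emph{uncentered} Rademacher average on the left, and you suggest that convexity and monotonicity will absorb the discrepancy. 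They do not: the uncentered inequality is simply false in general. Take $H=\{c\}$ a single nonzero constant function and $F(t)=t$; then the left side equals $2|c|\,\E\bigl|\tfrac1N\sum_i\eps_i\bigr|>0$ while the right side is $F(0)=0$. No amount of juggling with Jensen or $F(a+b)\le\tfrac12F(2a)+\tfrac12F(2b)$ can repair this, because the obstruction is the nonzero mean $\E h$, which survives on the Rademacher side but is annihilated on the empirical-process side.

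So the gap you identified is real, but it is a gap in the \emph{statement} as recorded in the paper, not in your method. The version that actually holds (and the one that appears in \cite{VW}, with either centering or an additive correction term of the form $\sup_h|\E h|\cdot|\tfrac1N\sum_i\eps_i|$) is the centered one you proved. In the paper's application --- the ``reverse symmetrization'' step inside the proof of Lemma~\ref{lemma:noisy-oracle-inequality} --- one can and should use the centered form; the extra $\E h$ terms are harmless there since they are absorbed into the centered process $B_{T,N,q}$ anyway. Your proposal would be complete if you replaced the final hand-wave by the observation that the stated uncentered form requires the additional hypothesis $\E h=0$ for all $h\in H$ (or the additive correction), and that this is what is effectively used downstream.
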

We will use Theorem \ref{thm:contraction-Bernoulli} and Theorem \ref{thm:symmetrization} with $F(x)=|x|^q$ for $q \geq 2$.

The final result we require is the Kahane-Khintchine inequality \cite{LT}, on the moments of Bernoulli processes.
\begin{Theorem} \label{thm:Kahane-Khintchine}
There exists an absolute constant $c$ for which the following holds. If $T \subset \R^N$ and $q \geq 2$ then
\begin{align*}
\left\|\sup_{x \in T} \left|\sum_{i=1}^N \eps_i t_i\right| \right\|_{L_q}  \leq  c \sqrt{q} \E \sup_{x \in T} \left|\sum_{i=1}^N \eps_i t_i \right|
=  c\sqrt{q}\left\|\sup_{x \in T} \left|\sum_{i=1}^N \eps_i t_i\right| \right\|_{L_1}.
\end{align*}
\end{Theorem}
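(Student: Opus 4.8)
The plan is to view the random variable in question as the norm of a Rademacher sum in the space $\ell_\infty(T)$ and to prove the moment comparison by concentration of measure on the discrete cube. Write $f(\eps)=\sup_{t\in T}\left|\sum_{i=1}^N \eps_i t_i\right|$, regarded as a function of $\eps=(\eps_i)_{i=1}^N\in\{-1,1\}^N$. The first observation is that, extended to all of $\R^N$, $f$ is a supremum of the absolute values of the linear forms $\eps\mapsto\sum_i \eps_i t_i$, hence it is convex, and it is Lipschitz with constant $\sup_{t\in T}\|t\|_2=d(T)$ with respect to the Euclidean norm, since $|f(\eps)-f(\eps')|\leq \sup_{t\in T}|\sum_i(\eps_i-\eps_i')t_i|\leq d(T)\|\eps-\eps'\|_2$. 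Because $f\geq 0$ one has the trivial identity $\E f=\|f\|_{L_1}$, which is the equality asserted in the statement; it therefore remains to bound $\|f\|_{L_q}$ by $c\sqrt q\,\E f$.

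Second, I would establish sub-Gaussian concentration of $f$ about its mean at the scale $d(T)$. Applying Talagrand's convex distance inequality (the concentration inequality for convex Lipschitz functions on the cube) to $f/d(T)$ yields, for a median $M$ of $f$, that $Pr(|f-M|\geq s)\leq 4\exp(-s^2/(cd(T)^2))$ for every $s>0$. Integrating this tail gives $\|f-M\|_{L_q}\leq c_1\sqrt q\,d(T)$ for $q\geq 2$, and in particular $|\E f-M|\leq\|f-M\|_{L_1}\leq c_1 d(T)$, so that $\|f-\E f\|_{L_q}\leq c_2\sqrt q\,d(T)$. I expect this to be the crux of the argument, and the delicate point is the \emph{scale}: a naive bounded-differences (McDiarmid) estimate controls $f$ only at the scale $(\sum_i\sup_{t}t_i^2)^{1/2}$, which can be far larger than $d(T)=(\sup_t\sum_i t_i^2)^{1/2}$ when the suprema in different coordinates are attained at different points of $T$. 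It is precisely the convexity of $f$ that lets one replace this crude scale by the Euclidean Lipschitz constant $d(T)$.

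Third, I would lower bound $\E f$ by $d(T)$ using the scalar Khintchine inequality. For each fixed $t$, $\E|\sum_i\eps_i t_i|\geq\frac{1}{\sqrt2}\|t\|_2$, and since $\E\sup_{t}(\cdot)\geq\sup_t\E(\cdot)$, taking the supremum over $t\in T$ gives $\E f\geq\frac{1}{\sqrt2}d(T)$, that is, $d(T)\leq\sqrt2\,\|f\|_{L_1}$.

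Finally, I would assemble the pieces. Combining the second and third steps, for $q\geq 2$,
\[
\|f\|_{L_q}\leq\E f+\|f-\E f\|_{L_q}\leq\|f\|_{L_1}+c_2\sqrt q\,d(T)\leq(1+\sqrt2\,c_2\sqrt q)\|f\|_{L_1}\leq c\sqrt q\,\|f\|_{L_1},
\]
where the last step uses $\sqrt q\geq\sqrt 2\geq 1$. This is exactly the claimed bound, and together with $\E f=\|f\|_{L_1}$ it gives both forms of the inequality in the statement. The only genuinely nontrivial ingredient is the convex-Lipschitz concentration of the second step; everything else is a routine tail-to-moment conversion together with the one-dimensional Khintchine inequality.
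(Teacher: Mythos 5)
Your proof is correct. Note first that the paper does not prove this statement at all: it is imported verbatim as the Kahane--Khintchine inequality with a citation to the book of Ledoux and Talagrand, so there is no internal argument to compare against. Your route --- viewing $f(\eps)=\sup_{t\in T}\bigl|\sum_{i=1}^N\eps_i t_i\bigr|$ as a convex, $d(T)$-Lipschitz function on the discrete cube, invoking Talagrand's convex-distance concentration around a median, converting the subgaussian tail into $\|f-\E f\|_{L_q}\leq c\sqrt{q}\,d(T)$, and then absorbing $d(T)$ into $\E f$ via the scalar Khintchine lower bound $\E\bigl|\sum_i\eps_i t_i\bigr|\geq \|t\|_2/\sqrt{2}$ --- is one of the standard modern proofs, and it is the one that yields the sharp $\sqrt{q}$ growth asserted in the statement. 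Kahane's classical argument via L\'evy's inequality and tail doubling only produces an exponential (rather than subgaussian) tail at the scale of the median, hence growth linear in $q$, so your choice of tool is not merely one option among equals. Two remarks you implicitly handle but are worth being conscious of: the comparison with Gaussians (replacing $\eps_i$ by $g_i$ and using Borell's inequality) would \emph{not} work here, because $\E\sup_t\bigl|\sum_i g_i t_i\bigr|$ can exceed the Rademacher expectation by a $\sqrt{\log N}$ factor (take $T=\{e_1,\dots,e_N\}$), so returning to $\E f$ would lose the claimed constant; and your observation that bounded differences gives the wrong scale $\bigl(\sum_i\sup_t t_i^2\bigr)^{1/2}$ rather than $d(T)$ correctly identifies why convexity is essential. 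The only cosmetic gaps are measurability of the supremum for uncountable $T$ (handled as usual by a countable dense subset and monotone convergence) and the unstated but standard two-sided form of Talagrand's inequality for convex Lipschitz functions; neither affects the validity of the argument.
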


The first step in the proof of Theorem~\ref{thm:noisy-main} is to obtain an oracle inequality for $\E {\cal L}_x$ that holds for any $x \in T$ (see Lemma~\ref{lemma:noisy-oracle-inequality} below). The oracle inequality is used for $x=\hx$, and noting that for a good $\hx$, $P_N {\cal L}_{\hx}$ is bounded above by $uQ_{T,N,W}$ leads to an upper bound of the form  $\E {\cal L}_{\hx} \lesssim uQ_{T,N,W}$.
The second part of the proof consists of establishing a lower bound on $\E {\cal L}_{\hx}$ which is a function of
$\|x_0-\hx\|_2^p  \|x_0+\hx\|_2^p$.

\begin{Lemma} \label{lemma:noisy-oracle-inequality}
For every $L \geq 1$ there exist constants $c_1,c_2$ and $c_3$ that depend only on $L$ for which the following holds. If $p$ is chosen as in Theorem \ref{thm:noisy-main}, then for $u \geq c_1$, with probability at least $1-2\exp(-c_2u^{1/3})$, for every $x \in T$,
$$
\E {\cal L}_x \leq P_N{\cal L}_x + c_3u \left(d(T) \frac{\ell(T)}{\sqrt{N}} + \frac{\ell^2(T)}{N}\right).
$$
\end{Lemma}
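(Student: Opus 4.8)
The plan is to control the symmetric deviation $Z=\sup_{x\in T}|P_N{\cal L}_x-\E{\cal L}_x|$, since the asserted one-sided bound is an immediate consequence of $Z\le c_3uQ_{T,N}$ with $Q_{T,N}=d(T)\ell(T)/\sqrt N+\ell^2(T)/N$. Recall from \eqref{eq:Lx} that ${\cal L}_x=|\inr{a,x-x_0}\inr{a,x+x_0}-w|^p-|w|^p$; writing $M_x=\inr{a,x-x_0}\inr{a,x+x_0}$ and $\phi(m)=|m-w|^p-|w|^p$, each realization of the excess loss is $\phi(M_x)$ with $\phi(0)=0$. First I would estimate $\|Z\|_{L_q}$ for $2\le q\lesssim\beta_N^2$ and then convert this moment growth into the stated tail. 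Applying symmetrization (Theorem~\ref{thm:symmetrization}) with $F(t)=|t|^q$ bounds $\|Z\|_{L_q}$ by twice the $L_q$ norm of $\sup_x|\frac1N\sum_i\eps_i{\cal L}_x(a_i,w_i)|$. Conditioning on $(a_i,w_i)_{i=1}^N$, the maps $\phi_i(m)=|m-w_i|^p-|w_i|^p$ vanish at $0$ and are Lipschitz with constant at most $p(\sup_x|M_x(a_i)|+|w_i|)^{p-1}$ on the relevant range, so the Bernoulli contraction principle (Theorem~\ref{thm:contraction-Bernoulli}) peels off the loss and yields, conditionally,
$$
\sup_x\Big|\frac1N\sum_i\eps_i{\cal L}_x(a_i,w_i)\Big|\le 2A\,\sup_x\Big|\frac1N\sum_i\eps_iM_x(a_i)\Big|,\quad A:=p\max_{i\le N}\big(\sup_x|M_x(a_i)|+|w_i|\big)^{p-1}.
$$
Integrating and applying Cauchy–Schwarz to separate the random Lipschitz constant from the process gives $\|Z\|_{L_q}\lesssim\|A\|_{L_{2q}}\,\big\|\sup_x|\frac1N\sum_i\eps_iM_x(a_i)|\big\|_{L_{2q}}$.

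The remaining process is exactly of the product type covered by Theorem~\ref{thm:main-emp-est}. Taking $T_1,T_2$ to be the translates $T-x_0$ and $T+x_0$ (for which $\ell(T\pm x_0)\sim\ell(T)$ and $d(T\pm x_0)\sim d(T)$, since $\|x_0\|\le d(T)\lesssim\ell(T)$), and passing between the Bernoulli and centered versions by symmetrization, the $L_q$ estimate of Theorem~\ref{thm:main-emp-est} gives
$$
\Big\|\sup_x\Big|\frac1N\sum_i\eps_iM_x(a_i)\Big|\Big\|_{L_{2q}}\lesssim q^{3/2}\Big(d(T)\frac{\ell(T)}{\sqrt N}+\frac{\ell^2(T)}{N}\Big)=q^{3/2}Q_{T,N}.
$$

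The crux, and what I expect to be the main obstacle, is controlling $\|A\|_{L_{2q}}$, and this is precisely where the choice $p=1+1/\log\beta_N$ does its work. Since $p-1>0$ the maximum commutes with the power, so $\|A\|_{L_{2q}}=p\,\big\|\max_i(\sup_x|M_x(a_i)|+|w_i|)\big\|_{L_{2q(p-1)}}^{\,p-1}$. By Theorem~\ref{thm:monotone-psi-1} applied to the classes $T\pm x_0$ one has $\|\max_i\sup_x|M_x(a_i)|\|_{\psi_1}\lesssim\ell^2(T)+d^2(T)\log N$, while \eqref{eq:max-orlicz} yields $\|\max_i|w_i|\|_{\psi_1}\lesssim\|w\|_{\psi_1}\log N$; together these are $\lesssim\beta_N$, which is exactly the quantity defining $\beta_N$. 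Converting the $\psi_1$ bound into an $L_r$ bound and raising to the power $p-1=1/\log\beta_N$, the decisive cancellation $\beta_N^{1/\log\beta_N}=e$ forces $\|A\|_{L_{2q}}$ to be of absolute-constant order for all $q\lesssim\beta_N^2$ (the mild $q$-factors produced along the way are absorbed by the same $1/\log\beta_N$ power). The difficulty is genuine: the contraction step necessarily introduces a Lipschitz constant governed by the unbounded suprema $\max_i\sup_x|M_x(a_i)|$ and $\max_i|w_i|$, and only the precise matching of $\beta_N$ to these maximal quantities, together with the exponent $p-1$, keeps $A$ bounded while preserving the moment estimates.

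Combining the three steps gives $\|Z\|_{L_q}\le cq^{3/2}Q_{T,N}$ for $2\le q\lesssim\beta_N^2$; that is, $Z$ exhibits $\psi_{2/3}$-type moment growth at scale $Q_{T,N}$. A Markov estimate with $q\sim u^{2/3}$ then produces $\Pr(Z>c_3uQ_{T,N})\le\exp(-cu^{2/3})\le\exp(-c_2u^{1/3})$ for $u\ge c_1$, which is the stated probability. Since $\E{\cal L}_x-P_N{\cal L}_x\le Z$ for every $x\in T$, the lemma follows.
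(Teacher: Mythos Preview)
Your proof is correct and follows the same route as the paper's: symmetrize, apply Bernoulli contraction to strip the $|\cdot|^p$ loss at the cost of the random Lipschitz constant $A\sim D_{\infty,N}^{p-1}$, decouple via Cauchy--Schwarz, bound the remaining product process by Theorem~\ref{thm:main-emp-est}, and control $A$ through the $\psi_1$ estimate on $D_{\infty,N}$ together with the cancellation $\beta_N^{p-1}=e$. The only difference is that the paper inserts Kahane--Khintchine before Cauchy--Schwarz, picking up an extra $\sqrt q$ and arriving at $\|Z\|_{L_q}\lesssim q^3Q_{T,N}$ (whence the exponent $u^{1/3}$), whereas your direct Cauchy--Schwarz yields the slightly sharper $q^{3/2}$ growth and hence $u^{2/3}$, which of course implies the stated bound.
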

The choice made above, of $p=1+1/\log \beta_N$ is the key point in the proof, and it is there to balance two issues. On one hand, if $p$ is larger than $1$, then the empirical process $x \to P_N {\cal L}_x - \E {\cal L}_x$ becomes much harder to control. On the other, if $p=1$, then the loss is not strictly convex, and it is impossible to lower bound $\E {\cal L}_x $ using a function of $\|x-x_0\|_2 \|x+x_0\|_2$. This choice of $p$ is sufficiently close to $1$ to enable control of the empirical process (the main point is \eqref{eq:D-control}), while it is far enough from $1$ to give enough convexity to enable the lower bound.

\proof
Fix $q \geq 2$.
By the symmetrization theorem (Theorem \ref{thm:symmetrization}) and the independence of $a$ and $W$,
$$
\E \sup_{x \in T}  |P_N {\cal L}_x -
 \E {\cal L}_x|^q \leq  \E \E_\eps \sup_{x \in T} \left|\frac{2}{N}\sum_{i=1}^N \eps_i \left(|\inr{a_i,x-x_0} \inr{a_i,x+x_0}-w_i|^p -|w_i|^p\right) \right|^q.
$$
Let
$$
D_{\infty,N} = 2\max_{1 \leq i \leq N} \left(|w_i|+\sup_{x \in T} |\inr{a_i,x-x_0} \inr{a_i,x+x_0}|\right),
$$
and observe that for every realization of $(w_i)_{i=1}^N$, the functions $y \to |y-w_i|^p-|w_i|^p$ vanish at $0$ and are Lipschitz on $[-b,b]$ with a constant $p(b+|w_i|)^{p-1}$. For $b \leq \max_{1 \leq i \leq N} \sup_{x \in T}|\inr{a_i,x-x_0} \inr{a_i,x+x_0}|$ this constant is proportional to $D_{\infty,N}^{p-1}$, since $p \leq 2$. Applying the contraction inequality (Theorem \ref{thm:contraction-Bernoulli}), conditioned on $w_1,...,w_N$ and $a_1,...,a_N$,
\begin{align*}
& \E_\eps \sup_{x \in T} \left|\sum_{i=1}^N \eps_i \left(|\inr{a_i,x-x_0} \inr{a_i,x+x_0}-w_i|^p -|w_i|^p\right) \right|^q
\\
\leq &
(cD_{\infty,N}^{p-1})^q  \E_\eps \sup_{x \in T} \left|\sum_{i=1}^N \eps_i \inr{a_i,x-x_0} \inr{a_i,x+x_0} \right|^q.
\end{align*}

By the Kahane-Khintchine inequality, the Cauchy-Schwarz inequality, and Jensen's inequality combined with reverse symmetrization (the other direction of Theorem \ref{thm:symmetrization}),
\begin{align*}
& \E_{a \times W}  \left[ (D_{\infty,N}^{p-1})^q  \E_\eps \sup_{x \in T} \left|\sum_{i=1}^N \eps_i \inr{a_i,x-x_0} \inr{a_i,x+x_0} \right|^q \right]
\\
\leq & (c_1\sqrt{q})^q  \E_{a \times W} \left[(D_{\infty,N}^{p-1})^q
\left(\E_\eps \sup_{x \in T} \left|\sum_{i=1}^N \eps_i \inr{a_i,x-x_0} \inr{a_i,x+x_0} \right| \right)^q \right]
\\
\leq & (c_1\sqrt{q})^q \|D_{\infty,N}^{p-1}\|_{L_{2q}}^q  \left(\E_{a \times W} \left(\E_\eps \sup_{x \in T} \left|\sum_{i=1}^N \eps_i \inr{a_i,x-x_0} \inr{a_i,x+x_0} \right|\right)^{2q} \right)^{1/2}
\\
\leq & (c_2\sqrt{q})^q \|D_{\infty,N}^{p-1}\|_{L_{2q}}^q  \left\|\sup_{x \in T}\left| \frac{1}{N} \sum_{i=1}^N \inr{a_i,x-x_0} \inr{a_i,x+x_0}- \E \inr{a_i,x-x_0} \inr{a_i,x+x_0} \right|\right\|_{L_{2q}}^q,
\end{align*}
where $\| \ \|_{L_{2q}}$ is taken with respect to the $N$-product measure $(a \otimes w)^N$.

Setting
$$
B_{T,N,q}=\left\|\sup_{x \in T}\left| \frac{1}{N} \sum_{i=1}^N \inr{a_i,x-x_0} \inr{a_i,x+x_0}- \E \inr{a_i,x-x_0} \inr{a_i,x+x_0} \right|\right\|_{L_{2q}},
$$
it follows that
$$
\|\sup_{x \in T}|P_N {\cal L}_x - \E {\cal L}_x| \|_{L_q} \leq c_2 \sqrt{q} \|D_{\infty,N}^{p-1}\|_{L_{2q}} B_{T,N,q},
$$
and it remains to bound $\|D_{\infty,N}^{p-1}\|_{L_{2q}}$ and $B_{T,N,q}$.

To estimate $B_{T,N,q}$, set $\bar{T}=\{x-s : s,t \in T \cup (-T)\}$. Note that $\ell(\bar{T}) \leq 2\ell(T)$ and that $d(\bar{T}) \leq 2d(T)$. Applying Theorem~\ref{thm:main-emp-est},
$$
B_{T,N,q} \leq cq^{3/2}\left(d(T) \ell(T)/ \sqrt{N} + \ell^2(T)/N\right).
$$
Turning to $\|D_{\infty,N}^{p-1}\|_{L_{2q}}$, observe that pointwise
$$
\max_{1 \leq i \leq N} \sup_{x \in T} |\inr{a_i,x-x_0} \inr{a_i,x+x_0}| \leq  \max_{1 \leq i \leq N} \sup_{x \in T} |\inr{a_i,x}|^2.
$$
Therefore, by Theorem \ref{thm:monotone-psi-1},
$$
\|\max_{1 \leq i \leq N} \sup_{x \in T} |\inr{a,x-x_0} \inr{a,x+x_0}| \|_{\psi_1} \lesssim_L \ell^2(T) + d^2(T) \log N.
$$
Hence, by \eqref{eq:max-orlicz}
\begin{align*}
\|D_{\infty,N}\|_{\psi_1} \leq & 2\left(\|\max_{1 \leq i \leq N} w_i\|_{\psi_1} + \|\max_{1 \leq i \leq N} \sup_{x \in T} |\inr{a,x-x_0} \inr{a,x+x_0}| \|_{\psi_1}\right)
\\
\leq & c_L \left(\left(\|w\|_{\psi_1}+d^2(T) \right) \log N + \ell^2(T)\right) \equiv \beta_N.
\end{align*}
Set $p=1+1/\log \beta_N$. With this choice, combined with the moment characterization of the $\psi_1$ norm \eqref{eq:orlicz-moments}, it is evident that
\begin{equation} \label{eq:D-control}
\|D_{\infty,N}^{p-1}\|_{L_{2q}} \leq c_5q^{p-1}
\end{equation}
for a suitable absolute constant $c_5$. Indeed,
$$
\left(\E D_{\infty,N}^{(p-1)2q}\right)^{1/(p-1)2q} \lesssim (p-1)q\|D_{\infty,N}\|_{\psi_1},
$$
and thus,
$$
\|D_{\infty,N}^{p-1}\|_{L_{2q}} \lesssim ((p-1)q)^{p-1} \|D_{\infty,N}\|_{\psi_1}^{p-1} \lesssim ((p-1)q)^{p-1} \beta_N^{1/\log \beta_N} \leq c_5q^{p-1}.
$$

With these two estimates, it is evident that there exists a constant $c_6$ that depends only on $L$ for which, for every $q \geq 2$,
\begin{align*}
\|\sup_{x \in T} |P_N {\cal L}_x - \E {\cal L}_x| \|_{L_q} \leq & c_6q^{1/2+3/2+(p-1)} \left(d(T) \frac{\ell(T)}{\sqrt{N}} + \frac{\ell^2(T)}{N}\right)
\\
\leq c_6q^{3} \left(d(T) \frac{\ell(T)}{\sqrt{N}} + \frac{\ell^2(T)}{N}\right).
\end{align*}
With this $L_q$ estimate at hand, it is standard to show (see, e.g., \cite{LAMA} for a similar argument), that for $u \geq 1$, with probability at least $1-2\exp(-c_7u^{1/3})$,
$$
\sup_{x \in T} |P_N {\cal L}_x - \E {\cal L}_x| \leq c_8 u\left(d(T) \frac{\ell(T)}{\sqrt{N}} + \frac{\ell^2(T)}{N}\right),
$$
where $c_7$ and $c_8$ depend only on $L$.

Finally, in this case, for every $x \in T$
$$
 \E {\cal L}_x \leq P_N {\cal L}_x + c_8u\left(d(T)\frac{\ell(T)}{\sqrt{N}} + \frac{\ell^2(T)}{N}\right),
$$
as claimed.
\endproof

With Lemma \ref{lemma:noisy-oracle-inequality} in mind, the choice of $\hx$ becomes clearer. One would like to find any point in $T$ for which $P_N {\cal L}_x$ is, at most, of the same order of magnitude as the combined complexity term of the set $T$ and the noise
$$
Q_{T,N,W} = \left(d(T) \frac{\ell(T)}{\sqrt{N}} + \frac{\ell^2(T)}{N}\right)+\frac{\| |w|^p\|_{\psi_1}}{\sqrt{N}}.
$$
In this case, $\E {\cal L}_x$ can be bounded above by $Q_{T,N,W}$.
It follows from Proposition~\ref{prop:pnlx} and Lemma~\ref{lemma:noisy-oracle-inequality}  that with probability at least $1-2\exp(-cu^{1/3})-2\exp(-cu^2)$, both
$$
P_N {\cal L}_{\hx} \leq uQ_{T,N,W}
$$
and
$$
\E {\cal L}_{\hx} \leq P_N {\cal L}_{\hx} + c_1u Q_{T,N,W}.
$$
Therefore,
\begin{equation} \label{eq:mini-condition}
 \E {\cal L}_{\hx} \leq c_2u Q_{T,N}.
\end{equation}

To complete the proof of Theorem \ref{thm:noisy-main}, one has to bound $\E {\cal L}_{\hx}$ from below. The fact that $p>1$ gives ``enough convexity" to establish the desired lower bound.

Given $x \in T$ set $h_x=\inr{a,x-x_0} \inr{a,x+x_0}$ and recall that ${\cal L}_x=|h_x(a)-w|^p-|w|^p$. Since $w$ is a symmetric random variable, it is distributed as $\eps|w|$, where $\eps$ is a symmetric $\{-1,1\}$-valued random variable, independent of $w$ and of $a$. Therefore,
\begin{align} \label{eq:lower-expectation}
\E {\cal L}_x =& \E_{a \times W} \E_\eps (\left|h_x(a)-\eps|w|\right|^p-|w|^p) \nonumber
\\
=& \E_{a \times W} \left(\frac{1}{2}\left| |w|-h_x(a) \right|^p + \frac{1}{2}\left| |w|+h_x(a) \right|^p-|w|^p\right).
\end{align}
It is well known (see, e.g., \cite{Mat,BEL} ) that if $1 < p \leq 2$, then for every $c,d \in \R$,
\begin{equation} \label{eq:pointwise-1}
\frac{1}{2}\left(|c+d|^p+|c-d|^p\right) \geq \left(c^2+(p-1)d^2\right)^{p/2}.
\end{equation}
 Observe that the function $f(t)=(t^2+(p-1)d^2)^{p/2}-t^p$ is increasing for $t \geq 0$ and that $f(0)=(p-1)^{p/2}d^p$. Hence, for every $c,d$,
$$
(c^2+(p-1)d^2)^{p/2}-c^p \geq (p-1)^{p/2}d^p.
$$
Taking $c=|w|$ and $d=|h_x(a)|$,
$$
\E {\cal L}_x \geq (p-1)^{p/2}\E |h_x(a)|^p = (p-1)^{p/2} \E |\inr{a,x-x_0}  \inr{a,x+x_0}|^p.
$$

By the definition of $\kappa(s,t)$, for every $s,t$ and $p \geq 1$,
$$
\left(\E |\inr{a,t} \inr{a,s} |^p\right)^{1/p} \geq \E |\inr{a,t}  \inr{a,s} | \geq  \kappa_T \|t\|_2  \|s\|_2 \geq   \kappa \|t\|_2  \|s\|_2.
$$
Therefore,
$$
\E {\cal L}_x \geq  \kappa^p (p-1)^{p/2} \|x-x_0\|_{2}^p  \|x+x_0\|_{2}^p.
$$
Combining this lower bound with \eqref{eq:mini-condition}, and recalling that $p=1+1/\log \beta_N$ completes the proof of the theorem.

\subsection{Examples}

Let us present some of the examples seen in Section~\ref{sec:examples}, in the noisy setting. Other examples may be obtained with similar ease.

In order to apply the results of Theorem~\ref{thm:noisy-main}, one has to determine $d(T),\ell(T),\E|w|^p,\|w\|_{\psi_1},\|w\|_{\psi_2}$ and $\||w|^p\|_{\psi_1}$.

In all the examples below we will assume that $T \subset S^{n-1}$ and so $d(T)=1$. Since $w$ is symmetric and $L$-subgaussian, then $\|w\|_{\psi_1} \leq \|w\|_{\psi_2} \lesssim L\sigma$, where $\sigma$ is the noise variance. Also, since $1 < p \leq 2$, $\| |w|^p\|_{\psi_1}=\|w\|_{\psi_p}^p \lesssim (L\sigma)^p$.

\subsubsection{Entire Space $T=\R^n$}
If $T=S^{n-1}$ then $\ell(T) \sim \sqrt{n}$,
implying that
$$
Q_{T,N} \lesssim \left(\sqrt{\frac{n}{N}}+\frac{n}{N}\right) \lesssim \sqrt{\frac{n}{N}}
$$
for the regime of $N$ we are interested in, and
$$
Q_{T,N,W} \lesssim \sqrt{\frac{n}{N}}+\frac{\sigma^p}{\sqrt{N}}.
$$
In addition,
$$
\beta_N \sim (\sigma+1)\log N+n.
$$
Suppose that $n \geq (\sigma+1)\log N$. Then, by Theorem~\ref{thm:noisy-main},
\begin{equation}
\label{eq:noisy-whole-space}
 \|\hx-x_0\|_{2}  \|\hx+x_0\|_2  \lesssim_{L,\kappa}  \sqrt{\log \beta_N}  \left(\frac{n}{N}\right)^{1/2-c/\log \beta_N},
\end{equation}
where $c$ is an absolute constant.
 If $N \leq n^\gamma$ for $\gamma \geq c_1 \geq 1$, then
$$
\left(\frac{n}{N}\right)^{-c/\log \beta_N}=n^{c(\gamma-1)/\log \beta_N} \sim \beta_N^{c(\gamma-1)/\log \beta_N}\sim C^{\gamma-1},
$$
for a suitable absolute constant $C$. Therefore, with this choice of $N$,
$$
\|\hx-x_0\|_{2}  \|\hx+x_0\|_2  \lesssim_{L,\kappa,\gamma,u} \sqrt{\log n}  \left( \frac{n}{N}\right)^{1/2},
$$
and it suffices to take $N \gtrsim_{L,\gamma,\kappa,\eps,\delta} n \log n$ to ensure that $\|\hx-x_0\|_2  \|\hx+x_0\|_2 \leq \eps$ with probability at least $1-\delta$.

\begin{Corollary} \label{cor-noisy-whole-space}
For every $L \geq 1$ and $\kappa>0$ there exist constants $c_1$, $c_2,c_3$ that depend only on $L$ and $\kappa$ and for which the following holds. If $\mu$, $a$ and $w$ are as above, $T=S^{n-1}$ and $N \leq n^\gamma$ for $\gamma \geq c_1 \geq 1$, then for $u>c_2$  with probability at least $1-2\exp(-c_3u^{1/3})$,
$$
\|\hx-x_0\|_{2}  \|\hx+x_0\|_2  \lesssim_{L,\kappa,\gamma,u} \sqrt{\log n}  \left( \frac{n}{N}\right)^{1/2}.
$$
\end{Corollary}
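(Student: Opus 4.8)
The plan is to obtain Corollary \ref{cor-noisy-whole-space} as a direct specialization of Theorem \ref{thm:noisy-main} to the index set $T = S^{n-1}$, so that the only genuine work is to evaluate the complexity parameters appearing there and then simplify the resulting bound using the polynomial constraint $N \leq n^\gamma$. First I would record the two geometric quantities attached to the sphere: since every $t \in S^{n-1}$ has unit norm, $d(T)=1$, while $\ell(T)=\E\sup_{t \in S^{n-1}}|\sum_{i=1}^n g_i t_i| = \E\|G\|_2 \sim \sqrt{n}$, where $G=(g_1,\ldots,g_n)$. The hypotheses of Theorem \ref{thm:noisy-main} are in force: $a$ is isotropic and $L$-subgaussian by assumption, $\kappa_T \geq \kappa$ is assumed, and since $w$ is $L$-subgaussian one has $\|w\|_{\psi_2} \lesssim_L \sigma < \infty$; moreover a point $\hx$ satisfying \eqref{eq:hax-t} exists, as noted after Proposition \ref{prop:pnlx}.

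Next I would substitute these values into the definitions \eqref{eq:qtn} of $Q_{T,N}$ and $Q_{T,N,W}$ and into $\beta_N$. This gives $Q_{T,N} \lesssim \sqrt{n/N}+n/N \lesssim \sqrt{n/N}$ in the regime $N \gtrsim n$ of interest (there the term $n/N$ is dominated by $\sqrt{n/N}$), and hence $Q_{T,N,W} \lesssim \sqrt{n/N}+\sigma^p/\sqrt{N}$ after using $\||w|^p\|_{\psi_1}=\|w\|_{\psi_p}^p \lesssim (L\sigma)^p$. Likewise $\beta_N \sim (\sigma+1)\log N + n$, which collapses to $\beta_N \sim n$ once $n \gtrsim (\sigma+1)\log N$; this last inequality holds for large $n$ whenever $N \leq n^\gamma$, since then $\log N \leq \gamma \log n$. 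Feeding these estimates into the conclusion of Theorem \ref{thm:noisy-main} yields
$$
\|\hx-x_0\|_2\,\|\hx+x_0\|_2 \leq c_4 (uQ_{T,N,W})^{1/p}\sqrt{\log \beta_N},
$$
with $p=1+1/\log\beta_N$ and probability at least $1-2\exp(-c_3 u^{1/3})$, which is already the probability estimate claimed in the corollary.

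The heart of the argument is then to convert the raw exponent $1/p$ into the clean rate $1/2$. Writing $1/p = 1 - 1/(\log\beta_N+1)$ and using $Q_{T,N,W} \sim \sqrt{n/N}$, the bound takes the form of \eqref{eq:noisy-whole-space}: a factor $\sqrt{\log\beta_N}\,(n/N)^{1/2}$ times the correction $(n/N)^{-c/\log\beta_N}$, with the $u$-dependence absorbed into the implied constant. Here I would invoke the constraint $N \leq n^\gamma$, which gives $n/N \geq n^{1-\gamma}$, so the correction is at most $n^{c(\gamma-1)/\log\beta_N}$. Since $\beta_N \sim n$ we have $\log\beta_N \sim \log n$, and the elementary identity $x^{1/\log x}=e$ then yields $n^{c(\gamma-1)/\log\beta_N} \sim e^{c(\gamma-1)}=C^{\gamma-1}$, a constant depending only on $\gamma$. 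Absorbing it and replacing $\sqrt{\log\beta_N}$ by $\sqrt{\log n}$ gives the stated estimate $\lesssim_{L,\kappa,\gamma,u}\sqrt{\log n}\,(n/N)^{1/2}$.

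The one step that truly requires care, as opposed to routine substitution, is this exponent bookkeeping: the whole improvement hinges on $\log\beta_N$ being comparable to $\log n$ and on $N$ growing at most polynomially in $n$. If $N$ were allowed to be exponentially large in $n$, then $(n/N)^{-c/\log\beta_N}$ would cease to be $O(1)$ and the rate would degrade; and if the noise scale $\sigma$ were large enough that $(\sigma+1)\log N$ dominated $n$, then $\beta_N$ would no longer be $\sim n$ and the logarithmic factor would change. Verifying that the polynomial regime $N \leq n^\gamma$ is precisely what keeps both effects under control is the delicate point; everything else is a direct read-off from Theorem \ref{thm:noisy-main}.
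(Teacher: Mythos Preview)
Your proposal is correct and follows essentially the same route as the paper: you specialize Theorem \ref{thm:noisy-main} to $T=S^{n-1}$, compute $d(T)=1$, $\ell(T)\sim\sqrt{n}$, $Q_{T,N,W}$ and $\beta_N$ exactly as the paper does, and then use $N\leq n^\gamma$ together with $x^{1/\log x}=e$ to absorb the correction $(n/N)^{-c/\log\beta_N}$ into a $\gamma$-dependent constant. The paper presents these computations in the paragraphs immediately preceding the corollary rather than as a formal proof, but the content is identical.
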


\subsubsection{Sparse Vectors}
We already treated the case of sparse vectors in
Corollary~\ref{cor-noisy-sparse}. The block-sparse setting can be treated in a similar manner, leading to the following corollary.
\begin{Corollary} \label{cor-noisy-sparse-block}
For every $L \geq 1$ and $\kappa>0$ there exist constants $c_1$, $c_2,c_3$ that depend only on $L$ and $\kappa$ and for which the following holds. If $a$ and $w$ are as above, $T$ is the set of $k$-block sparse vectors of length $d$ on the sphere and $N \leq (k\log(en/dk)+dk)^\gamma$ for $\gamma \geq c_1 \geq 1$, then for $u>c_2$  with probability at least $1-2\exp(-c_3u^{1/3})$,
$$
\|\hx-x_0\|_{2}  \|\hx+x_0\|_2  \lesssim_{L,\kappa,\gamma,u} \sqrt{\log (k\log(en/dk)+dk)}  \left( \frac{k \log(en/dk)+dk}{N}\right)^{1/2}.
$$
In particular, if $N \gtrsim_{L,\gamma,\eps,\delta} (k\log(en/kd) +dk)(\log k+\log d)$ then $\|\hx-x_0\|_2  \|\hx+x_0\|_2 \leq \eps$ with probability at least $1-\delta$.
\end{Corollary}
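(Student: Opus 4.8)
The plan is to specialize the master estimate of Theorem~\ref{thm:noisy-main} to the block-sparse set $T=W_k$, following verbatim the route taken for ordinary sparse vectors in Corollary~\ref{cor-noisy-sparse}. The only genuinely new input is the value of the complexity parameters $d(T)$ and $\ell(T)$ for block-sparse vectors; everything downstream is a mechanical substitution followed by a manipulation of exponents, and the probability estimate $1-2\exp(-c_3u^{1/3})$ is inherited directly from the theorem.

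First I would record the relevant complexity quantities. Since $T\subset S^{n-1}$ we have $d(T)=1$. For $\ell(T)=\ell(W_k)$, Lemma~\ref{lemma:ent-estimate-block} gives $\ell(W_k)\lesssim \sqrt{k}(\sqrt{\log(en/(dk))}+\sqrt{d})$, so that, writing $m=k\log(en/(dk))+dk$ for the effective dimension, $\ell^2(T)\lesssim m$. Since $w$ is symmetric and $L$-subgaussian, $\|w\|_{\psi_1}\leq\|w\|_{\psi_2}\lesssim_L\sigma$ and $\||w|^p\|_{\psi_1}=\|w\|_{\psi_p}^p\lesssim(L\sigma)^p$. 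With these in hand, $Q_{T,N}\lesssim\sqrt{m/N}+m/N$, which in the regime $N\gtrsim m$ reduces to $Q_{T,N}\lesssim\sqrt{m/N}$, and $Q_{T,N,W}\lesssim\sqrt{m/N}+\sigma^p/\sqrt{N}$. Likewise $\beta_N\sim(\sigma+1)\log N+m$, so in the reasonable range $m\geq(\sigma+1)\log N$ one has $\beta_N\sim m$.

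Next I would feed these into Theorem~\ref{thm:noisy-main}. With $p=1+1/\log\beta_N$ it gives, with the stated probability,
\[
\|\hx-x_0\|_2\,\|\hx+x_0\|_2 \lesssim_{L,\kappa} (uQ_{T,N,W})^{1/p}\sqrt{\log\beta_N} \lesssim \sqrt{\log\beta_N}\left(\frac{m}{N}\right)^{1/2-c/\log\beta_N},
\]
mirroring the bound obtained for the sparse case. The final step is to remove the anomalous exponent $-c/\log\beta_N$: imposing $N\leq m^\gamma$ and using the elementary bound $x^{1/\log x}\leq e$, I would write $(m/N)^{-c/\log\beta_N}=m^{c(\gamma-1)/\log\beta_N}\sim\beta_N^{c(\gamma-1)/\log\beta_N}\sim C^{\gamma-1}$, an absolute constant once $\gamma$ is fixed. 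Substituting back and replacing $\beta_N$ by $m$ yields
\[
\|\hx-x_0\|_2\,\|\hx+x_0\|_2 \lesssim_{L,\kappa,\gamma,u} \sqrt{\log m}\left(\frac{m}{N}\right)^{1/2},
\]
which is the claimed estimate with $m=k\log(en/(dk))+dk$. The ``in particular'' assertion then follows by solving $\sqrt{\log m}\,(m/N)^{1/2}\leq\eps$ for $N$, giving the sample complexity $N\gtrsim m(\log k+\log d)$ once the double-logarithmic slack in $\log m$ is absorbed.

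There is no real obstacle here beyond bookkeeping; the substantive work was already done in Theorem~\ref{thm:noisy-main} and in Lemma~\ref{lemma:ent-estimate-block}. The one point requiring slight care is verifying that the exponent manipulation is legitimate, i.e.\ that $\log\beta_N\sim\log m$ so that $\beta_N^{c(\gamma-1)/\log\beta_N}$ is indeed bounded by a constant depending only on $\gamma$; this is precisely the same estimate used in Corollary~\ref{cor-noisy-sparse} and Corollary~\ref{cor-noisy-whole-space}, and it rests on the assumptions $N\leq m^\gamma$ and $m\geq(\sigma+1)\log N$.
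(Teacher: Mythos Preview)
Your proposal is correct and follows exactly the approach the paper intends: the paper gives no explicit proof for this corollary, stating only that ``the block-sparse setting can be treated in a similar manner'' to Corollary~\ref{cor-noisy-sparse}, and your argument is precisely that similar treatment, with the complexity parameter $m=k\log(en/(dk))+dk$ from Lemma~\ref{lemma:ent-estimate-block} replacing $k\log(en/k)$.
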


\section{Connection with Results on Linear Estimation}
\label{sec:linear}

It should come as no surprise that the methods used here are very similar in nature to the analogous ``linear questions". Both stability and noisy recovery are well understood in the linear case, and in a sharp way, as we will explain below.

First, consider the question of stability. Suppose that the measurements are given by $y=Ax$ for some $N \times n$ matrix $A$. In the linear setting, a natural notion of stability in a set $T \subset \R^n$ is that for all $s,t \in T$,
\begin{equation}
\|A t-A s\|_{2} \geq C \|t-s\|_{2},
\end{equation}
where $C$ is a constant.

Note that here the $\ell_2$ norm is used in the left hand side, rather than the $\ell_1$ norm. An $\ell_2$ stability result is superior to an $\ell_1$ estimate, simply because the $\ell_2$ norm is smaller. And, It is natural to compare an $\ell_2$ stability result in the linear case to the $\ell_1$ stability result for quadratic measurements we established.

Stability in a set $T$ for a random ensemble depends on the way in which a typical operator acts on the set
$$
T_-=\{\frac{t-s}{\|t-s\|_2} : t \not = s, \ t \in T\} \subset S^{n-1}.
$$
Indeed, because $a$ is distributed according to an isotropic measure, for every $z \in S^{n-1}$, $\E |\inr{a,z}|^2 =1$. Thus, stability on $T$ is equivalent to an estimate on
\begin{equation} \label{eq:quad-proc}
\sup_{z \in T_-} \left|\frac{1}{N}\sum_{i=1}^N |\inr{a_i,z}|^2-1 \right|,
\end{equation}
which is strictly smaller than $1$.

With this in mind, the stability constant in $\R^n$ is a lower bound on the smallest singular value of a typical operator from the given random ensemble.

The study of the process \eqref{eq:quad-proc}, both for $T_-=S^{n-1}$ and for an arbitrary subset of the sphere has been extensive in recent years.
A good starting point for the interested reader would be  \cite{KM,MPT} for subgaussian ensembles, \cite{ALPT,Men-GAFA} for log-concave ensembles, and \cite{SV,MenPao1,MenPao2} for ensembles with heavy tails (though this does not begin to cover the extensive literature on the topic).

In the context of this paper, subgaussian ensembles, the best estimate on \eqref{eq:quad-proc} follows from Theorem \ref{thm:main-emp-est}, applied to the class $F=H=\{\inr{v,\cdot}, \ v \in T_-\}$. Moreover, in \cite{Men-orc} it was shown that under very mild assumptions on the set $T_-$, the estimate is sharp.
\begin{Theorem} \label{thm:linear-stabilty}
For every $L \geq 1$ there exist constants $c_1,c_2$ and $c_3$ that depend only on $L$ for which the following holds. If $T \subset \R^n$ and $a$ is distributed according to an isotropic, $L$-subgaussian measure, then
for $u \geq c_1$, with probability at least $1-2\exp(-c_2u\ell(T_-))$, for every $s,t \in T$,
$$
\| As-At \|_{2} \geq  \|s-t\|_{2}/\sqrt{2},
$$
provided that $N \geq c_3 u^3\ell^2(T_-)$.
\end{Theorem}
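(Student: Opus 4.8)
The plan is to convert the $\ell_2$ stability estimate into one-sided control of the quadratic empirical process over $T_-$, precisely as suggested by the discussion preceding the statement, and then to read off the bound from Theorem~\ref{thm:main-emp-est}. First I would use the identity $\|As-At\|_2^2=\sum_{i=1}^N \inr{a_i,s-t}^2=\|s-t\|_2^2\sum_{i=1}^N \inr{a_i,z}^2$, where $z=(s-t)/\|s-t\|_2\in T_-$. Since $\mu$ is isotropic, $\E\inr{a_i,z}^2=\|z\|_2^2=1$ for every $z\in S^{n-1}\supseteq T_-$, so (after the $1/N$ normalization built into the ensemble) the conclusion $\|As-At\|_2\geq\|s-t\|_2/\sqrt2$ for all $s,t\in T$ is exactly the event
$$
\sup_{z\in T_-}\left|\frac1N\sum_{i=1}^N \inr{a_i,z}^2-1\right|\leq \frac12 .
$$
Thus it suffices to show that this deviation is at most $1/2$ with the asserted probability.

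The second step is to estimate the supremum by invoking Theorem~\ref{thm:main-emp-est} with $F=H=\{\inr{v,\cdot}:v\in T_-\}$. For a pair $f=h$ corresponding to the same $v$ one has $fh=\inr{v,\cdot}^2$ and $\E fh=1$, so the quantity above is the restriction of the product process to its diagonal and is therefore dominated by $\sup_{f\in F,h\in H}|N^{-1}\sum_i f(a_i)h(a_i)-\E fh|$. Because $T_-\subset S^{n-1}$ we have $d(T_-)=1$ and $\ell(T_1)=\ell(T_2)=\ell(T_-)$, so the normalization hypothesis $\ell(T_1)/d(T_1)\geq \ell(T_2)/d(T_2)$ holds with equality and the theorem gives, for $u\geq c_1$ and with probability at least $1-2\exp(-c_2u^2\min\{N,\ell^2(T_-)\})$,
$$
\sup_{z\in T_-}\left|\frac1N\sum_{i=1}^N \inr{a_i,z}^2-1\right|\leq c_3u^3\left(\frac{\ell(T_-)}{\sqrt N}+\frac{\ell^2(T_-)}{N}\right).
$$

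Finally I would choose the sample size to drive the right-hand side below $1/2$: taking $N$ to be a sufficiently large ($L$-dependent) constant multiple of $u^3\ell^2(T_-)$ makes both $\ell^2(T_-)/N$ and $\ell(T_-)/\sqrt N$ small enough that the whole bound is at most $1/2$, which is exactly the hypothesis $N\geq c_3u^3\ell^2(T_-)$. Under this choice $N\geq\ell^2(T_-)$, so $\min\{N,\ell^2(T_-)\}=\ell^2(T_-)$ and the success probability is at least $1-2\exp(-c_2u^2\ell^2(T_-))$; since $u\geq c_1\geq1$ and $\ell(T_-)$ is bounded below, this is in turn at least the claimed $1-2\exp(-c_2u\ell(T_-))$. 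I do not expect a genuine obstacle here: all of the analytic difficulty is packaged inside the product-process estimate of Theorem~\ref{thm:main-emp-est}, and the only points needing care are the observation that restricting that estimate to its diagonal recovers precisely the quadratic process \eqref{eq:quad-proc}, the simplification $d(T_-)=1$, and the routine bookkeeping that turns the deviation bound and the probability into the stated sample-complexity and confidence levels.
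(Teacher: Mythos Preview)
Your proposal is correct and follows essentially the same route as the paper: reduce to the quadratic process over $T_-$, apply Theorem~\ref{thm:main-emp-est} with $F=H=\{\inr{v,\cdot}:v\in T_-\}$ and $d(T_-)=1$, and choose $N$ large enough. One small bookkeeping point: with $N\sim u^3\ell^2(T_-)$ the term $u^3\ell(T_-)/\sqrt N$ scales like $u^{3/2}$ and is not bounded; the paper's own proof in fact takes $N\gtrsim_L u^6\ell^2(T_-)$ to force the deviation below $1/2$, so the exponent on $u$ in the statement appears to be a typo that your computation would reveal once carried out.
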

\proof
Since
$$
\frac{\|A t-A s\|_{2}^2}{\|t-s\|_2^2} = \sum_{i=1}^N  |\inr{a_i,z}|^2,
$$
for $z=(t-s)/\|t-s\|_2 \in T_-$, then setting $z_{s,t}=N^{-1}\sum_{i=1}^N  |\inr{a_i,z}|^2$, it suffices to bound $\inf_{s,t \in T, \ s \not = t} z_{s,t}$ from below. Since $a$ is isotropic, $\E z_{s,t}=\E |\inr{a, z}|^2=1$. Applying Theorem \ref{thm:main-emp-est} for $N \gtrsim_L u^6 \ell^2(T_-)$ and recalling that $T_-\subset S^{n-1}$, it follows that with probability at least $1-2\exp(-cu^2 \ell(T_-))$,
$$
\sup_{z \in T_-} \left|\frac{1}{N} \sum_{i=1}^N \inr{z,a}^2 -1 \right| \lesssim_L u^3 \left(\frac{\ell(T_-)}{\sqrt{N}} + \frac{\ell^2(T_-)}{N}\right) \leq 1/2.
$$
On that event, for every $s \not = t$,
$$
\|As -At \|_2^2 = \|A(s-t)\|_2^2 \geq \|s-t\|_2^2/2,
$$
as claimed.
\endproof

Observe that the same complexity parameter appears in the linear case as in the ``quadratic" stability result -- the gaussian complexity of a ``projection" of $T-T$ onto the sphere (and, of course, the $T+T$ component does not appear). In all the examples we presented in this note, $T+T$ and $T-T$ have essentially (or exactly) the same complexity, and thus the stability estimates in the linear case coincide with quadratic bounds, as will be the case for any $T \subset \R^n$ with a similar property. Thus, in these cases, there is no harm in requiring stability over quadratic measurements rather than with respect to linear ones.

The noisy recovery problem in the linear case is much simpler, since the resulting empirical process is well behaved even if one uses the squared loss functional. The advantage in considering the squared loss functional is that one has the benefit of the required convexity ``for free". With this objective, noisy recovery becomes a linear regression problem in $\R^n$, indexed by $T$. This is a well studied topic in statistics. We refer the reader to \cite{Kolt} for relatively recent results related to this question.

The best results to-date on linear regression that take into account the complexity of the indexing set $T$ can be found in \cite{Men-orc}. One may show that these estimates are sharp under very mild assumptions on $T$, and it turns out that these assumptions are satisfied in the examples that were presented here. Since our bounds in the ``quadratic" case are of the same order of magnitude as in the easier, linear case, and since these bounds are optimal in the linear case, it is reasonable to expect that they are optimal in the quadratic scenario as well. Unfortunately, the methods required to prove this optimality are rather involved, and we will not explore this issue here. Rather we refer the reader to \cite{Men-orc}, in which the linear case is explored.

\bibliographystyle{plain}
\bibliography{references}

\end{document}